\documentclass[11pt,a4paper]{article}

\usepackage[a4paper,top=1in,bottom=1in,left=1in,right=1in]{geometry} 
\usepackage{setspace}
\usepackage[authoryear]{natbib}  
\usepackage{amsmath,amssymb,amsfonts,amsthm,bm}
\usepackage{graphicx,color,xcolor}
\usepackage{multirow,booktabs,array}
\usepackage{float,subfigure,rotating}
\usepackage{algorithm,algorithmicx,algpseudocode}
\usepackage{url,hyperref}
\usepackage{lipsum}
\usepackage{indentfirst}
\usepackage{authblk}
\usepackage{textcomp}
\usepackage{threeparttable}
\usepackage{pdflscape}
\usepackage{enumerate}
\usepackage{amsmath,amssymb,amsthm}
\usepackage{mathrsfs}
\usepackage{bm}
\usepackage{array}
\usepackage{booktabs}
\usepackage{tabularx}
\usepackage{caption}
\usepackage{geometry}
\usepackage{lastpage}
\usepackage{placeins}
\usepackage{adjustbox}
\usepackage{caption}

\usepackage{booktabs}
\usepackage{caption}
\usepackage{amsfonts}
\usepackage{amsmath}
\usepackage{tabularx}
\usepackage{bm}

\newtheorem{assumption}{Assumption}
\newtheorem*{Remark}{Remark} 
\newcommand{\trans}{^{\top}}

\def\sign{\ensuremath{\mathrm{sign}}}
\def\Err{\ensuremath{\mathrm{Err}}}
\def\tr{\ensuremath{\mathrm{tr}}}

\def\diag{\ensuremath{\mathrm{diag}}}
\def\var{\ensuremath{\mathrm{Var}}}

\def\rank{\ensuremath{\mathrm{rank}}}
\newcommand{\ind}{\mathbf{1}}

\newtheorem{theorem}{Theorem}
\newtheorem{lemma}{Lemma}

\newtheorem{corollary}{Corollary}
\newtheorem{proposition}{Proposition}

\date{\today}  

\hypersetup{
	colorlinks=true,
	linkcolor=blue,
	anchorcolor=blue,
	citecolor=blue,
	urlcolor=blue,
	CJKbookmarks=true
}

\begin{document}
	
	\title{Transfer Learning for Linear Discriminant Analysis with a Shared Classification Signal}
	
	\author{Yonghan Zhang$^1$}
	\author{Yimeng Fan$^1$}
	\author{Wenya Luo$^2$}
	\author{Jiang Hu$^1$}
	\affil{$^1$Northeast Normal University and $^2$Zhejiang University of Finance and Economics}
	\date{\today}
	\maketitle
	
	\begin{abstract}
		This paper studies transfer learning for linear discriminant analysis in high-dimensional two-class classification. We consider one target domain and several source domains, where the mean difference in each domain is decomposed into a deterministic common component and a domain-specific random deviation. The common component represents a shared classification signal across domains, while the random deviation captures domain-specific heterogeneity. Under spiked covariance models, we derive deterministic limits for the target-domain Gaussian-calibrated error of weighted transfer classifiers under both homogeneous and heterogeneous covariance settings. These limits quantify the effects of the shared signal, domain-specific variation, dimension-to-sample-size ratios, and spike structures on transfer performance. They further lead to oracle transfer weights and consistent data-driven plug-in estimators. We also characterize the intercept bias induced by unbalanced target-domain class sample sizes and provide an asymptotically optimal correction.
	\end{abstract}
	\textbf{Keywords: Discriminant analysis, Transfer learning, Shared classification signal, Spiked covariance model
	}
	
	
	\section{Introduction}\label{sec:introduction}
	
	Linear discriminant analysis (LDA) is a classical method for classification. 
	Since the work of \cite{fisher1936USE}, it has been widely used because of its simple form and clear interpretation. 
	In the two-class case, the Bayes rule depends on the mean difference between the two classes and the inverse covariance matrix. 
	This structure makes LDA useful in many statistical and scientific applications. 
	For example, discriminant analysis and related dimension reduction methods have been used in image retrieval, gene-expression-based cancer classification, clinical outcome prediction, and text data analysis; see, for example, \cite{swets1996Usinga,golub1999Molecular,pomeroy2002Predictiona,park2003Lowera}.
	
	Classical LDA works well when the dimension is small compared with the sample size. 
	This condition is often not satisfied in modern data analysis. 
	When the number of variables is comparable to, or larger than, the sample size, the sample covariance matrix becomes unstable. 
	The resulting empirical LDA rule may then have poor classification performance. 
	This phenomenon has been studied in several high-dimensional settings; see, for example, \cite{bickel2004Theorya} and \cite{shao2011Sparsea}. 
	A common way to improve stability is to regularize the covariance estimator. 
	Regularized discriminant analysis was introduced by \cite{friedman1989Regularizeda}, and its high-dimensional behavior has been further studied using random matrix theory. 
	In particular, \cite{dobriban2018Highdimensionala} and \cite{wang2018Dimensiona} derived asymptotic approximations for regularized LDA under different assumptions.
	
	These studies show that the classification error is affected by the dimension-to-sample-size ratios and the regularization parameter. 
	They also suggest that the covariance structure may play an important role in high-dimensional classification. 
	In particular, the mean difference may align with some eigendirections of the covariance matrix, and such alignment can affect the classification error. 
	Spiked covariance models provide a useful way to describe this type of structure; see \cite{johnstone2001Distributiona}.
	Under this model, \cite{li2025SpectrallyCorrected} proposed a spectrally corrected regularized LDA method. 
	Further, \cite{zhang2025Structurala} studied the structural effect of covariance eigenvectors and proposed a spectral enhancement method for high-dimensional discriminant analysis.
	
	Another challenge is that the target domain may contain only a limited number of labeled samples. 
	In many applications, data from related source domains are available. 
	This situation is common in biomedical and genomic studies, where data may be collected from different cohorts, sites, or tissues. 
	Transfer learning aims to use such source data to improve learning in the target domain; see \citep{pan2010Survey,weiss2016Survey}.
	In high-dimensional statistics, transfer learning has been studied for linear regression by \cite{li2022Transfer} and for random-coefficient ridge regression by \cite{zhang2028Transfer}. 
	For classification, \cite{zhang2025Transfera} recently proposed a transfer learning framework for regularized LDA. Their method combines target and source discriminant directions through suitable weights and derives the corresponding high-dimensional classification error.

	The closest work to ours is \cite{zhang2025Transfera}, which studies transfer learning for regularized LDA under a random-effects model. In their framework, the class-separation vector in each domain is a zero-mean random vector, and the relatedness between the target and source domains is described by the correlations among these random effects. This model is useful for quantifying how similar different discriminant directions are on average. However, since the transferable information is placed entirely in random mean differences, this formulation does not explicitly capture the stable discriminative information that is genuinely shared across domains.
	
	Such a shared-signal structure naturally arises in multi-site biomedical studies. For example, when different hospitals or imaging centers collect data for the same disease classification task, the disease-related biological or imaging pattern may be stable across sites and can serve as a common discriminative signal. At the same time, differences in acquisition protocols, measurement platforms, patient populations, and preprocessing procedures may introduce additional site-specific deviations. Thus, the discriminative structure across domains is not purely random; it may consist of a transferable common component together with domain-specific perturbations.

	This paper is built on the premise that transferability should be modeled directly at the level of the discriminative signal. We therefore decompose the mean difference in each domain as
	\[
	\bm\mu_k=\bar{\bm\mu}+\bm\delta_k,
	\]
	where the deterministic component \(\bar{\bm\mu}\) represents the shared classification signal and the random component \(\bm\delta_k\) represents domain-specific heterogeneity. This formulation differs from existing random-domain models in a crucial way: it makes the transferable signal an explicit object of inference and allows its contribution to the target-domain classification error to be separated from the variability introduced by individual source domains. As a result, the model provides a theoretical framework for understanding when a source domain is beneficial, when it is harmful, and how different sources should be weighted.
	
	The main contribution of this paper is to develop a high-dimensional transfer learning theory for LDA that explicitly separates the transferable classification signal from domain-specific heterogeneity. Under spiked covariance structures, we derive deterministic equivalents for the target-domain Gaussian-calibrated error of weighted transfer classifiers in both homogeneous and heterogeneous covariance settings. These formulas identify the roles of the shared signal, domain-specific deviations, classwise dimension-to-sample-size ratios, sample spike-eigenvector bias, and cross-domain spike alignment. They also yield closed-form oracle transfer weights and consistent plug-in estimators. In addition, we characterize the intercept bias caused by unbalanced target-domain class sample sizes and provide an asymptotically optimal correction.

	The rest of this paper is organized as follows. Section~\ref{sec2} introduces the model and the proposed classifiers. Subsection~\ref{sec2.1} studies the homogeneous covariance case, and Subsection~\ref{sec2.2} studies the heterogeneous covariance case. Section~\ref{sec:simulation} presents simulation studies. Section~\ref{sec:experiments} presents empirical studies on two real multi-site biomedical datasets. Section~\ref{sec:conclusion} concludes the paper. All technical proofs are given in the Appendix.
	\paragraph{Notation.}
	Throughout the paper, \(\|\cdot\|\) denotes the Euclidean norm for vectors and the spectral norm for matrices. 
	For a square matrix \(\bm A\), \(\lambda_{\min}(\bm A)\), \(\tr(\bm A)\), and \(\rank(\bm A)\) denote its smallest eigenvalue, trace, and rank, respectively. 
	We write \(\bm A^\top\) for the transpose of \(\bm A\), \(\bm I_p\) for the \(p\times p\) identity matrix, and \(\diag(\cdot)\) for a diagonal matrix. 
	For an event \(E\), \(\ind_E\) denotes its indicator. 
	The notation \(\bm a\propto \bm b\) means that \(\bm a\) is equal to a positive scalar multiple of \(\bm b\). 
	We use \(o_p(1)\), \(O_p(1)\), \(\xrightarrow{p}\), and \(\xrightarrow{\mathrm{a.s.}}\) in their usual asymptotic senses. Unless otherwise stated, all model parameters other than the dimension and the sample sizes are treated as fixed constants independent of the asymptotic index.
	
	\section{Transfer learning for linear discriminant analysis}\label{sec2}
	We study transfer learning for linear discriminant analysis (TLDA) in a high-dimensional setting with one target domain and several source domains. Our goal is to understand when information from the source domains can improve classification in the target domain, and how this gain depends on the mean structure and the covariance structure across domains.
	
	We consider a two-class transfer learning setting. 
	The target domain is indexed by \(K\), and the \(K-1\) source domains are indexed by \(k=1,\dots,K-1\). The number of domains \(K\) is fixed and \(K\ge2\).
	For domain \(k\), let \(\bm\mu_{+1,k}\) and \(\bm\mu_{-1,k}\) denote the two class mean vectors, and let \(\bm\Sigma_k\) denote the within-domain common covariance matrix.
	For each domain \(k=1,\dots,K\), let
	$
	n_k=n_{k,+}+n_{k,-},
	$
	where \(n_{k,+}\) and \(n_{k,-}\) denote the numbers of observations from classes \(+1\) and \(-1\), respectively. 
	
	For \(i=1,\dots,n_k\), \((\bm X_k)_i\) is a \(p\)-dimensional observation with label \((y_k)_i\in\{-1,+1\}\). We write
	$
	\bm X_k=\left((\bm X_k)_1,\dots,(\bm X_k)_{n_k}\right)^\top
	$
	as the \(n_k\times p\) sample matrix, and
	$
	\bm y_k=((y_k)_1,\dots,(y_k)_{n_k})^\top
	$
	as the corresponding label vector.
	
	For the target domain, the Bayes optimal prediction direction takes the form $\bm d_{\mathrm{Bayes}}=\bm\Sigma_K^{-1}(\bm\mu_{+1,K}-\bm\mu_{-1,K})$. The Bayes prediction for a test observation $\bm x$ from the target domain is
	\begin{align*}
		y(\bm{x})
		=
		\sign\left\{
		\bm d_{\mathrm{Bayes}}\trans\left(\bm x-\frac{\bm\mu_{+1,K}+\bm\mu_{-1,K}}{2}\right)\right\}.
	\end{align*}
	The empirical classifier with $\bm{\widehat d}$ and $\widehat b$ is written as
	\begin{align*}
		\hat y(\bm{x};\bm{\widehat d},\widehat b)
		=
		\sign\left\{
		\bm{\widehat d}\trans\bm x+\widehat b\right\}.
	\end{align*}
	
	Throughout the theoretical analysis, the target-domain test distribution is taken to have equal class priors. Under the Gaussian mixture model, the exact misclassification error of the empirical classifier is
	\begin{align}\label{Err}
		\Err(\widehat{\bm d},\widehat b)
		=
		\frac12 \Phi\left(
		\frac{\widehat{\bm d}^{\top}\bm\mu_{-1,K}+\widehat b}
		{\sqrt{\widehat{\bm d}^{\top}\bm\Sigma_K\widehat{\bm d}}}
		\right)
		+
		\frac12 \Phi\left(
		-\frac{\widehat{\bm d}^{\top}\bm\mu_{+1,K}+\widehat b}
		{\sqrt{\widehat{\bm d}^{\top}\bm\Sigma_K\widehat{\bm d}}}
		\right),
	\end{align}
	where $\Phi$ denotes the cumulative distribution function of the standard normal distribution.

	Under non-Gaussian models, \eqref{Err} need not coincide with the exact misclassification error. It may nevertheless be interpreted as the Gaussian calibration of the corresponding standardized margins. In particular, the arguments of \(\Phi\) in \eqref{Err} are the signal-to-noise ratios of the linear scores, and related \(\Phi\)-type margin criteria have been used in high-dimensional analyses of transfer and robust discriminant rules \citep{zhang2025Transfera,auguin2021Large}. Throughout the paper, we analyze \eqref{Err} as a Gaussian-calibrated error criterion.

	\begin{assumption}\label{ass1}
		For each domain $k=1,\dots,K$, the design matrix $\bm X_k\in\mathbb{R}^{n_k\times p}$ is generated as
		\[
		\bm X_k=
		\left(\bm\mu_{(\bm y_k)_1,k},\dots,\bm\mu_{(\bm y_k)_{n_k},k}\right)\trans
		+\bm Z_k\bm \Sigma_k^{1/2},
		\]
		where $\bm Z_1,\dots,\bm Z_K$ are mutually independent random matrices. For each \(k\), the entries $(Z_k)_{ij}$ are i.i.d. with $\mathbb E (Z_k)_{ij}=0, \mathbb E (Z_k)_{ij}^2=1$, and $\mathbb E (Z_k)_{ij}^4<\infty$.
	\end{assumption}
	\begin{Remark}
		Assumption~\ref{ass1} gives a simple common data model for all domains. The domains may differ through their mean vectors and covariance matrices, while the noise part remains independent across samples and domains.
	\end{Remark}
	To describe the relatedness across domains, we define the mean-difference vectors $\bm{\mu}_k=\bm{\mu}_{+1,k}-\bm{\mu}_{-1,k}$ and propose the following assumption.
	\begin{assumption}\label{ass2}
		For each domain \(k=1,\dots,K\), the mean-difference vector \(\bm\mu_k\in\mathbb R^p\) is modeled as
		\begin{align}\label{eq:mudiff_model}
			\bm{\mu}_k=\bar{\bm{\mu}}+\bm{\delta}_k,
			\qquad k=1,\dots,K,
		\end{align}		
		where $\bar{\bm{\mu}}$ is a deterministic vector with bounded Euclidean norm, and $\bm{\delta}_1,\dots,\bm{\delta}_K$ are mutually independent and independent of $\bm Z_1,\dots,\bm Z_K$. Moreover, each $\bm{\delta}_k$ has i.i.d. coordinates with
		\[
		\mathbb{E}((\delta_k)_i)=0,\qquad
		\mathbb{E}((\delta_k)_i^2)=\alpha_k^2p^{-1},\qquad
		\mathbb{E}(|(\delta_k)_i|^{4})\leq Cp^{-2}
		\]
		for some $C>0$ and \(\alpha_1,\dots,\alpha_K\) independent of \(p\).
	\end{assumption}
	\begin{Remark}
			Assumption~\ref{ass2} separates the shared part and the domain-specific part of the mean difference. The deterministic vector \(\bar{\bm\mu}\) represents the stable classification signal shared by all domains, while the random vector \(\bm\delta_k\) represents the domain-specific deviation from this common direction. Randomizing the unknown signal or coefficient vector is a common way to obtain explicit high-dimensional limits. For example, \cite{dobriban2018Highdimensionala} use a random-effects formulation to describe the population signal in high-dimensional regression and classification; \cite{zhang2025Transfera} randomizes the class-separation vector in each domain; and \cite{zhang2028Transfer} use random regression coefficients in transfer regression models. In our model, the randomization is used only for the domain-specific deviations \(\bm\delta_k\), while the common component \(\bar{\bm\mu}\) is kept deterministic. This distinction allows us to retain the tractability of random-effects modeling while explicitly preserving a stable transferable classification signal.
	\end{Remark}
	
	\subsection{Homogeneous covariance case}\label{sec2.1}
	
	We first consider the homogeneous covariance case, where all domains share the same covariance matrix. This setting allows us to isolate the effect of the shared classification signal on the target-domain Gaussian-calibrated error, as the common covariance structure simplifies the expressions for the signal and variance components.
	\begin{assumption}\label{ass3}
		The covariance matrix is common across all domains and follows a spiked covariance model:
		\begin{align}\label{eq:homoSigma}
			\bm{\Sigma}_1=\cdots=\bm{\Sigma}_K=\bm{\Sigma}
			=
			\sigma^2
			\left(
			\bm{I}_p+\sum_{j\in\mathcal I}\lambda_j\bm{v}_j\bm{v}_j\trans
			\right),\qquad
			\mathcal I:=\mathcal I_+\cup\mathcal I_-,
		\end{align}
		where $1/c\le \sigma^2\le c$ for some constant $c>0$, $\{\bm{v}_j:j\in\mathcal I\}$ is an orthonormal family in $\mathbb{R}^p$, and $\mathcal I_+=\{1,\dots,r_+\},\mathcal I_-=\{-r_-,\dots,-1\}$ with fixed $r=r_++r_-$.
	\end{assumption}
	\begin{Remark}
		This is a standard spiked covariance model. For simplicity, in the theoretical analysis, we assume that $\sigma^2$, $r_+$, $r_-$, and $\lambda_j$, $j\in\mathcal I$, are known. In experiments, these population spectral parameters can be estimated consistently using existing methods for spiked covariance models; see, for example, \cite{bai2012Estimation,jiang2021Generalized}.
	\end{Remark}
	
	\begin{assumption}\label{ass4}
		For each $k=1,\ldots,K$ and $s\in\{+,-\}$, assume that $p,n_{k,s}\to\infty$ and $\widehat\gamma_{k,s}:=p/n_{k,s}\to\gamma_{k,s}\in(0,\infty)$.
	\end{assumption}
	Let $n=\sum_{k=1}^K n_k$. Under Assumption~\ref{ass4}, we have $\widehat\gamma_k:=p/n_k\to\gamma_k$ and $\widehat\gamma:=p/n\to\gamma$, where $1/\gamma_k=1/\gamma_{k,+}+1/\gamma_{k,-}$ and $1/\gamma=\sum_{k=1}^K1/\gamma_k$. Moreover, define $\widehat\tau_k:=\widehat\gamma_{k,+}+\widehat\gamma_{k,-}$ and $\widehat\Delta_k:=\widehat\gamma_{k,+}-\widehat\gamma_{k,-}$. Then $\widehat\tau_k\to\tau_k$ and $\widehat\Delta_k\to\Delta_k$, where $\tau_k:=\gamma_{k,+}+\gamma_{k,-}$ and $\Delta_k:=\gamma_{k,+}-\gamma_{k,-}$.

	\begin{assumption}\label{ass5}
		For each $j\in\mathcal I$, $\lambda_j$ is independent of $p$, and
		\[
		\lambda_1>\cdots>\lambda_{r_+}>\sqrt{\gamma}>-\sqrt{\gamma}>
		\lambda_{-r_-}>\cdots>\lambda_{-1}>-1,
		\]
		Moreover, if $\gamma\geq1$, $\mathcal I_-=\varnothing$.
	\end{assumption}
	Under Assumptions~\ref{ass3}--\ref{ass5}, all domains share the same spiked covariance structure, so it is natural to pool the covariance information across domains. We therefore use the pooled sample covariance matrix to estimate the common eigenspace, and then combine the domainwise sample mean differences through a weighted transfer direction.
	
	The sample mean vectors and the pooled sample covariance matrix are written as
	\begin{align*}
		\widehat{\bm{\mu}}_{\pm1,k}
		=
		\frac{1}{n_{k,\pm1}}
		\sum_{i:(y_k)_i=\pm1}(\bm X_k)_i,
		\quad
		\bm S_n
		=
		\frac{1}{n-2K}
		\sum_{k=1}^K\sum_{i=1}^{n_k}
		\Bigl[
		(\bm{X}_k)_i-\widehat{\bm{\mu}}_{(y_k)_i,k}
		\Bigr]
		\Bigl[
		(\bm{X}_k)_i-\widehat{\bm{\mu}}_{(y_k)_i,k}
		\Bigr]\trans.
	\end{align*}
	Write the spectral decomposition
	\[
	\bm S_n=\sum_{j=1}^p \ell_j\widehat{\bm v}_j\widehat{\bm v}_j^\top,
	\]
	with the eigenvalues arranged in decreasing order. For later use, we relabel the isolated sample spike eigenvectors by the corresponding population spike indices. That is, for \(j\in\mathcal I_+\), \(\widehat{\bm v}_j\) denotes the sample eigenvector associated with the outlier above the upper edge, while for \(j\in\mathcal I_-\), \(\widehat{\bm v}_j\) denotes the sample eigenvector associated with the outlier below the lower edge. Using the known spike structure, we define the spectrally-corrected covariance estimator
	\begin{align}\label{eq:Sigma_tilde_homo}
		\widehat{\bm{\Sigma}}
		=
		\sigma^2
		\left(
		\bm{I}_p+\sum_{j\in\mathcal I}\lambda_j\widehat{\bm{v}}_j\widehat{\bm{v}}_j\trans
		\right),
	\end{align}
	and the local discriminant direction in domain $k$ by $\widehat{\bm{d}}_k=\widehat{\bm{\Sigma}}^{-1}\widehat{\bm{\mu}}_k$, where $\widehat{\bm{\mu}}_k=\widehat{\bm{\mu}}_{+1,k}-\widehat{\bm{\mu}}_{-1,k}$.
	
	The transfer discriminant direction is
	\begin{align}\label{eq:dhat_w_homo}
		\widehat{\bm{d}}(\bm{w})=\sum_{k=1}^K w_k\widehat{\bm{d}}_k,
	\end{align}
	where $\bm{w}=(w_1,\dots,w_K)\trans\in\mathbb{R}^K$. Then, the resulting target-domain classifier is
	\begin{align}\label{eq:classifier_homo}
		\widehat{y}(\bm{x})
		=
		\sign\left\{
		\widehat{\bm{d}}(\bm{w})\trans\bm{x}+\widehat{b}_{K}(\bm{w})
		\right\},
	\end{align}
	where
	\[
	\widehat b_K(\bm w)=-\widehat{\bm d}(\bm w)^\top
	\frac{\widehat{\bm\mu}_{+1,K}+\widehat{\bm\mu}_{-1,K}}{2}.
	\]
	We refer to \eqref{eq:classifier_homo} as the Homogeneous Transfer Learning Discriminant Analysis (TLDA-O) classifier.
	\begin{theorem}\label{thm:centered_overlap_prob}
		Suppose that Assumptions \ref{ass1}--\ref{ass5} hold. Then, for every deterministic unit vector $\bm\xi\in\mathbb R^p$,
		\[
		\bm\xi^\top \widehat{\bm v}_j\widehat{\bm v}_j^\top \bm\xi
		\;\xrightarrow{p}\;
		\frac{\lambda_j^2-\gamma}{\lambda_j(\lambda_j+\gamma)}
		\,\bm\xi^\top \bm v_j\bm v_j^\top \bm\xi,
		\qquad j\in\mathcal I.
		\]
	\end{theorem}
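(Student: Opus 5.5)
This is the classical eigenvector-overlap result for spiked sample covariance matrices (Paul, 2007; Benaych-Georges and Nadakuditi, 2011; Bloemendal, Knowles, Yau and Yin, 2016), applied to the pooled estimator $\bm S_n$. The plan has four steps: reduce $\bm S_n$ to a centered sample covariance, rescale to a standard spiked model, import the overlap theorem, and project onto $\bm\xi$.

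\textbf{Step 1: remove the means.} Within each domain and class, the residuals $(\bm X_k)_i-\widehat{\bm\mu}_{(y_k)_i,k}$ equal $\bm\Sigma^{1/2}$ times the centered noise rows. The random mean differences $\bm\delta_k$ therefore play no role, and $\bm S_n$ does not depend on Assumption~\ref{ass2} at all. Apply a Helmert-type orthogonal transformation within each of the $2K$ class blocks. Under Assumption~\ref{ass1} this is exact only for Gaussian entries; in general the same step is a rank-$2K$ perturbation. We get
\[
(n-2K)\bm S_n=\bm\Sigma^{1/2}\bm W^\top\bm W\bm\Sigma^{1/2},
\]
where $\bm W$ is $(n-2K)\times p$ with independent rows. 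A cleaner alternative is to write $(n-2K)\bm S_n=\bm\Sigma^{1/2}\bm Z^\top(\bm I-\bm P)\bm Z\bm\Sigma^{1/2}$, with $\bm Z$ the stacked noise and $\bm P$ the rank-$2K$ projection onto the class indicators. Since $K$ is fixed, this is a finite-rank perturbation of $\bm\Sigma^{1/2}\bm Z^\top\bm Z\bm\Sigma^{1/2}/n$, and finite-rank perturbations do not affect the outlier eigenvectors once we use the resolvent characterization below. Also $p/(n-2K)\to\gamma$.

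\textbf{Step 2: rescale.} Write $\bm S_n/\sigma^2=(\bm I+\bm V\bm\Lambda\bm V^\top)^{1/2}\,\bm Z^\top\bm Z/n\,(\bm I+\bm V\bm\Lambda\bm V^\top)^{1/2}$ plus a negligible finite-rank correction. This is the standard multiplicative spiked model with spikes $1+\lambda_j$ and aspect ratio $\gamma$, and $\sigma^2$ drops out of the eigenvectors. Assumption~\ref{ass5} is precisely the supercritical (BBP) condition $|\lambda_j|>\sqrt\gamma$, together with distinct spikes. When $\gamma\ge1$ the lower outliers are excluded, which avoids the zero eigenvalues. So each $\widehat{\bm v}_j$ is associated with an isolated outlier eigenvalue.

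\textbf{Step 3: import the overlap theorem.} For supercritical spike $j$, the generalized result (Bloemendal et al.\ 2016, Theorem~2.16, or Benaych-Georges--Nadakuditi with isotropic local laws) gives, for every deterministic unit vector $\bm\xi$,
\[
\langle\bm\xi,\widehat{\bm v}_j\rangle^2=\frac{1-\gamma/\lambda_j^2}{1+\gamma/\lambda_j}\langle\bm\xi,\bm v_j\rangle^2+o_p(1).
\]
Here
\[
\frac{1-\gamma/\lambda_j^2}{1+\gamma/\lambda_j}=\frac{\lambda_j^2-\gamma}{\lambda_j(\lambda_j+\gamma)},
\]
which is the claimed constant. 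The same formula holds for negative spikes with $\lambda_j\in(-1,-\sqrt\gamma)$, and the constant is positive there because numerator and denominator are both negative. The isotropic form handles general $\bm\xi$: the contributions of the other spikes and of the bulk vanish, since $\widehat{\bm v}_j$ is asymptotically orthogonal to $\bm v_i$ for $i\ne j$ and delocalized in the complement.

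If I prove this directly instead of citing it, I would use the resolvent. Write $\widehat{\bm v}_j\widehat{\bm v}_j^\top=-\frac{1}{2\pi i}\oint_{\Gamma_j}(\bm S-z)^{-1}dz$ over a contour enclosing only $\ell_j$. Then apply the Woodbury identity to reduce to the $r\times r$ matrices $\bm V^\top\bm G(z)\bm V$ and $\bm\xi^\top\bm G(z)\bm V$, where $\bm G$ is the resolvent of the white matrix. By the isotropic Marchenko--Pastur law these converge to $m(z)\bm I$ and $m(z)\bm\xi^\top\bm V$, and a residue computation yields the constant.

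\textbf{Main obstacle.} The hardest part is justifying the isotropic local law under only a finite fourth moment. Most sharp results assume higher moments. The route I would take is to use only convergence in probability for fixed $z$ away from the spectrum. This follows from variance bounds on quadratic forms $\bm\xi^\top\bm G(z)\bm\xi$ via martingale decomposition, which need only fourth moments, combined with equicontinuity on the contour. Two further technical points are separation of the outliers from the bulk with probability tending to one (Bai--Yao) and control of the rank-$2K$ centering perturbation. Both are manageable.
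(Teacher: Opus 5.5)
Your fallback argument (contour representation of $\widehat{\bm v}_j\widehat{\bm v}_j^\top$, Woodbury reduction to $\bm V^\top\bm R(z)\bm V$ and $\bm\xi^\top\bm R(z)\bm V$, fourth-moment isotropic Marchenko--Pastur convergence, residue at $\zeta=\lambda_j$) is exactly how the paper proves Lemma~\ref{thm:uncentered_overlap_target_short}. The paper does this only for the \emph{uncentered} matrix $\widetilde{\bm S}_n^{\,e}$, with Bai--Miao--Pan supplying the isotropic law. Your primary citation route does not reach the statement as written. Those sharp results assume more than four moments, as you note, and they treat $\bm Z^\top\bm Z$ without block centering. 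These two features are precisely what the theorem adds.

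\textbf{The genuine gap is the centering step.} With $\bm P=\bm I-\bm G_K$, the correction is
\[
\frac1n\bm\Sigma^{1/2}\bm Z^\top\bm P\bm Z\bm\Sigma^{1/2}
=\sum_{k=1}^K\sum_{s\in\{+,-\}}\frac{n_{k,s}}{n}\,\bm\Sigma^{1/2}\bar{\bm z}_{k,s}\bar{\bm z}_{k,s}^\top\bm\Sigma^{1/2}.
\]
This is not negligible: each summand has norm about $p/n\to\gamma$. (It is the matrix $\bm R_n$ in the proof of Lemma~\ref{lem:ek_vhat_zero}.) Order-one finite-rank perturbations do in general create outliers and rotate eigenvectors; that is the BBP phenomenon itself. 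For example, take $1/4<\gamma<1$ and $\bm u$ a unit vector independent of $\bm Z$. Then $\bm Z^\top\bm Z/n-\gamma\bm u\bm u^\top$ has an eigenvalue converging to $1/2-\gamma<(1-\sqrt\gamma)^2$. That is a spurious eigenvalue in the region where the lower outliers $\theta_j$ live.

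The centering is harmless only because the block means are built from the same noise. Concretely, you need two facts.
\begin{enumerate}
\item[(i)] $\bar{\bm z}_{k,s}^\top\bm\Sigma^{1/2}\bm Q_n^{\,e}(z)\bm\xi\to0$ for fixed $z\in\mathbb C^+$. Combined with $\bm Q_n-\bm Q_n^{\,e}=\bm Q_n\bm R_n\bm Q_n^{\,e}$, this gives $\bm\xi^\top\bm Q_n(z)\bm\xi-\bm\xi^\top\bm Q_n^{\,e}(z)\bm\xi\to0$.
\item[(ii)] The centered null matrix $\bm Z^\top\bm G_K\bm Z/n$ has no eigenvalues escaping below the bulk. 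The top edge is automatic, since this matrix is dominated by $\bm Z^\top\bm Z/n$. For the bottom edge, your Helmert-rotated rows are uncorrelated but not independent unless the entries are Gaussian, so classical Bai--Yin does not apply directly. The paper invokes Chafa\"i--Tikhomirov here.
\end{enumerate}

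With these inputs, the paper proceeds through Lemmas~\ref{lem:center_uncentered_fixed} and~\ref{lem:centered_uncentered_same_outlier_limit}. A Rouch\'e argument on $\det(\bm D^{-1}+z\bm T_n(z))$ shows that exactly one centered eigenvalue lies inside $\mathcal C_j$. The projectors are then compared through $\sup_{z\in\mathcal C_j}|\bm\xi^\top(\bm Q_n(z)-\bm Q_n^{\,e}(z))\bm\xi|\to0$.

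Note how that uniform bound is obtained. Pointwise convergence is established only at nonreal $z\in\mathcal C_j$, where $\|\bm Q_n(z)\|\le|\Im z|^{-1}$ deterministically and martingale bounds apply. It is then extended to the real crossing points by equicontinuity on the high-probability event that the spectrum stays away from $\mathcal C_j$. Your phrase ``fixed $z$ away from the spectrum'' should be read this way.

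Finally, a small slip. For $\lambda_j\in(-1,-\sqrt\gamma)$, both $1-\gamma/\lambda_j^2$ and $1+\gamma/\lambda_j$ are positive, since $|\gamma/\lambda_j|<\sqrt\gamma<1$; they are not both negative. The conclusion $a_j>0$ is unaffected.
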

	\begin{Remark}
		Theorem~\ref{thm:centered_overlap_prob} extends the corresponding sample spike-eigenvector result in \cite{li2025SpectrallyCorrected} by allowing non-Gaussian observations with finite fourth moments and by applying to the centered pooled sample covariance matrix used in the TLDA-O classifier. 
		In addition to the usual upper-spike case that is the main focus of spiked eigenvector studies \citep{bao2022Statisticala,liu2025Asymptotic}, it also provides theoretical support for the eigenvector limits associated with lower isolated covariance spikes.
	\end{Remark}
	Theorem \ref{thm:centered_overlap_prob} shows that the sample spike directions still carry nontrivial information about the population spike directions, which allows us to replace the unknown population eigenspace by its sample version. To describe the limiting error, we next introduce two deterministic quantities. The vector \(\bm u_p=(u_{1,p},\dots,u_{K,p})^\top\) represents the limiting signal term, and the matrix \(\bm A_p=(A_{kk',p})_{1\le k,k'\le K}\) represents the limiting variance term. For $k,k'=1,\dots,K$, define $\Err(\bm{w})=\Err(\widehat{\bm{d}}(\bm{w}),\widehat{b}_{K}(\bm{w}))$ and
	\begin{align*}
		u_{k,p}
		&=
		\frac{\|\bar{\bm{\mu}}\|^2}{\sigma^2}
		-
		\sum_{j\in\mathcal I}b_ja_j
		\frac{(\bar{\bm{\mu}}\trans\bm{v}_j)^2}{\sigma^2}
		+
		\frac{\alpha_K^2}{\sigma^2}\ind_{\{k=K\}},\\
		A_{kk',p}
		&=
		\frac{\|\bar{\bm{\mu}}\|^2}{\sigma^2}
		+
		\sum_{j\in\mathcal I}c_j
		\frac{(\bar{\bm{\mu}}\trans\bm{v}_j)^2}{\sigma^2}
		+
		\left(
		\frac{\alpha_k^2}{\sigma^2}+\tau_k
		\right)\ind_{\{k=k'\}},
	\end{align*}
	where
	\begin{align*}
		a_j=\frac{\lambda_j^2-\gamma}{\lambda_j(\lambda_j+\gamma)},
		\quad b_j=\frac{\lambda_j}{1+\lambda_j},\quad
		c_j=(b_j^2-2b_j)a_j+\lambda_j-2\lambda_j b_j a_j+\lambda_j b_j^2 a_j^2.
	\end{align*}
	\begin{theorem}\label{thm:err_homo}
		Under Assumptions~\ref{ass1}--\ref{ass5}, the target-domain Gaussian-calibrated error of the TLDA-O classifier satisfies
		\begin{align}\label{eq:err_limit_uncorrected_homo}
			\Err(\bm w)
			&-
			\frac12\Phi\left(
			-\frac{\bm u_p^\top\bm w+w_K\Delta_K}{2\sqrt{\bm w^\top\bm A_p\bm w}}
			\right)
			-
			\frac12\Phi\left(
			-\frac{\bm u_p^\top\bm w-w_K\Delta_K}{2\sqrt{\bm w^\top\bm A_p\bm w}}
			\right)
			\to0
			\qquad\text{in probability}
		\end{align}
		for every fixed nonzero $\bm{w}\in\mathbb{R}^K$.
	\end{theorem}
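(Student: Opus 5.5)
Since $\Phi$ is continuous and the limiting arguments are bounded away from degenerate cases (the denominator $\bm w^\top\bm A_p\bm w$ stays bounded away from zero because $\tau_k>0$), the plan is to reduce \eqref{eq:err_limit_uncorrected_homo} to three scalar limits:
\begin{align*}
N_1&:=\widehat{\bm d}(\bm w)^\top\bm\mu_{-1,K}+\widehat b_K(\bm w),\\
N_2&:=\widehat{\bm d}(\bm w)^\top\bm\mu_{+1,K}+\widehat b_K(\bm w),\\
D&:=\widehat{\bm d}(\bm w)^\top\bm\Sigma\widehat{\bm d}(\bm w).
\end{align*}
Writing $\bar{\bm x}_K=(\bm\mu_{+1,K}+\bm\mu_{-1,K})/2$, we have
\[
N_{1,2}=\widehat{\bm d}(\bm w)^\top\Bigl(\mp\tfrac12\bm\mu_K+\bar{\bm x}_K-\tfrac12(\widehat{\bm\mu}_{+1,K}+\widehat{\bm\mu}_{-1,K})\Bigr).
\]
The key fact is that the sample means and $\bm S_n$ are asymptotically decoupled enough. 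Write $\widehat{\bm\mu}_k=\bm\mu_k+\bm e_k$ with $\bm e_k=\bar{\bm z}_{k,+}-\bar{\bm z}_{k,-}$ (noise means pre-multiplied by $\bm\Sigma^{1/2}$). Then $\widehat{\bm\Sigma}^{-1}=\sigma^{-2}(\bm I_p-\sum_{j}b_j\widehat{\bm v}_j\widehat{\bm v}_j^\top)$, which follows from orthonormality of the $\widehat{\bm v}_j$ and $b_j=\lambda_j/(1+\lambda_j)$. Hence every quantity reduces to quadratic and bilinear forms of the type $\bm a^\top\bm b-\sum_j b_j(\bm a^\top\widehat{\bm v}_j)(\widehat{\bm v}_j^\top\bm b)$, with $\bm a,\bm b\in\{\bar{\bm\mu},\bm\delta_k,\bm e_k,\bm v_j\}$.

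For terms involving only $\bar{\bm\mu}$, the plan is to apply Theorem~\ref{thm:centered_overlap_prob} with $\bm\xi=\bar{\bm\mu}/\|\bar{\bm\mu}\|$, together with polarization, to obtain $(\bar{\bm\mu}^\top\widehat{\bm v}_j)^2\approx a_j(\bar{\bm\mu}^\top\bm v_j)^2$. We also need the cross-limits $\widehat{\bm v}_j^\top\bm v_i\to0$ for $i\neq j$ and $(\widehat{\bm v}_j^\top\bm v_j)^2\to a_j$, so that expressions like $\widehat{\bm v}_j^\top\bm\Sigma\widehat{\bm v}_j$ can be evaluated. Concretely,
\[
\sigma^{-2}\widehat{\bm v}_j^\top\bm\Sigma\widehat{\bm v}_j=1+\sum_i\lambda_i(\bm v_i^\top\widehat{\bm v}_j)^2\to1+\lambda_ja_j,
\]
and together with the cross terms this produces exactly the constant $c_j$ in $A_{kk',p}$. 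This is where the expansion $c_j=(b_j^2-2b_j)a_j+\lambda_j-2\lambda_jb_ja_j+\lambda_jb_j^2a_j^2$ arises, from expanding $\bar{\bm\mu}^\top\widehat{\bm\Sigma}^{-1}\bm\Sigma\widehat{\bm\Sigma}^{-1}\bar{\bm\mu}$. The sign of the $\widehat{\bm v}_j$ is irrelevant, since only products of the form $(\cdot^\top\widehat{\bm v}_j)(\widehat{\bm v}_j^\top\cdot)$ appear, and the phase of $\bar{\bm\mu}^\top\bm v_j$ versus $\bm v_j^\top\widehat{\bm v}_j$ cancels in the squares.

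For terms involving the random vectors $\bm\delta_k$ and $\bm e_k$, the idea is to condition on the $\widehat{\bm v}_j$. The $\bm\delta_k$ are independent of everything else and have i.i.d.\ coordinates, so quadratic-form concentration (using the fourth-moment bound $Cp^{-2}$) gives $\bm\delta_k^\top\bm\delta_{k'}\to\alpha_k^2\ind_{\{k=k'\}}$, while $\bm\delta_k^\top\widehat{\bm v}_j=O_p(p^{-1/2})$ and $\bm\delta_k^\top\bar{\bm\mu}=o_p(1)$. For $\bm e_k$ we have $\mathbb E\|\bm e_k\|^2=\sigma^2\tr(\bm I+\cdots)(1/n_{k,+}+1/n_{k,-})\to\sigma^2\tau_k$ after dividing appropriately, and this yields the $\tau_k$ term on the diagonal. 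The difficulty is that $\bm e_k$ is not independent of $\bm S_n$ for non-Gaussian data. For Gaussian data, centering makes the means independent of $\bm S_n$. In general, I would show that $\bm e_k^\top\widehat{\bm v}_j\to0$ by writing $\bm S_n$ as a rank-$2K$ perturbation of the uncentered sample covariance, using that the spike eigenvectors are delocalized relative to the random direction $\bm e_k$ and a resolvent argument with isotropic local laws. I expect this to be the main obstacle, and it is also the part shared with Theorem~\ref{thm:centered_overlap_prob}, whose proof machinery I would reuse.

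Finally, the intercept term $\bar{\bm x}_K-\tfrac12(\widehat{\bm\mu}_{+1,K}+\widehat{\bm\mu}_{-1,K})=-\tfrac12(\bar{\bm z}_{K,+}+\bar{\bm z}_{K,-})$ pairs with the target direction $w_K\widehat{\bm d}_K\ni w_K\sigma^{-2}\bm e_K$. Their inner product gives $-\tfrac12\sigma^{-2}w_K\bigl(\|\bar{\bm z}_{K,+}\|^2_{\bm\Sigma}-\|\bar{\bm z}_{K,-}\|^2_{\bm\Sigma}\bigr)\to-\tfrac12w_K\Delta_K$, where the spike contributions are negligible, being of order $1/n$. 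All other cross terms vanish by independence. Putting these together gives $N_{1,2}\to\mp\tfrac12(\bm u_p^\top\bm w\pm w_K\Delta_K)$ up to the common factor $\sigma^2$, and $D\to\bm w^\top\bm A_p\bm w$ up to the same factor. The factors cancel in the ratios, and continuity of $\Phi$ concludes the proof.
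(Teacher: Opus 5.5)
This is essentially the paper's proof: the same margin/variance reduction, the expansion $\widehat{\bm\Sigma}^{-1}=\sigma^{-2}(\bm I_p-\sum_j b_j\widehat{\bm v}_j\widehat{\bm v}_j^\top)$, polarization of Theorem~\ref{thm:centered_overlap_prob} for the $\bar{\bm\mu}$ terms, independence for the $\bm\delta_k$ terms, and a contour/resolvent argument that compares the centered and uncentered covariances through a rank-$2K$ perturbation to obtain $\bm e_k^\top\widehat{\bm v}_j\to0$. Two caveats on that last step: the paper uses fixed-$z$ Sherman--Morrison and martingale bounds rather than local laws, which matters because only fourth moments are assumed; and the same argument must be applied to the sum of the two target-domain noise means, since its pairing with $\widehat{\bm v}_j\widehat{\bm v}_j^\top\bar{\bm\mu}$ does not vanish by independence as you suggest. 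There is also a bookkeeping slip: no factor $\sigma^2$ is left over, because $\bm u_p$ and $\bm A_p$ already carry $\sigma^{-2}$, and in any case a factor common to the margins and to $D$ would not cancel in $N/\sqrt{D}$ (rescaling $\widehat{\bm d}$ by $c$ multiplies the margins by $c$ but $D$ by $c^2$).
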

	\begin{Remark}
		Theorem~\ref{thm:err_homo} gives a deterministic approximation for the target-domain Gaussian-calibrated error. 
		The vector \(\bm u_p\) represents the effective signal, while \(\bm A_p\) represents the variance cost of the weighted transfer direction. 
		The diagonal term \(\alpha_k^2/\sigma^2+\tau_k\) shows that stronger domain-specific variation and larger classwise dimension-to-sample-size ratios increase the variability of using domain \(k\). 
		When the target-domain class sample sizes are unbalanced, the additional term \(w_K\Delta_K\) appears in the two margins.
	\end{Remark}
	To remove this target-domain intercept bias, for a scalar adjustment \(t\in\mathbb R\), consider the adjusted score
	\[
	\widehat{\bm d}(\bm w)^\top\bm x+\widehat b_K(\bm w)+t
	\]
	and write
	\[
	\Err_t(\bm w):=\Err(\widehat{\bm d}(\bm w),\widehat b_K(\bm w)+t).
	\]
	\begin{proposition}\label{prop:optimal_intercept_homo}
		Under Assumptions~\ref{ass1}--\ref{ass5}, for every fixed nonzero \(\bm w\) satisfying \(\bm u_p^\top\bm w>0\), let
		\[
		\widehat t_p^\ast(\bm w)\in\arg\min_{t\in\mathbb R}\Err_t(\bm w).
		\]
		Then
		\[
		\widehat t_p^\ast(\bm w)-\frac12w_K\widehat\Delta_K\to0
		\qquad\text{in probability}.
		\]
		Moreover,
		\begin{align}\label{eq:err_limit_corrected_homo}
			\Err_{\widehat t_p^\ast(\bm w)}(\bm w)
			-
			\Phi\left(
			-\frac{\bm u_p^\top\bm w}{2\sqrt{\bm w^\top\bm A_p\bm w}}
			\right)
			\to0
			\qquad\text{in probability}.
		\end{align}
	\end{proposition}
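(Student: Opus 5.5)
The plan is to show that the minimizer \(\widehat t_p^\ast(\bm w)\) can be computed explicitly for each sample, and then to evaluate that formula with the same asymptotic expansions that underlie Theorem~\ref{thm:err_homo}. Write \(\widehat{\bm d}=\widehat{\bm d}(\bm w)\), \(\widehat s=(\widehat{\bm d}^\top\bm\Sigma\widehat{\bm d})^{1/2}\), \(\bm m_K=(\bm\mu_{+1,K}+\bm\mu_{-1,K})/2\), and define
\[
\widehat M=\frac{\widehat{\bm d}^\top\bm\mu_K}{2\widehat s},\qquad \widehat B=\frac{\widehat{\bm d}^\top\bm m_K+\widehat b_K(\bm w)}{\widehat s}.
\]
Since \(\bm\mu_{\mp1,K}=\bm m_K\mp\bm\mu_K/2\), the definition \eqref{Err} gives the exact identity
\[
\Err_t(\bm w)=\tfrac12\Phi\bigl(-\widehat M+\widehat B+t/\widehat s\bigr)+\tfrac12\Phi\bigl(-\widehat M-\widehat B-t/\widehat s\bigr).
\]
Set \(f(c)=\Phi(-M+c)+\Phi(-M-c)\). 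This function is even, and \(f'(c)=\phi(c-M)-\phi(c+M)\). When \(M>0\), \(f'(c)\) has the sign of \(c\), so \(c=0\) is the unique minimizer. Hence, on the event \(\{\widehat M>0\}\), the minimizer is unique and equals
\[
\widehat t_p^\ast(\bm w)=-\widehat s\widehat B=-\bigl(\widehat{\bm d}^\top\bm m_K+\widehat b_K(\bm w)\bigr)=\widehat{\bm d}^\top\Bigl(\frac{\widehat{\bm\mu}_{+1,K}+\widehat{\bm\mu}_{-1,K}}{2}-\bm m_K\Bigr),
\]
and the minimal error is \(\Phi(-\widehat M)\).

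From the proof of Theorem~\ref{thm:err_homo}, I take the two limits \(\widehat{\bm d}^\top\bm\mu_K\xrightarrow{p}\bm u_p^\top\bm w\) and \(\widehat s^2\xrightarrow{p}\bm w^\top\bm A_p\bm w\). These are the signal and variance limits that produce the stated formula, up to the common normalization used there. They give \(\widehat M\xrightarrow{p}\bm u_p^\top\bm w/(2\sqrt{\bm w^\top\bm A_p\bm w})\), which is positive by hypothesis. Therefore \(\mathbb P(\widehat M>0)\to1\), and continuity of \(\Phi\) yields \eqref{eq:err_limit_corrected_homo} once the first claim is established.

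For the first claim, write \(\bar{\bm\varepsilon}_\pm=\widehat{\bm\mu}_{\pm1,K}-\bm\mu_{\pm1,K}\). Then \(\widehat{\bm\mu}_K=\bm\mu_K+\bar{\bm\varepsilon}_+-\bar{\bm\varepsilon}_-\), and
\[
\widehat t_p^\ast(\bm w)=\tfrac12\sum_{k}w_k\widehat{\bm\mu}_k^\top\widehat{\bm\Sigma}^{-1}(\bar{\bm\varepsilon}_++\bar{\bm\varepsilon}_-).
\]
I use \(\widehat{\bm\Sigma}^{-1}=\sigma^{-2}\bigl(\bm I_p-\sum_{j\in\mathcal I}b_j\widehat{\bm v}_j\widehat{\bm v}_j^\top\bigr)\) and split the expression into an identity part and a finite-rank part.

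\emph{Identity part.} The dominant term is the \(k=K\) quadratic term
\[
\tfrac12 w_K\sigma^{-2}(\bar{\bm\varepsilon}_+-\bar{\bm\varepsilon}_-)^\top(\bar{\bm\varepsilon}_++\bar{\bm\varepsilon}_-)=\tfrac12 w_K\sigma^{-2}\bigl(\|\bar{\bm\varepsilon}_+\|^2-\|\bar{\bm\varepsilon}_-\|^2\bigr).
\]
By Assumption~\ref{ass1} and a variance computation using finite fourth moments, \(\|\bar{\bm\varepsilon}_\pm\|^2-\tr(\bm\Sigma)/n_{K,\pm}\to0\) in probability. Since the spikes have fixed rank, \(\tr(\bm\Sigma)/n_{K,\pm}=\sigma^2\widehat\gamma_{K,\pm}+O(1/n)\). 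So this term equals \(\tfrac12 w_K\widehat\Delta_K+o_p(1)\). The remaining identity-part terms are linear forms \(\bm\mu_k^\top\bar{\bm\varepsilon}_\pm\), and cross terms \(\widehat{\bm\mu}_k^\top\bar{\bm\varepsilon}_\pm\) for \(k\neq K\), which are independent of \(\bar{\bm\varepsilon}_\pm\). Each has mean zero and variance \(O(1/n)\), so each is \(o_p(1)\).

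\emph{Finite-rank part.} For each \(j\in\mathcal I\), I must show that \(\widehat{\bm\mu}_k^\top\widehat{\bm v}_j=O_p(1)\) and \(\widehat{\bm v}_j^\top\bar{\bm\varepsilon}_\pm=o_p(1)\).

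The second of these is the main obstacle. The vector \(\widehat{\bm v}_j\) comes from the centered pooled covariance \(\bm S_n\), which is not independent of the sample means outside the Gaussian case. I would first try the bound \(|\widehat{\bm v}_j^\top\bar{\bm\varepsilon}_\pm|\le\|\bm P\bar{\bm\varepsilon}_\pm\|+\|\widehat{\bm v}_j-\bm P\widehat{\bm v}_j\|\,\|\bar{\bm\varepsilon}_\pm\|\), with \(\bm P\) the projection onto the population spike span. The first piece is \(O_p(n^{-1/2})\). The second piece is not small, because \(\|\bar{\bm\varepsilon}_\pm\|\) is of order one. So instead I would use the resolvent representation of \(\widehat{\bm v}_j\widehat{\bm v}_j^\top\) as a contour integral of \((\bm S_n-z)^{-1}\), as in the proof of Theorem~\ref{thm:centered_overlap_prob}. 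Combined with an isotropic local law for quadratic forms \(\bm a^\top(\bm S_n-z)^{-1}\bm b\) in which one argument is a sample average of the noise, and the rank-one correction from centering, this should show that \(\bm a^\top\widehat{\bm v}_j\widehat{\bm v}_j^\top\bar{\bm\varepsilon}_\pm\to0\) for deterministic \(\bm a\) and for \(\bm a=\bar{\bm\varepsilon}_\mp\). Given these bounds, the finite-rank part is \(o_p(1)\), and \(\widehat t_p^\ast(\bm w)-\tfrac12w_K\widehat\Delta_K\to0\) in probability.
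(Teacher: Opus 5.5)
Your argument is the paper's own: on the event $\{\widehat{\bm d}(\bm w)^\top\bm\mu_K>0\}$, which is the same as $\{D_{+,n}(\bm w)+D_{-,n}(\bm w)>0\}$, the unique minimizer is $\widehat t_p^\ast(\bm w)=\tfrac14\{D_{-,n}(\bm w)-D_{+,n}(\bm w)\}=\tfrac12 s_K^\top\widehat{\bm d}(\bm w)$ with minimal error $\Phi(-\widehat M)$, and both claims then follow from the margin and variance limits in the proof of Theorem~\ref{thm:err_homo}. The step you flag as the main obstacle, $\widehat{\bm v}_j^\top\bar{\bm\varepsilon}_\pm\xrightarrow{p}0$, is already Lemma~\ref{lem:ek_vhat_zero} together with its $s_k$-version \eqref{eq:s_vhat_small}, proved there by the contour/resolvent argument you sketch. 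So instead of re-deriving the intercept limit you can simply cite \eqref{eq:Dplus_goal}--\eqref{eq:Dminus_goal}.
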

	\begin{Remark}
		If \(\bm u_p^\top\bm w\le0\), the limiting Gaussian-calibrated error under this fixed weight is at least \(1/2\), and we do not consider such cases. Proposition~\ref{prop:optimal_intercept_homo} shows that the optimal intercept correction removes the target-domain imbalance term \(w_K\Delta_K\). Hence the corrected deterministic limit is no larger than the limit in Theorem~\ref{thm:err_homo}, and it is strictly smaller when \(w_K\Delta_K\neq0\). After correction, the limiting error has the same form as the balanced-target limit obtained by setting \(\Delta_K=0\); the remaining effect of class-size imbalance is only through the variance term \(\tau_k\) in \(\bm A_p\).
	\end{Remark}
	Once the corrected error limit of TLDA-O is available, we can optimize it over the transfer weights. This leads to an optimal choice that serves as a benchmark for the best possible linear combination of the domainwise discriminant directions.
	\begin{corollary}\label{cor:oracle_weight_homo}
		Under Assumptions~\ref{ass1}--\ref{ass5}, suppose \(\bm u_p\neq \bm0\). Then the deterministic equivalent in \eqref{eq:err_limit_corrected_homo} is minimized by
		\begin{align}\label{eq:oracle_weight_homo}
			\bm{w}_p^\ast \propto \bm{A}_p^{-1}\bm{u}_p.
		\end{align}
		Under the normalization $\bm{w}\trans\bm{A}_p\bm{w}=1$, the unique maximizer is
		\begin{align}
			\bm{w}_p^\ast
			=
			\frac{\bm{A}_p^{-1}\bm{u}_p}
			{\sqrt{\bm{u}_p\trans\bm{A}_p^{-1}\bm{u}_p}}.
		\end{align}
	\end{corollary}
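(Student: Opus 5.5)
Since $\Phi$ is strictly increasing, minimizing the corrected limit $\Phi\bigl(-\bm u_p^\top\bm w/(2\sqrt{\bm w^\top\bm A_p\bm w})\bigr)$ over nonzero $\bm w$ (with $\bm u_p^\top\bm w>0$, as in Proposition~\ref{prop:optimal_intercept_homo}) is equivalent to maximizing the generalized Rayleigh-type ratio
\[
R(\bm w)=\frac{\bm u_p^\top\bm w}{\sqrt{\bm w^\top\bm A_p\bm w}}.
\]
The plan is therefore to show that $\bm A_p$ is positive definite, and then to apply Cauchy--Schwarz in the $\bm A_p$-inner product.

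\textbf{Step 1: positive definiteness of $\bm A_p$.} Write
\[
\bm A_p=s\,\bm 1\bm 1^\top+\diag\bigl(\alpha_k^2/\sigma^2+\tau_k\bigr)_{k=1}^K,\qquad s=\frac{\|\bar{\bm\mu}\|^2}{\sigma^2}+\sum_{j\in\mathcal I}c_j\frac{(\bar{\bm\mu}^\top\bm v_j)^2}{\sigma^2}.
\]
The diagonal part is positive definite because $\tau_k>0$. It then suffices to show $s\ge0$.

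\textbf{Step 2: nonnegativity of $s$ (main obstacle).} Using $\|\bar{\bm\mu}\|^2\ge\sum_{j\in\mathcal I}(\bar{\bm\mu}^\top\bm v_j)^2$, it is enough to prove $1+c_j\ge0$ for each $j$. I would obtain this from the derivation of Theorem~\ref{thm:err_homo}: $\sigma^2(1+c_j)$ should be the limit of $(\bm v_j^\top\widehat{\bm\Sigma}^{-1}\bm\Sigma\widehat{\bm\Sigma}^{-1}\bm v_j)$-type quadratic forms of positive semidefinite matrices, so it is automatically nonnegative. As a fallback, one can check directly that
\[
1+c_j=(1-b_ja_j)^2+\lambda_j(1-b_ja_j)^2+(\text{terms})\ge0
\]
by regrouping: $c_j+1=(1+\lambda_j)(1-b_ja_j)^2+a_j(1-a_j)b_j^2$ after substituting $b_j(1+\lambda_j)=\lambda_j$. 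This is nonnegative because Theorem~\ref{thm:centered_overlap_prob} gives $a_j\in[0,1]$. Verifying that exact algebraic identity is the step I expect to need care.

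\textbf{Step 3: Cauchy--Schwarz.} With $\bm A_p\succ0$, Cauchy--Schwarz in the $\bm A_p$-inner product gives
\[
\bm u_p^\top\bm w=(\bm A_p^{-1}\bm u_p)^\top\bm A_p\bm w\le\sqrt{\bm u_p^\top\bm A_p^{-1}\bm u_p}\,\sqrt{\bm w^\top\bm A_p\bm w}.
\]
Equality holds if and only if $\bm w$ is a positive multiple of $\bm A_p^{-1}\bm u_p$. Since $\bm u_p\ne\bm0$, this choice satisfies $\bm u_p^\top\bm w>0$, so it is admissible. Hence the maximum of $R$ is $\sqrt{\bm u_p^\top\bm A_p^{-1}\bm u_p}$, which yields \eqref{eq:oracle_weight_homo}.

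\textbf{Step 4: normalization.} Imposing $\bm w^\top\bm A_p\bm w=1$ on $\bm w=c\bm A_p^{-1}\bm u_p$ with $c>0$ forces $c=(\bm u_p^\top\bm A_p^{-1}\bm u_p)^{-1/2}$. Uniqueness follows from the equality case of Cauchy--Schwarz.
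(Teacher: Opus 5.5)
Your proof is correct and follows the paper's route. The paper likewise reduces to maximizing $\bm u_p^\top\bm w/\sqrt{\bm w^\top\bm A_p\bm w}$, proves $\bm A_p\succ 0$ in Lemma~\ref{lem:Ap_positive} via your decomposition $s_p\bm 1\bm 1^\top+\diag(d_k)$ and the identity $1+c_j=(1+\lambda_j)(1-b_ja_j)^2+b_j^2a_j(1-a_j)$, and then applies Cauchy--Schwarz in the $\bm A_p$-inner product; the only differences are that the paper reads $0<a_j<1$ directly off Assumption~\ref{ass5} rather than from the limit in Theorem~\ref{thm:centered_overlap_prob}, and that your check that $\bm A_p^{-1}\bm u_p$ satisfies $\bm u_p^\top\bm w>0$ is a small addition the paper leaves implicit.
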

	\begin{Remark}
		Corollary~\ref{cor:oracle_weight_homo} shows that the optimal weight has the form of a generalized signal-to-noise maximizer. 
		Domains with larger effective shared signal tend to receive larger weights, while domains with stronger domain-specific variation or larger dimension-to-sample-size ratios are penalized through \(\bm A_p\). 
	\end{Remark}
	The optimal weight for TLDA-O depends on unknown population quantities and therefore cannot be used directly. We next replace these quantities by consistent estimators and obtain a consistent plug-in weight vector. To construct this plug-in version of \eqref{eq:oracle_weight_homo}, define
	\begin{align}
		\widehat{\kappa}_p
		:=
		\frac{1}{\sigma^2K(K-1)}
		\sum_{m\neq \ell}\widehat{\bm{\mu}}_m\trans\widehat{\bm{\mu}}_{\ell},
	\end{align}
	and, for each \(k=1,\dots,K\), define
	\begin{align}
		\widehat{\beta}_k
		:=
		\frac{1}{\sigma^2}
		\left(
		\|\widehat{\bm{\mu}}_k\|^2
		-
		\frac{1}{K-1}\sum_{\ell\neq k}
		\widehat{\bm{\mu}}_k^\top\widehat{\bm{\mu}}_{\ell}
		\right)-
		\widehat\tau_k.
	\end{align}
	As in Assumption~\ref{ass3}, the spike strengths \(\lambda_j\), \(j\in\mathcal I\), and hence
	\(b_j=\lambda_j/(1+\lambda_j)\), are treated as known population parameters.
	For each \(j\in\mathcal I\), define
	\[
	\widehat a_j
	:=
	\frac{\lambda_j^2-\widehat\gamma}
	{\lambda_j(\lambda_j+\widehat\gamma)},
	\qquad
	\widehat c_j
	:=
	(b_j^2-2b_j)\widehat a_j
	+\lambda_j
	-2\lambda_j b_j\widehat a_j
	+\lambda_j b_j^2\widehat a_j^2 .
	\]
	Then, for $k,k'=1,\dots,K$, define
	\begin{gather*}
		\widehat{u}_{k,p}
		:=
		\widehat{\kappa}_p
		-
		\sum_{j\in\mathcal I}
		\frac{b_j}{\sigma^2K(K-1)}
		\sum_{m\neq \ell}
		\widehat{\bm{\mu}}_{\ell}\trans
		\widehat{\bm{v}}_{j}\widehat{\bm{v}}_{j}\trans
		\widehat{\bm{\mu}}_m
		+
		\widehat{\beta}_K\,\ind_{\{k=K\}},\\
		\widehat{A}_{kk',p}
		:=
		\widehat{\kappa}_p
		+
		\sum_{j\in\mathcal I}
		\frac{\widehat c_j}
		{\sigma^2K(K-1)}
		\sum_{m\neq \ell}
		\frac{1}{\widehat a_j}\,
		\widehat{\bm{\mu}}_{\ell}\trans
		\widehat{\bm{v}}_{j}\widehat{\bm{v}}_{j}\trans
		\widehat{\bm{\mu}}_m
		+
		\ind_{\{k=k'\}}
		\bigl(\widehat{\beta}_k+\widehat\tau_k\bigr).
	\end{gather*}
	Let
	\[
	\widehat{\bm{u}}_p=(\widehat{u}_{1,p},\dots,\widehat{u}_{K,p})\trans,
	\qquad
	\widehat{\bm{A}}_p=(\widehat{A}_{kk',p})_{1\le k,k'\le K}.
	\]
	
	\begin{theorem}\label{thm:emp_weight_homo}
		Suppose that Assumptions \ref{ass1}--\ref{ass5} hold, and that there exists a constant \(c_u>0\) such that
		$\|\bm u_p\|\ge c_u$,
		for all sufficiently large \(p\). Then
		\begin{align}
			\widehat{\bm{u}}_p-\bm{u}_p\to\bm 0,
			\qquad
			\widehat{\bm{A}}_p-\bm{A}_p\to\bm 0\qquad\text{in probability}.
		\end{align}
		Moreover, with probability tending to one, \(\widehat{\bm{A}}_p\) is positive definite, and the empirical optimal weight vector
		\begin{align}\label{eq:emp_weight_homo}
			\widehat{\bm{w}}_p^\ast
			=
			\frac{\widehat{\bm{A}}_p^{-1}\widehat{\bm{u}}_p}
			{\sqrt{\widehat{\bm{u}}_p\trans\widehat{\bm{A}}_p^{-1}\widehat{\bm{u}}_p}}
		\end{align}
		satisfies
		\begin{align}
			\widehat{\bm{w}}_p^\ast-\bm{w}_p^\ast\to\bm 0
			\qquad\text{in probability}.
		\end{align}
	\end{theorem}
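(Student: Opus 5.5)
The plan is to prove consistency of each building block and then use continuity. Write $\widehat{\bm\mu}_k=\bm\mu_k+\bm e_k$ with $\bm e_k=\bar{\bm z}_{k,+}-\bar{\bm z}_{k,-}$, where $\bar{\bm z}_{k,\pm}$ are class-wise noise averages. Under Assumption~\ref{ass1}, $\bm e_k$ has covariance $(1/n_{k,+}+1/n_{k,-})\bm\Sigma$, and these vectors are independent across $k$ and independent of the $\bm\delta$'s.

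\textbf{Step 1: inner-product limits.} For $m\neq\ell$, expand
\[
\widehat{\bm\mu}_m^\top\widehat{\bm\mu}_\ell=\|\bar{\bm\mu}\|^2+\text{cross terms}.
\]
Each cross term, such as $\bar{\bm\mu}^\top\bm\delta_m$, $\bm\delta_m^\top\bm\delta_\ell$, $\bm\delta_m^\top\bm e_\ell$ or $\bm e_m^\top\bm e_\ell$, has mean zero. Its variance is $O(1/p)$, since $\|\bm\Sigma\|$ is bounded and $\tr(\bm\Sigma^2)/n^2=O(1/p)$. Hence $\widehat\kappa_p-\|\bar{\bm\mu}\|^2/\sigma^2\to0$.

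Similarly,
\[
\|\widehat{\bm\mu}_k\|^2=\|\bar{\bm\mu}\|^2+\alpha_k^2+\tr(\bm\Sigma)(1/n_{k,+}+1/n_{k,-})+o_p(1),
\]
using the fourth-moment bounds for the quadratic forms. Moreover, $\tr(\bm\Sigma)/n_{k,\pm}=\sigma^2\widehat\gamma_{k,\pm}+O(1/n)$. Therefore $\widehat\beta_k-\alpha_k^2/\sigma^2\to0$, because the centering by cross products removes $\|\bar{\bm\mu}\|^2$ and the subtraction of $\widehat\tau_k$ removes the noise trace. This is exactly the diagonal structure appearing in $u_{K,p}$ and $A_{kk,p}$.

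\textbf{Step 2: projected terms (main obstacle).} We must show that for $m\neq\ell$,
\[
\widehat{\bm\mu}_\ell^\top\widehat{\bm v}_j\widehat{\bm v}_j^\top\widehat{\bm\mu}_m-a_j(\bar{\bm\mu}^\top\bm v_j)^2\to0 .
\]
The difficulty is that $\widehat{\bm v}_j$ depends on the same data as $\bm e_k$, so Theorem~\ref{thm:centered_overlap_prob} cannot be applied directly.

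The approach has three parts. First, the pooled covariance $\bm S_n$ is computed from within-class centered residuals. Under Gaussianity these residuals are independent of the class means. In general, the dependence enters only through a rank-$2K$ perturbation of order $O(1/n)$, which can be handled by a leave-means-out comparison $\bm S_n$ versus $\bm S_n^{(0)}$. Second, condition on $\widehat{\bm v}_j$ (approximately independent of $\bm\delta$ and $\bm e$). Then $\bm\delta_m^\top\widehat{\bm v}_j$ and $\bm e_m^\top\widehat{\bm v}_j$ are $O_p(p^{-1/2})$ and $O_p(n^{-1/2})$, so every cross term vanishes. Third, the remaining term $(\bar{\bm\mu}^\top\widehat{\bm v}_j)^2$ converges to $a_j(\bar{\bm\mu}^\top\bm v_j)^2$ by Theorem~\ref{thm:centered_overlap_prob} applied with $\bm\xi=\bar{\bm\mu}/\|\bar{\bm\mu}\|$ (the case $\bar{\bm\mu}=0$ is trivial).

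If the independence argument is too delicate, I would fall back on a resolvent representation $\widehat{\bm v}_j\widehat{\bm v}_j^\top=-\frac{1}{2\pi i}\oint(\bm S_n-z)^{-1}dz$ combined with isotropic local laws for bilinear forms $\bm e^\top(\bm S_n-z)^{-1}\bm\xi$.

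\textbf{Step 3: assembling the plug-in quantities.} Since $\widehat\gamma\to\gamma$, we have $\widehat a_j\to a_j>0$ (by Assumption~\ref{ass5}) and $\widehat c_j\to c_j$. Combining Steps 1--2 gives
\[
\frac{1}{\sigma^2K(K-1)}\sum_{m\neq\ell}\widehat{\bm\mu}_\ell^\top\widehat{\bm v}_j\widehat{\bm v}_j^\top\widehat{\bm\mu}_m\to a_j(\bar{\bm\mu}^\top\bm v_j)^2/\sigma^2 .
\]
After multiplication by $b_j$ this is the $b_ja_j$ term of $u_{k,p}$, and after multiplication by $\widehat c_j/\widehat a_j$ it is the $c_j$ term of $A_{kk',p}$. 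Adding the limits from Step 1 yields $\widehat{\bm u}_p-\bm u_p\to\bm0$ and $\widehat{\bm A}_p-\bm A_p\to\bm0$.

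\textbf{Step 4: positive definiteness and weights.} Write
\[
\bm A_p=\Big(\|\bar{\bm\mu}\|^2/\sigma^2+\sum_j c_j(\bar{\bm\mu}^\top\bm v_j)^2/\sigma^2\Big)\bm1\bm1^\top+\diag(\alpha_k^2/\sigma^2+\tau_k).
\]
I will argue that the first coefficient is nonnegative: it equals the limit of $\mathbb E\,\widehat{\bm d}_k^\top\bm\Sigma\widehat{\bm d}_{k'}$ for $k\ne k'$, i.e. the squared norm of $\bm\Sigma^{1/2}\widehat{\bm\Sigma}^{-1}\bar{\bm\mu}$. The diagonal part is bounded below by $\min_k\tau_k>0$. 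Hence $\lambda_{\min}(\bm A_p)\ge c>0$ and $\|\bm A_p\|=O(1)$, so $\widehat{\bm A}_p$ is positive definite with probability tending to one, and $\widehat{\bm A}_p^{-1}-\bm A_p^{-1}\to\bm0$. Finally, $\bm u_p^\top\bm A_p^{-1}\bm u_p\ge c_u^2/\|\bm A_p\|$ is bounded away from zero. Since the map $(\bm u,\bm A)\mapsto\bm A^{-1}\bm u/\sqrt{\bm u^\top\bm A^{-1}\bm u}$ is uniformly continuous on such bounded, well-conditioned sets, this gives $\widehat{\bm w}_p^\ast-\bm w_p^\ast\to\bm0$ in probability.
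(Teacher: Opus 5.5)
Your Steps 1, 3 and 4 follow the paper's proof closely. Like the paper, you reduce everything to $\widehat{\bm\mu}_m^\top\widehat{\bm\mu}_\ell$, $\|\widehat{\bm\mu}_k\|^2$ and $\widehat{\bm\mu}_\ell^\top\widehat{\bm P}_j\widehat{\bm\mu}_m$, and then use Weyl's inequality and continuity. Your positivity argument is also acceptable. The common coefficient $s_p$ is the deterministic equivalent of $\|\bm\Sigma^{1/2}\widehat{\bm\Sigma}^{-1}\bar{\bm\mu}\|^2\ge 0$, so $\liminf_p s_p\ge 0$, which substitutes for the paper's identity $1+c_j=(1+\lambda_j)(1-b_ja_j)^2+b_j^2a_j(1-a_j)>0$.

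\textbf{The gap is in Step 2, the only nontrivial input.} You need $e_k^\top\widehat{\bm v}_j\xrightarrow{p}0$ for every $k$, since the pooled $\bm S_n$ uses all domains. Yet $\widehat{\bm v}_j$ is built from the same rows as $e_k$. Your Gaussian argument is fine, but Assumption~\ref{ass1} allows any entries with finite fourth moment, and your non-Gaussian mechanism fails as stated.

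\emph{The perturbation is not small.} Write $\bm S_n=\frac{1}{n-2K}\bm\Sigma^{1/2}\bigl(\bm Z^\top\bm Z-\sum_{a,t}n_{a,t}\bar{\bm z}_{a,t}\bar{\bm z}_{a,t}^\top\bigr)\bm\Sigma^{1/2}$. Each rank-one correction has norm about $\frac{n_{a,t}}{n}\cdot\frac{\operatorname{tr}(\bm\Sigma)}{n_{a,t}}\to\sigma^2\gamma$. So the rank-$2K$ perturbation is of order one, not $O(1/n)$, and cannot be dropped.

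\emph{Removing it does not give independence.} The uncentered Gram matrix $\bm Z^\top\bm Z$ still contains the block sums $n_{a,t}\bar{\bm z}_{a,t}$. Outside the Gaussian case there is no matrix close to $\bm S_n$ that is independent of the class means, so ``conditioning on $\widehat{\bm v}_j$'' has nothing to rest on.

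\emph{The fallback is not off the shelf.} Isotropic laws concern deterministic (or independent) vectors, whereas $e_k$ is data-dependent. You would also still have to control doubly random terms such as $e_\ell^\top\widehat{\bm P}_j e_m$.

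What is missing is the content of Lemma~\ref{lem:ek_vhat_zero}. Its argument runs in four steps.
\begin{enumerate}
\item Use the ratio $\bar{\bm z}^\top\bm\Sigma^{1/2}\widehat{\bm v}_j=\bar{\bm z}^\top\bm\Sigma^{1/2}\widehat{\bm P}_j\bm v_j/(\widehat{\bm v}_j^\top\bm v_j)$. Its denominator stays away from zero because $|\widehat{\bm v}_j^\top\bm v_j|^2\xrightarrow{p}a_j>0$ by Theorem~\ref{thm:centered_overlap_prob}. This turns every term, including the doubly random ones, into a bilinear form with one deterministic side.
\item Write $\widehat{\bm P}_j$ as a contour integral of the resolvent. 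For fixed $z\in\mathbb C^+$, show $\bar{\bm z}_{a,t}^\top\bm\Sigma^{1/2}\bm Q_n^{\,e}(z)\bm v_j\xrightarrow{p}0$ for the uncentered resolvent. This uses a martingale-difference decomposition over the rows of block $(a,t)$ together with Sherman--Morrison, giving variance $O(n^{-1})$ and conditional mean $O(n^{-1/2})$.
\item Pass to the centered resolvent through $\bm Q_n-\bm Q_n^{\,e}=\bm Q_n\bm R_n\bm Q_n^{\,e}$, where $\bm R_n$ is the rank-$2K$ block-mean term. This works because each factor $\bar{\bm z}_{a,t}^\top\bm\Sigma^{1/2}\bm Q_n^{\,e}(z)\bm v_j$ vanishes, not because $\bm R_n$ is small; it is only $O_p(1)$.
\item Upgrade to uniform convergence on the contour by equicontinuity, on the event that the contour stays away from the spectrum.
\end{enumerate}
With $e_k^\top\widehat{\bm v}_j\xrightarrow{p}0$ in hand, together with $\bm\delta_k^\top\widehat{\bm v}_j\xrightarrow{p}0$ (which you handle correctly by exact independence), all cross terms in your Step 2 vanish and the rest of your proof goes through.
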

	\begin{Remark}
		The consistency is stated for known spectral parameters in Assumption~\ref{ass3}. The same conclusion continues to hold if these parameters are replaced by consistent estimators such as those in \cite{bai2012Estimation,jiang2021Generalized}.
	\end{Remark}
	\subsection{Heterogeneous covariance case}\label{sec2.2}
	We now move to the more general case where the covariance matrices are allowed to vary across domains. This setting is closer to practice, but the interaction across domains becomes more involved because both the spike strengths and the spike directions may differ from one domain to another.
	
	\begin{assumption}\label{ass3_hetero}
		For each domain $k=1,\dots,K$, the covariance matrix follows a spiked covariance model:
		\begin{align}\label{eq:heteroSigma_ass}
			\bm{\Sigma}_k
			=
			\sigma_k^2
			\left(
			\bm{I}_p+\sum_{j\in\mathcal I_k}\lambda_{j,k}\bm{v}_{j,k}\bm{v}_{j,k}\trans
			\right),\qquad \mathcal I_k=\mathcal I_{k,+}\cup\mathcal I_{k,-}
		\end{align}
		where $1/c\le \sigma_k^2\le c$ for some constant $c>0$, \(\{\bm v_{j,k}:j\in\mathcal I_k\}\) is an orthonormal family in
		\(\mathbb R^p\), and
		$
		\mathcal I_{k,+}=\{1,\dots,r_{k,+}\}, \mathcal I_{k,-}=\{-r_{k,-},\dots,-1\},
		$
		with \(r_k:=r_{k,+}+r_{k,-}\) fixed for each \(k\).
	\end{assumption}
	\begin{Remark}
		As in the homogeneous covariance case, for simplicity, in the theoretical analysis, we assume that
		\(\sigma_k^2\), \(r_{k,+}\), \(r_{k,-}\), and
		\(\lambda_{j,k}\), \(j\in\mathcal I_k\), are known for each \(k\).
		In experiments, these population spectral parameters can be estimated consistently using existing methods for spiked covariance models; see, for example, \cite{bai2012Estimation,jiang2021Generalized}.
	\end{Remark}
	\begin{assumption}\label{ass5_hetero}
		For each domain \(k=1,\dots,K\) and each \(j\in\mathcal I_k\), \(\lambda_{j,k}\) is independent of
		\(p\), and
		\[
		\lambda_{1,k}>\cdots>\lambda_{r_{k,+},k}>\sqrt{\gamma_k}>
		-\sqrt{\gamma_k}>
		\lambda_{-r_{k,-},k}>\cdots>\lambda_{-1,k}>-1.
		\]
		Moreover, if \(\gamma_k\geq1\), then \(\mathcal I_{k,-}=\varnothing\).
	\end{assumption}
	In the heterogeneous setting, the covariance information can no longer be pooled directly. We therefore estimate the spiked covariance structure within each domain and then combine the resulting local discriminant directions through transfer weights.
	The within-domain sample covariance matrix is estimated separately by
	\begin{align}\label{eq:sample_cov_hetero}
		\bm S_{n,k}
		=
		\frac{1}{n_k-2}
		\sum_{i=1}^{n_k}
		\Bigl[
		(\bm{X}_k)_i-\widehat{\bm{\mu}}_{(y_k)_i,k}
		\Bigr]
		\Bigl[
		(\bm{X}_k)_i-\widehat{\bm{\mu}}_{(y_k)_i,k}
		\Bigr]\trans.
	\end{align}
	Write its spectral decomposition as
	\[
	\bm S_{n,k}
	=
	\sum_{j=1}^p\ell_{j,k}\widehat{\bm v}_{j,k}\widehat{\bm v}_{j,k}^\top,
	\]
	with the eigenvalues arranged in decreasing order. For \(j\in\mathcal I_{k,+}\), \(\widehat{\bm v}_{j,k}\) denotes the sample eigenvector associated with the upper outlier corresponding to \(\lambda_{j,k}\). For \(j\in\mathcal I_{k,-}\), \(\widehat{\bm v}_{j,k}\) denotes the sample eigenvector associated with the lower outlier corresponding to \(\lambda_{j,k}\). We then define the spectrally-corrected covariance estimator by
	\begin{align}\label{eq:Sigma_tilde_hetero}
		\widehat{\bm{\Sigma}}_k
		=
		\sigma_k^2
		\left(
		\bm{I}_p+\sum_{j\in\mathcal{I}_k}\lambda_{j,k}\widehat{\bm{v}}_{j,k}\widehat{\bm{v}}_{j,k}\trans
		\right).
	\end{align}
	By the same argument as Theorem~\ref{thm:centered_overlap_prob}, the domainwise analogue holds with $(\gamma,\lambda_j,\bm v_j)$ replaced by $(\gamma_k,\lambda_{j,k},\bm v_{j,k})$. The local discriminant direction is then defined by
	$
	\widehat{\bm{d}}^{E}_k=\widehat{\bm{\Sigma}}_k^{-1}\widehat{\bm{\mu}}_k,
	$
	and the transfer discriminant direction remains
	\[
	\widehat{\bm{d}}^{E}(\bm{w})=\sum_{k=1}^K w_k\widehat{\bm{d}}^{E}_k.
	\]
	The resulting target-domain classifier is
	\begin{align}\label{eq:classifier_hetero}
		\widehat{y}^{E}(\bm{x})
		=
		\sign\left\{
		\widehat{\bm{d}}^{E}(\bm{w})\trans\bm{x}+\widehat{b}_{K}^{E}(\bm{w})
		\right\},
	\end{align}
	where
	\[
	\widehat b_K^{E}(\bm w)
	=
	-\widehat{\bm d}^{E}(\bm w)^\top
	\frac{\widehat{\bm\mu}_{+1,K}+\widehat{\bm\mu}_{-1,K}}{2}.
	\]
	We refer to \eqref{eq:classifier_hetero} as the Heterogeneous Transfer Learning Discriminant Analysis (TLDA-E) classifier.
	
	To state the error formula in this setting, we need a few quantities that describe both within-domain spike effects and cross-domain spike alignment. Define
	\[
	a_{j,k}:=\frac{\lambda_{j,k}^2-\gamma_k}{\lambda_{j,k}(\lambda_{j,k}+\gamma_k)},
	\qquad
	b_{j,k}:=\frac{\lambda_{j,k}}{1+\lambda_{j,k}},
	\qquad
	m_{j,k}:=\bar{\bm\mu}^\top \bm v_{j,k},
	\qquad j\in\mathcal I_k.
	\]
	For \(k,k'=1,\dots,K\), define
	\[
	\bm m_k:=(m_{j,k})_{j\in\mathcal I_k},
	\quad
	\bm R_{kk'}:=
	\bigl(\rho_{j\ell}^{(k,k')}\bigr)_{j\in\mathcal I_k,\ \ell\in\mathcal I_{k'}},\quad
	\rho_{j\ell}^{(k,k')}:=\bm v_{j,k}^\top \bm v_{\ell,k'}.
	\]
	Also define
	\[
	\bm B_k:=\diag(b_{j,k})_{j\in\mathcal I_k},
	\qquad
	\bm D_k:=\diag(b_{j,k}a_{j,k})_{j\in\mathcal I_k},
	\qquad
	\bm\Lambda_k:=\diag(\lambda_{j,k})_{j\in\mathcal I_k},
	\]
	and
	\[
	\bm C_{kk'}:=
	\begin{cases}
		\diag(a_{j,k})_{j\in\mathcal I_k}, & k=k',\\[1mm]
		\diag(a_{j,k})_{j\in\mathcal I_k}\,\bm R_{kk'}\,
		\diag(a_{\ell,k'})_{\ell\in\mathcal I_{k'}}, & k\neq k'.
	\end{cases}
	\]
	For each $k,k'=1,\dots,K$, define
	$
	\bm q_k:=\bm m_K-\bm R_{Kk}\bm D_k\bm m_k
	$, and
	\begin{gather*}
		u_{k,p}^{E}
		=
		\frac{\|\bar{\bm\mu}\|^2-\bm m_k^\top \bm D_k\bm m_k}{\sigma_k^2}
		+
		\frac{\alpha_K^2}{\sigma_K^2}\ind_{\{k=K\}},\\
		A_{kk',p}^{E}
		=
		\frac{\sigma_K^2}{\sigma_k^2\sigma_{k'}^2}\,\Omega_{kk',p}^{E}
		+
		\left(
		\frac{\sigma_K^2\alpha_k^2}{\sigma_k^4}
		+
		\tau_k\frac{\sigma_K^2}{\sigma_k^2}
		\right)\ind_{\{k=k'\}},
	\end{gather*}
	where
	\[
	\Omega_{kk',p}^{E}
	=
	\|\bar{\bm\mu}\|^2
	-\bm m_k^\top \bm D_k\bm m_k
	-\bm m_{k'}^\top \bm D_{k'}\bm m_{k'}
	+\bm m_k^\top \bm B_k\bm C_{kk'}\bm B_{k'}\bm m_{k'}
	+\bm q_k^\top \bm\Lambda_K \bm q_{k'}.
	\]
	
	Let
	\[
	\bm u_p^{E}:=(u_{1,p}^{E},\dots,u_{K,p}^{E})^\top,
	\qquad
	\bm A_p^{E}:=(A_{kk',p}^{E})_{1\le k,k'\le K}.
	\]
	Also write
	\[
	\Err^{E}(\bm w):=\Err(\widehat{\bm d}^{E}(\bm w),\widehat b_K^{E}(\bm w)).
	\]
	
	\begin{theorem}\label{thm:err_hetero}
		Under Assumptions~\ref{ass1}, \ref{ass2}, \ref{ass4}, \ref{ass3_hetero} and \ref{ass5_hetero},
		the target-domain Gaussian-calibrated error of the TLDA-E classifier satisfies
		\begin{align}\label{eq:err_limit_uncorrected_hetero}
			\Err^{E}(\bm w)
			&-
			\frac12\Phi\left(
			-\frac{(\bm u_p^{E})^\top\bm w+w_K\Delta_K}{2\sqrt{\bm w^\top\bm A_p^{E}\bm w}}
			\right)
			-
			\frac12\Phi\left(
			-\frac{(\bm u_p^{E})^\top\bm w-w_K\Delta_K}{2\sqrt{\bm w^\top\bm A_p^{E}\bm w}}
			\right)
			\to 0
			\qquad\text{in probability}
		\end{align}
		for every fixed nonzero \(\bm w\in\mathbb R^K\).
	\end{theorem}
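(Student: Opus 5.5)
Our approach is to reduce the error \eqref{Err} to three scalar quantities and compute a deterministic limit for each. Write $\widehat{\bm d}=\widehat{\bm d}^{E}(\bm w)$. Since $\widehat b_K^E=-\widehat{\bm d}^\top(\widehat{\bm\mu}_{+1,K}+\widehat{\bm\mu}_{-1,K})/2$, the two margins in \eqref{Err} are $\mp\widehat{\bm d}^\top\bm\mu_K/2\pm\widehat{\bm d}^\top\bm e$ divided by $(\widehat{\bm d}^\top\bm\Sigma_K\widehat{\bm d})^{1/2}$. Here $\bm e:=(\widehat{\bm\mu}_{+1,K}+\widehat{\bm\mu}_{-1,K})/2-(\bm\mu_{+1,K}+\bm\mu_{-1,K})/2=(\bar{\bm z}_{+}+\bar{\bm z}_{-})/2$ is the average of the two class noise means. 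It therefore suffices to show, after rescaling by $\sigma_K$,
\[
\widehat{\bm d}^\top\bm\mu_K\xrightarrow{}(\bm u_p^E)^\top\bm w,\qquad
\widehat{\bm d}^\top\bm e\to -\tfrac12 w_K\Delta_K,\qquad
\widehat{\bm d}^\top\bm\Sigma_K\widehat{\bm d}\to \bm w^\top\bm A_p^E\bm w,
\]
each up to $o_p(1)$ and up to the common factor $\sigma_K$. Continuity of $\Phi$ and the lower bound $\bm w^\top\bm A_p^E\bm w\ge c>0$ then give \eqref{eq:err_limit_uncorrected_hetero}. This lower bound follows from the diagonal term $\tau_k\sigma_K^2/\sigma_k^2$ together with positive semidefiniteness of the $\Omega^E$ part, which is itself a limit of Gram-type quantities.

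\textbf{Step 1 (inverse of $\widehat{\bm\Sigma}_k$).} Because $\widehat{\bm\Sigma}_k$ is built from orthonormal sample eigenvectors, we have exactly
\[
\widehat{\bm\Sigma}_k^{-1}=\sigma_k^{-2}\Bigl(\bm I_p-\sum_{j\in\mathcal I_k}b_{j,k}\widehat{\bm v}_{j,k}\widehat{\bm v}_{j,k}^\top\Bigr).
\]
Next decompose $\widehat{\bm\mu}_k=\bar{\bm\mu}+\bm\delta_k+\bm\Sigma_k^{1/2}(\bar{\bm z}_{k,+}-\bar{\bm z}_{k,-})$. Every quantity above then becomes a finite sum of bilinear forms $\bm x^\top \bm M \bm y$. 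Each $\bm x,\bm y$ is one of $\bar{\bm\mu}$, $\bm\delta_k$, or a noise mean. Each $\bm M$ is one of $\bm I$, $\widehat{\bm v}_{j,k}\widehat{\bm v}_{j,k}^\top$, the product $\widehat{\bm v}_{j,k}\widehat{\bm v}_{j,k}^\top\widehat{\bm v}_{\ell,k'}\widehat{\bm v}_{\ell,k'}^\top$, or such terms sandwiched with $\bm\Sigma_K$.

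\textbf{Step 2 (limits of the elementary forms).} The forms are evaluated as follows.
\begin{enumerate}
\item For deterministic $\bar{\bm\mu}$, the domainwise analogue of Theorem~\ref{thm:centered_overlap_prob}, polarized to $\bm\xi^\top\widehat{\bm v}_{j,k}\widehat{\bm v}_{j,k}^\top\bm\eta$, gives $\bar{\bm\mu}^\top\widehat{\bm v}_{j,k}\widehat{\bm v}_{j,k}^\top\bar{\bm\mu}\to a_{j,k}m_{j,k}^2$. It also gives $\bar{\bm\mu}^\top\widehat{\bm v}_{j,k}\widehat{\bm v}_{j,k}^\top\bm v_{j,k}\to a_{j,k}m_{j,k}$; the sign ambiguity cancels in the rank-one projector.
\item For $k\ne k'$, independence of the domains lets us write $\widehat{\bm v}_{j,k}^\top\widehat{\bm v}_{\ell,k'}\approx \sqrt{a_{j,k}a_{\ell,k'}}\,\rho^{(k,k')}_{j\ell}$ up to signs. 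The signs cancel in the combination $\bar{\bm\mu}^\top\widehat{\bm v}_{j,k}\widehat{\bm v}_{j,k}^\top\widehat{\bm v}_{\ell,k'}\widehat{\bm v}_{\ell,k'}^\top\bar{\bm\mu}$. This produces exactly the entries of $\bm C_{kk'}$.
\item Random vectors $\bm\delta_k$ and the noise means $\bar{\bm z}$ are independent of the sample eigenvectors of other domains. By quadratic-form concentration (using the fourth moments in Assumptions~\ref{ass1}--\ref{ass2}), $\bm\delta_k^\top\bm M\bm\delta_k\to\alpha_k^2\tr(\bm M)/p$ and all cross terms vanish. Since the spike projectors have finite rank, only $\bm M=\bm I$ or $\bm\Sigma_K$ survives at leading order. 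This yields the $\alpha_k^2$ terms.
\item For the noise means, $\|\bm\Sigma_k^{1/2}(\bar{\bm z}_{k,+}-\bar{\bm z}_{k,-})\|^2\to\sigma_k^2\tau_k$. Likewise, $(\bar{\bm z}_{K,+}+\bar{\bm z}_{K,-})^\top(\bar{\bm z}_{K,+}-\bar{\bm z}_{K,-})\to \Delta_K$ after the $\sigma_K^2$ scaling, which is the source of the $w_K\Delta_K$ shift.
\end{enumerate}

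\textbf{Step 3 (assembly).} Expanding $\widehat{\bm d}^\top\bm\Sigma_K\widehat{\bm d}$ with $\bm\Sigma_K=\sigma_K^2(\bm I+\sum_\ell\lambda_{\ell,K}\bm v_{\ell,K}\bm v_{\ell,K}^\top)$ gives the $\Omega^E_{kk'}$ terms. The identity part yields $\|\bar{\bm\mu}\|^2-\bm m_k^\top\bm D_k\bm m_k-\bm m_{k'}^\top\bm D_{k'}\bm m_{k'}+\bm m_k^\top\bm B_k\bm C_{kk'}\bm B_{k'}\bm m_{k'}$. The spike part contributes $\sum_\ell\lambda_{\ell,K}(\bm v_{\ell,K}^\top\widehat{\bm\Sigma}_k^{-1}\bar{\bm\mu})(\cdots)$. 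Using step 2, $\bm v_{\ell,K}^\top\widehat{\bm v}_{j,k}\widehat{\bm v}_{j,k}^\top\bar{\bm\mu}\to a_{j,k}\rho^{(K,k)}_{\ell j}m_{j,k}$, which assembles into $\bm q_k^\top\bm\Lambda_K\bm q_{k'}$. The signal term $\widehat{\bm d}^\top\bm\mu_K$ follows similarly. Only $k=K$ picks up $\bm\delta_K^\top\bm\delta_K\to\alpha_K^2$, and the noise-mean cross terms vanish.

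\textbf{Main obstacle.} The main obstacle is the same-domain terms with $k=K$. There the target noise means $\bar{\bm z}_{K,\pm}$ are \emph{not} independent of $\widehat{\bm v}_{j,K}$, which are built from the same data; this affects $\widehat{\bm d}^\top\bm e$ and $\bm\Sigma_K^{1/2}\bar{\bm z}$-terms against the spike projector. I would first use that the within-class centered covariance is independent of the class means under Gaussianity, and handle the general case via a leave-mean-out resolvent argument. The goal is to show $\bar{\bm z}^\top\widehat{\bm v}_{j,K}=o_p(1)$, since the projector has finite rank and $\bar{\bm z}$ has norm $O(1)$ but is delocalized. A secondary delicate point is the sign-free cross-domain overlap limit in step 2(ii), which requires joint eigenvector limits for two independent spiked matrices.
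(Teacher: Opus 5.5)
Your proposal is correct and follows essentially the paper's proof. The shared ingredients are: the reduction of \eqref{Err} to the two margins and the variance $\widehat{\bm d}^\top\bm\Sigma_K\widehat{\bm d}$; the exact inverse $\widehat{\bm\Sigma}_k^{-1}=\sigma_k^{-2}(\bm I_p-\sum_{j}b_{j,k}\widehat{\bm P}_{j,k})$; polarized domainwise eigenvector limits; conditioning on independent domains for the cross-domain products $\widehat{\bm P}_{j,k}\widehat{\bm P}_{m,k'}$; and a contour/resolvent lemma giving $e_k^\top\widehat{\bm v}_{j,k}\to0$ and $s_K^\top\widehat{\bm v}_{j,K}\to0$ (Lemma~\ref{lem:hetero_mean_noise_vhat}). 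That lemma is needed for every domain $k$, not only $k=K$, because $\widehat{\bm d}_k^{E}$ always pairs $\widehat{\bm P}_{j,k}$ with the same-domain $\widehat{\bm\mu}_k$.

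Two slips are harmless: the limit of $\widehat{\bm d}^\top\bm e$ is $+\tfrac12 w_K\Delta_K$ rather than $-\tfrac12 w_K\Delta_K$, and no $\sigma_K$ rescaling is needed. The expression is symmetric in $\pm w_K\Delta_K$, and the $\sigma_k$'s are already built into $\bm u_p^{E}$ and $\bm A_p^{E}$. Your asymptotic positive-semidefiniteness argument for $\bm A_p^{E}$, via limits of Gram matrices, is an adequate substitute for the paper's exact lifted Gram construction.
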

	\begin{Remark}
		Theorem~\ref{thm:err_hetero} extends the deterministic error approximation to domain-specific covariance matrices. 
		Here \(\bm u_p^{E}\) still describes the effective signal, and \(\bm A_p^{E}\) describes the variance cost. 
		Compared with the homogeneous case, the error now depends not only on the shared signal, domain-specific variation, dimension-to-sample-size ratios, and within-domain spike directions, but also on cross-domain spike alignment. 
		The matrices \(\bm R_{kk'}\) and the vectors \(\bm q_k\) describe how the spike directions from different domains interact with the target-domain covariance structure.
	\end{Remark}
	To remove this target-domain intercept bias, for a scalar adjustment \(t\in\mathbb R\), consider the adjusted score
	\[
	(\widehat{\bm d}^{E}(\bm w))^\top\bm x+\widehat b_K^{E}(\bm w)+t
	\]
	and write
	\[
	\Err_t^{E}(\bm w):=\Err(\widehat{\bm d}^{E}(\bm w),\widehat b_K^{E}(\bm w)+t).
	\]
	\begin{proposition}\label{prop:optimal_intercept_hetero}
		Under Assumptions~\ref{ass1}, \ref{ass2}, \ref{ass4}, \ref{ass3_hetero} and \ref{ass5_hetero}, for every fixed nonzero \(\bm w\) satisfying \((\bm u_p^{E})^\top\bm w>0\), let
		\[
		\widehat t_{p,E}^\ast(\bm w)\in\arg\min_{t\in\mathbb R}\Err_t^{E}(\bm w).
		\]
		Then
		\[
		\widehat t_{p,E}^\ast(\bm w)-\frac12w_K\widehat\Delta_K\to0
		\qquad\text{in probability}.
		\]
		Moreover,
		\begin{align}\label{eq:err_limit_corrected_hetero}
			\Err_{\widehat t_{p,E}^\ast(\bm w)}^{E}(\bm w)
			-
			\Phi\left(
			-\frac{(\bm u_p^{E})^\top\bm w}{2\sqrt{\bm w^\top\bm A_p^{E}\bm w}}
			\right)
			\to0
			\qquad\text{in probability}.
		\end{align}
	\end{proposition}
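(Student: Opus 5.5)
Writing $\widehat{\bm d}:=\widehat{\bm d}^{E}(\bm w)$, my plan is to decompose the two margins in \eqref{Err} into a common \emph{signal} part and a \emph{shift} part produced by the intercept, and then use the limits already obtained in the proof of Theorem~\ref{thm:err_hetero}. With $\widehat b=\widehat b_K^{E}(\bm w)+t$, the two arguments of $\Phi$ are
\[
-\frac{\widehat{\bm d}^\top(\bm\mu_{+1,K}-\widehat{\bar{\bm\mu}}_K)-t}{\sqrt{\widehat{\bm d}^\top\bm\Sigma_K\widehat{\bm d}}},
\qquad
-\frac{\widehat{\bm d}^\top(\widehat{\bar{\bm\mu}}_K-\bm\mu_{-1,K})+t}{\sqrt{\widehat{\bm d}^\top\bm\Sigma_K\widehat{\bm d}}},
\]
where $\widehat{\bar{\bm\mu}}_K=(\widehat{\bm\mu}_{+1,K}+\widehat{\bm\mu}_{-1,K})/2$. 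The proof of Theorem~\ref{thm:err_hetero} gives, after a common rescaling $s_p$ of $\widehat{\bm d}$ (e.g.\ by $\sigma_K^2$ factors), three limits:
\begin{itemize}
\item $s_p\,\widehat{\bm d}^\top\bm\Sigma_K\widehat{\bm d}-\bm w^\top\bm A_p^E\bm w\to0$;
\item $s_p\,\widehat{\bm d}^\top\bm\mu_K-(\bm u_p^E)^\top\bm w\to0$;
\item $s_p\,\widehat{\bm d}^\top(\widehat{\bar{\bm\mu}}_K-\bar{\bm\mu}_K^{\rm pop})-\tfrac12 w_K\widehat\Delta_K\to0$ (up to sign convention), where $\bar{\bm\mu}_K^{\rm pop}$ is the population midpoint.
\end{itemize}
The third limit comes from $E\|\bar{\bm\epsilon}_{+}\|^2-E\|\bar{\bm\epsilon}_{-}\|^2$ in the target-domain term; only $k=K$ contributes, because the source noise is independent of the target midpoint noise. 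I will re-derive these as a lemma if they are not already isolated. Writing $\theta:=\tfrac12(\bm u_p^E)^\top\bm w$, $v:=\sqrt{\bm w^\top\bm A_p^E\bm w}$ and $\eta:=\tfrac12 w_K\widehat\Delta_K$, and setting $t'=s_pt$, the error becomes
\[
\Err^E_t=\tfrac12\Phi\bigl(-(\theta+\eta-t')/v\bigr)+\tfrac12\Phi\bigl(-(\theta-\eta+t')/v\bigr)+o_p(1).
\]
The sign of $\eta$ must be matched with Theorem~\ref{thm:err_hetero}, which reads off exactly as the $\pm w_K\Delta_K$ terms there.

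Next I make the approximation uniform in $t$. The random function $g_p(t)$ and the deterministic $g(t)$ are both Lipschitz in $t$ with constant $\le(2\pi)^{-1/2}/\text{denominator}$, and the denominator is bounded below with probability tending to one because $\bm A_p^E$ is positive definite on $\bm w\ne0$. Indeed, the diagonal term $\tau_k\sigma_K^2/\sigma_k^2>0$ and the remaining part is a Gram-type PSD form coming from the limit of $\widehat{\bm d}^\top\bm\Sigma_K\widehat{\bm d}$. Moreover, the margin ratios differ from their limits by $o_p(1)(1+|t|)$. For $|t|\le M$ this gives $\sup_{|t|\le M}|\Err^E_t-g(t)|=o_p(1)$. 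For $|t|>M$, both functions are at least $\tfrac12\Phi(-\text{const})$ plus a term near $\tfrac12$ minus small, so both exceed $g(\eta)+\epsilon$ once $M$ is large.

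The deterministic function $h(x)=\tfrac12\Phi(-(\theta-x)/v)+\tfrac12\Phi(-(\theta+x)/v)$ with $x=t'-\eta$ is even in $x$. For $\theta>0$ it is strictly minimized at $x=0$: its derivative is $\tfrac{1}{2v}[\phi((\theta-x)/v)-\phi((\theta+x)/v)]$, which has the sign of $x$ when $\theta>0$. The unique minimizer is therefore $t'=\eta$, with minimum value $\Phi(-\theta/v)$. This is exactly where the hypothesis $(\bm u_p^E)^\top\bm w>0$ is used.

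Finally, a standard argmin-consistency argument (uniform convergence plus a well-separated minimum, using the strict minimum and continuity of $h$ on compacts) yields $s_p\widehat t^\ast_{p,E}-\eta\to0$. Both limits are stated on the scale where $s_p=1$, as in Theorem~\ref{thm:err_hetero}'s normalization. Uniform convergence also gives $\Err^E_{\widehat t^\ast}\to\Phi(-\theta/v)$, which is \eqref{eq:err_limit_corrected_hetero}.

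The main obstacle is the uniform-in-$t$ control together with verifying the intercept-noise limit in the third bullet. I will handle the latter by conditioning on $\bm\delta$'s and the covariance estimators, which depend on the class means only through sample covariances that are asymptotically independent of $\bar{\bm\epsilon}_\pm$. I will then apply trace lemmas with the finite fourth moment from Assumption~\ref{ass1}, mirroring the homogeneous proof of Proposition~\ref{prop:optimal_intercept_homo}.
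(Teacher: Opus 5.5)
Your argument is correct, but it takes a longer route than the paper. The paper needs neither a uniform-in-$t$ approximation nor an argmin-consistency step. The finite-sample criterion already has the same even structure as your limit function $h$:
$\Err_t^{E}(\bm w)=\tfrac12\Phi\bigl(-(D_{-,n}^{E}-2t)/(2\sqrt{V_n^{E}})\bigr)+\tfrac12\Phi\bigl(-(D_{+,n}^{E}+2t)/(2\sqrt{V_n^{E}})\bigr)$.
Your derivative computation, applied directly to this random function, gives the following on the event $\{D_{+,n}^{E}+D_{-,n}^{E}>0\}$, which has probability tending to one:
\begin{itemize}
\item the minimizer exists, is unique, and equals exactly $\widehat t_{p,E}^\ast=\tfrac14(D_{-,n}^{E}-D_{+,n}^{E})$;
\item the minimum value is exactly $\Phi\bigl(-(D_{+,n}^{E}+D_{-,n}^{E})/(4\sqrt{V_n^{E}})\bigr)$.
\end{itemize}
Both claims then follow from \eqref{eq:Dplus_goal_hetero}--\eqref{eq:V_goal_hetero} by continuous mapping. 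Your three bullets are just $V_n^{E}$, $\tfrac12(D_{+,n}^{E}+D_{-,n}^{E})$ and $\tfrac14(D_{-,n}^{E}-D_{+,n}^{E})$ rewritten, so you need no rescaling $s_p$ and no new lemma. You also do not need to redo the conditioning and trace-lemma argument for the intercept noise. That argument is already carried out in the proof of Theorem~\ref{thm:err_hetero} via Lemma~\ref{lem:hetero_mean_noise_vhat}, which handles the non-Gaussian dependence between class means and sample eigenvectors rigorously rather than by ``asymptotic independence''.

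What your route buys is robustness. It uses only the deterministic equivalent and a well-separated deterministic minimum, so it would survive where the empirical criterion has no closed-form minimizer, for example with class-dependent denominators. The price is the tail control for $|t|>M$ and the uniform separation of the minimum. Like the paper's ``probability tending to one'' step, these implicitly require $(\bm u_p^{E})^\top\bm w$ to stay bounded away from zero and $\bm w^\top\bm A_p^{E}\bm w$ to stay bounded above and below along $p$.

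One slip to fix: in your displayed arguments of $\Phi$, the sign of $t$ is reversed in both terms. With $\widehat b=\widehat b_K^{E}(\bm w)+t$, the class $+1$ argument is $-(\widehat{\bm d}^\top(\bm\mu_{+1,K}-\widehat{\bar{\bm\mu}}_K)+t)/\sqrt{V_n^{E}}$. The class $-1$ argument is $-(\widehat{\bm d}^\top(\widehat{\bar{\bm\mu}}_K-\bm\mu_{-1,K})-t)/\sqrt{V_n^{E}}$. As you wrote them, they would give the minimizer $-\eta$, even though your subsequent formula for $\Err_t^{E}$ has the correct signs.
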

	
	The corrected deterministic error formula also yields the optimal TLDA-E transfer weights.
	\begin{corollary}\label{cor:oracle_weight_hetero}
		Under Assumptions~\ref{ass1}, \ref{ass2}, \ref{ass4}, \ref{ass3_hetero} and \ref{ass5_hetero}, suppose \(\bm u_p^{E}\neq \bm 0\). Then the deterministic equivalent in \eqref{eq:err_limit_corrected_hetero} is minimized by
		\begin{align}\label{eq:oracle_weight_hetero}
			\bm w_{p,E}^\ast \propto (\bm A_p^{E})^{-1}\bm u_p^{E}.
		\end{align}
		Under the normalization
		$
		\bm w^\top \bm A_p^{E}\bm w=1,
		$
		the unique maximizer is
		\begin{align}
			\bm w_{p,E}^\ast
			=
			\frac{(\bm A_p^{E})^{-1}\bm u_p^{E}}
			{\sqrt{(\bm u_p^{E})^\top(\bm A_p^{E})^{-1}\bm u_p^{E}}}.
		\end{align}
	\end{corollary}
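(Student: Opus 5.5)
The corollary reduces to maximizing a generalized Rayleigh-type quotient. By Proposition~\ref{prop:optimal_intercept_hetero}, the corrected deterministic equivalent is $\Phi\bigl(-(\bm u_p^{E})^\top\bm w/(2\sqrt{\bm w^\top\bm A_p^{E}\bm w})\bigr)$. Since $\Phi$ is strictly increasing, minimizing it over nonzero $\bm w$ with $(\bm u_p^{E})^\top\bm w>0$ is the same as maximizing
\[
R(\bm w):=\frac{(\bm u_p^{E})^\top\bm w}{\sqrt{\bm w^\top\bm A_p^{E}\bm w}} .
\]
This ratio is invariant under positive rescaling of $\bm w$. Optimizing over directions is therefore equivalent to optimizing under the normalization $\bm w^\top\bm A_p^{E}\bm w=1$, and under that normalization the objective is the linear functional $(\bm u_p^{E})^\top\bm w$.

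The first step is to check that $\bm A_p^{E}$ is positive definite, so that the normalization and the inverse make sense. I would argue this by identifying $\bm w^\top\bm A_p^{E}\bm w$ as the limit of $\widehat{\bm d}^{E}(\bm w)^\top\bm\Sigma_K\widehat{\bm d}^{E}(\bm w)$, rescaled by $\sigma_K^2$-type factors. This identification is how the quantity arises in the proof of Theorem~\ref{thm:err_hetero}, so $\bm A_p^{E}$ is automatically positive semidefinite.

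Strict positivity comes from the diagonal part. The matrix $\diag\bigl(\sigma_K^2\alpha_k^2/\sigma_k^4+\tau_k\sigma_K^2/\sigma_k^2\bigr)$ has entries bounded below by $\tau_k\sigma_K^2/\sigma_k^2>0$, because $\tau_k>0$ under Assumption~\ref{ass4} and the $\sigma_k^2$ are bounded by Assumption~\ref{ass3_hetero}. The remaining part, $\bigl(\sigma_K^2\Omega^{E}_{kk',p}/(\sigma_k^2\sigma_{k'}^2)\bigr)$, is the limiting Gram matrix of the shared-signal contributions $\bm\Sigma_K^{1/2}\widehat{\bm\Sigma}_k^{-1}\bar{\bm\mu}$, and so is positive semidefinite. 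The sum of a positive semidefinite matrix and a positive definite diagonal matrix is positive definite.

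I expect the semidefiniteness of the $\Omega^{E}$ block to be the main obstacle. The cleanest route I would try first is to express $\Omega^{E}_{kk',p}$ as $\lim \bm g_k^\top\bm g_{k'}$, with
\[
\bm g_k=\sigma_k^{2}\,\bm\Sigma_K^{1/2}\widehat{\bm\Sigma}_k^{-1}\bar{\bm\mu}/\sigma_K ,
\]
using the expansion from the proof of Theorem~\ref{thm:err_hetero}. A limit of Gram matrices is positive semidefinite, which avoids verifying the explicit formula directly.

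With $\bm A_p^{E}\succ0$ in hand, the Cauchy--Schwarz inequality in the $\bm A_p^{E}$-inner product gives
\[
(\bm u_p^{E})^\top\bm w=\bigl((\bm A_p^{E})^{-1/2}\bm u_p^{E}\bigr)^\top\bigl((\bm A_p^{E})^{1/2}\bm w\bigr)\le\sqrt{(\bm u_p^{E})^\top(\bm A_p^{E})^{-1}\bm u_p^{E}}\,\sqrt{\bm w^\top\bm A_p^{E}\bm w}.
\]
Equality holds iff $(\bm A_p^{E})^{1/2}\bm w$ is a nonnegative multiple of $(\bm A_p^{E})^{-1/2}\bm u_p^{E}$, that is, iff $\bm w\propto(\bm A_p^{E})^{-1}\bm u_p^{E}$ with a positive constant. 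Since $\bm u_p^{E}\neq\bm0$, the maximal value $(\bm u_p^{E})^\top(\bm A_p^{E})^{-1}\bm u_p^{E}$ is strictly positive. Hence the maximizer satisfies the side condition $(\bm u_p^{E})^\top\bm w>0$ required in Proposition~\ref{prop:optimal_intercept_hetero}, which gives \eqref{eq:oracle_weight_hetero}.

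Finally, imposing $\bm w^\top\bm A_p^{E}\bm w=1$ on $\bm w=c(\bm A_p^{E})^{-1}\bm u_p^{E}$ with $c>0$ gives $c^2(\bm u_p^{E})^\top(\bm A_p^{E})^{-1}\bm u_p^{E}=1$. This yields the stated closed form, and uniqueness follows from the equality case of Cauchy--Schwarz together with the sign constraint $c>0$. The argument is identical to that for Corollary~\ref{cor:oracle_weight_homo}.
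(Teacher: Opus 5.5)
Your proposal is correct, and its optimization step (reduction to the ratio $(\bm u_p^{E})^\top\bm w/\sqrt{\bm w^\top\bm A_p^{E}\bm w}$, Cauchy--Schwarz in the $\bm A_p^{E}$-inner product, the equality case, and the normalization) is the same as the paper's. The real difference is in how you certify $\bm A_p^{E}\succ 0$.

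The paper (Lemma~\ref{lem:ApH_positive}) builds an exact algebraic Gram representation. In an enlarged space with orthonormal $\bm f_{j,k}\perp\mathbb R^p$, it sets $\bm g_{j,k}=a_{j,k}\bm v_{j,k}+\sqrt{a_{j,k}(1-a_{j,k})}\,\bm f_{j,k}$ and $\bm h_k=\bar{\bm\mu}-\sum_{j}b_{j,k}m_{j,k}\bm g_{j,k}$. It then checks that $\Omega^{E}_{kk',p}=\langle\bm h_k,\bm N_K\bm h_{k'}\rangle$ with $\bm N_K\succ0$. This uses only $0<a_{j,k}<1$ and $\lambda_{\ell,K}>-1$.

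You instead read positivity off the probabilistic expansion $\bar{\bm\mu}^\top\bm M_k\bm N_K\bm M_{k'}\bar{\bm\mu}=\Omega^{E}_{kk',p}+o_p(1)$ from the proof of Theorem~\ref{thm:err_hetero}. In fact the paper's $\bm h_k$ is a deterministic surrogate of your random vector $\sigma_k^2\widehat{\bm\Sigma}_k^{-1}\bar{\bm\mu}$, with $\widehat{\bm v}_{j,k}$ replaced by $\sqrt{a_{j,k}}\,\bm v_{j,k}+\sqrt{1-a_{j,k}}\,\bm f_{j,k}$. Your route spares you from finding that representation, but it rests on the whole eigenvector-overlap machinery and gives a weaker conclusion.

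The matrix $\bm\Omega^{E}_p=(\Omega^{E}_{kk',p})$ changes with $p$ and has no fixed limit. So \emph{a limit of Gram matrices is positive semidefinite} can only mean $\liminf_{p\to\infty}\lambda_{\min}(\bm\Omega^{E}_p)\ge 0$, since $\bm\Omega^{E}_p$ is a deterministic matrix within $o_p(1)$ of a positive semidefinite random one. It does not give $\bm\Omega^{E}_p\succeq0$ at each fixed $p$, as your wording suggests. Combined with the uniform diagonal bound $\tau_k\sigma_K^2/\sigma_k^2\ge c>0$, this yields $\lambda_{\min}(\bm A^{E}_p)\ge c/2$ for all large $p$. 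That is exactly what Lemma~\ref{lem:ApH_positive} asserts, so it suffices for the corollary, whereas the paper's construction actually holds at every $p$.

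Your explicit check that the maximizer satisfies $(\bm u_p^{E})^\top\bm w>0$, as Proposition~\ref{prop:optimal_intercept_hetero} requires, is a point the paper leaves implicit.
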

	Compared with TLDA-O, the TLDA-E weight additionally reflects the alignment between source-domain spike directions and the target-domain covariance structure. To construct a plug-in version of \eqref{eq:oracle_weight_hetero}, define, for any
	\(p\times p\) matrix \(\bm M\),
	\[
	\widehat{\mathcal A}_p(\bm M)
	:=
	\frac{1}{K(K-1)}
	\sum_{a\neq b}\widehat{\bm\mu}_a^\top \bm M \widehat{\bm\mu}_b.
	\]
	Also, for each \(k=1,\dots,K\), define
	\[
	\widehat{\beta}_k^{E}
	:=
	\|\widehat{\bm\mu}_k\|^2
	-
	\frac{1}{K-1}\sum_{\ell\neq k}\widehat{\bm\mu}_k^\top\widehat{\bm\mu}_\ell
	-
	\widehat\tau_k\sigma_k^2.
	\]
	As in Assumption~\ref{ass3_hetero}, the spike strengths $\lambda_{j,k}$, $j\in\mathcal I_k$, and hence $b_{j,k}=\lambda_{j,k}/(1+\lambda_{j,k})$, are treated as known. For each \(j\in\mathcal I_k\), define
	\[
	\widehat a_{j,k}
	:=
	\frac{\lambda_{j,k}^2-\widehat\gamma_k}
	{\lambda_{j,k}(\lambda_{j,k}+\widehat\gamma_k)},\qquad
	\widehat{\bm P}_{j,k}=\widehat{\bm v}_{j,k}\widehat{\bm v}_{j,k}\trans,
	\]
	\[
	\widehat{\bm M}_k
	:=
	\bm I_p-\sum_{j\in\mathcal I_k} b_{j,k}\widehat{\bm P}_{j,k},\qquad
	\widehat u_{k,p}^{E}
	:=
	\frac{1}{\sigma_k^2}\widehat{\mathcal A}_p(\widehat{\bm M}_k)
	+
	\frac{\widehat{\beta}_K^{E}}{\sigma_K^2}\ind_{\{k=K\}}.
	\]
	For $k,k'=1,\dots,K,\ell\in\mathcal I_K$, define
	\[
	\widehat{\Theta}_{kk',\ell}
	:=
	\begin{cases}
		\dfrac{1}{\widehat a_{\ell,K}}
		\widehat{\mathcal A}_p(\widehat{\bm M}_k\widehat{\bm P}_{\ell,K}\widehat{\bm M}_{k'}),
		& k\neq K,\ k'\neq K,\\[3mm]
		\dfrac{1-b_{\ell,K}\widehat a_{\ell,K}}{\widehat a_{\ell,K}}
		\widehat{\mathcal A}_p(\widehat{\bm P}_{\ell,K}\widehat{\bm M}_{k'}),
		& k=K,\ k'\neq K,\\[3mm]
		\dfrac{1-b_{\ell,K}\widehat a_{\ell,K}}{\widehat a_{\ell,K}}
		\widehat{\mathcal A}_p(\widehat{\bm M}_k\widehat{\bm P}_{\ell,K}),
		& k\neq K,\ k'=K,\\[3mm]
		\dfrac{(1-b_{\ell,K}\widehat a_{\ell,K})^2}{\widehat a_{\ell,K}}
		\widehat{\mathcal A}_p(\widehat{\bm P}_{\ell,K}),
		& k=K,\ k'=K.
	\end{cases}
	\]
	Then define
	\[
	\widehat{\Omega}_{kk',p}^{E}
	:=
	\widehat{\mathcal A}_p(\widehat{\bm M}_k\widehat{\bm M}_{k'})
	+
	\sum_{\ell\in\mathcal I_K}\lambda_{\ell,K}\widehat{\Theta}_{kk',\ell},
	\]
	and
	\[
	\widehat A_{kk',p}^{E}
	:=
	\frac{\sigma_K^2}{\sigma_k^2\sigma_{k'}^2}\,\widehat{\Omega}_{kk',p}^{E}
	+
	\left(
	\frac{\sigma_K^2\widehat{\beta}_k^{E}}{\sigma_k^4}
	+
	\widehat\tau_k\frac{\sigma_K^2}{\sigma_k^2}
	\right)\ind_{\{k=k'\}}.
	\]
	
	Let
	\[
	\widehat{\bm u}_p^{E}
	:=
	(\widehat u_{1,p}^{E},\dots,\widehat u_{K,p}^{E})^\top,
	\qquad
	\widehat{\bm A}_p^{E}
	:=
	(\widehat A_{kk',p}^{E})_{1\le k,k'\le K}.
	\]
	\begin{theorem}\label{thm:emp_weight_hetero}
		Under Assumptions~\ref{ass1}, \ref{ass2}, \ref{ass4}, \ref{ass3_hetero} and \ref{ass5_hetero}, suppose there exists a constant \(c_u>0\) such that
		$\|\bm u_p^{E}\|\ge c_u
		$,
		for all sufficiently large \(p\). Then
		\[
		\widehat{\bm u}_p^{E}-\bm u_p^{E}\to \bm 0,
		\qquad
		\widehat{\bm A}_p^{E}-\bm A_p^{E}\to \bm 0
		\qquad\text{in probability}.
		\]
		Moreover, with probability tending to one, \(\widehat{\bm A}_p^{E}\) is positive definite, and the empirical optimal weight vector
		\begin{align}\label{eq:emp_weight_hetero}
			\widehat{\bm w}_{p,E}^\ast
			=
			\frac{(\widehat{\bm A}_p^{E})^{-1}\widehat{\bm u}_p^{E}}
			{\sqrt{(\widehat{\bm u}_p^{E})^\top(\widehat{\bm A}_p^{E})^{-1}\widehat{\bm u}_p^{E}}}
		\end{align}
		satisfies
		\[
		\widehat{\bm w}_{p,E}^\ast-\bm w_{p,E}^\ast\to \bm 0
		\qquad\text{in probability}.
		\]
	\end{theorem}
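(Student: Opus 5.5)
The plan follows the homogeneous-case argument behind Theorem~\ref{thm:emp_weight_homo}, now domain by domain. First, I would compute the limits of the basic bilinear forms of the sample mean differences. Write \(\widehat{\bm\mu}_k=\bar{\bm\mu}+\bm\delta_k+\bm\epsilon_k\), where \(\bm\epsilon_k=\bm\Sigma_k^{1/2}(\bar{\bm z}_{k,+}-\bar{\bm z}_{k,-})\) has covariance \((1/n_{k,+}+1/n_{k,-})\bm\Sigma_k\), and \(\bm\delta_a,\bm\delta_b,\bm\epsilon_a,\bm\epsilon_b\) are mutually independent and centered for \(a\neq b\).

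For any matrix \(\bm M\) of bounded norm that is independent of \((\bm\delta,\bm\epsilon)\), or nearly so, standard quadratic-form concentration gives
\[
\widehat{\bm\mu}_a^\top\bm M\widehat{\bm\mu}_b=\bar{\bm\mu}^\top\bm M\bar{\bm\mu}+o_p(1),\qquad a\neq b,
\]
using \(\mathbb E|\delta_i|^4\le Cp^{-2}\) and finite fourth moments of \(Z\). This shows that \(\widehat{\mathcal A}_p(\bm M)\) estimates the pure shared-signal term. Similarly, \(\|\widehat{\bm\mu}_k\|^2=\|\bar{\bm\mu}\|^2+\alpha_k^2+\tr(\bm\Sigma_k)(1/n_{k,+}+1/n_{k,-})+o_p(1)\). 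Since \(\tr(\bm\Sigma_k)/p\to\sigma_k^2\), this gives \(\widehat\beta^E_k\to\alpha_k^2\) in probability.

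Second, I would handle the sample projections. \(\widehat{\bm M}_k\) and \(\widehat{\bm P}_{\ell,K}\) are built from \(\bm S_{n,k}\), which depends on the same noise as \(\widehat{\bm\mu}_k\). However, \(\bm S_{n,k}\) is computed from within-class centered data, and under Gaussianity it is independent of the class means. In general I would argue through a leave-out or asymptotic-decoupling step: perturbing the sample means changes the spike eigenvectors negligibly in the relevant bilinear forms. Alternatively, I would invoke the same mechanism the paper uses to prove Theorem~\ref{thm:err_hetero}, where these cross terms already appear.

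The key inputs are then the domainwise analogue of Theorem~\ref{thm:centered_overlap_prob} and its off-diagonal extension. For \(j\in\mathcal I_k\), \(\ell\in\mathcal I_{k'}\) with \(k\neq k'\), independence of the domains gives
\[
\bm\xi^\top\widehat{\bm P}_{j,k}\widehat{\bm P}_{\ell,k'}\bm\eta\to a_{j,k}a_{\ell,k'}\,\rho^{(k,k')}_{j\ell}\,(\bm\xi^\top\bm v_{j,k})(\bm v_{\ell,k'}^\top\bm\eta).
\]
Here the bias component orthogonal to the spike is delocalized and independent across domains. For \(\bm\xi=\bm\eta\), Theorem~\ref{thm:centered_overlap_prob} gives \(\bar{\bm\mu}^\top\widehat{\bm P}_{j,k}\bar{\bm\mu}\to a_{j,k}m_{j,k}^2\). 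Also \(\widehat a_{j,k}\to a_{j,k}\), because \(\widehat\gamma_k\to\gamma_k\).

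Third comes the algebraic matching. From the first two steps,
\[
\widehat{\mathcal A}_p(\widehat{\bm M}_k)\to\|\bar{\bm\mu}\|^2-\bm m_k^\top\bm D_k\bm m_k,
\]
which gives \(\widehat u^E_{k,p}-u^E_{k,p}\to0\). For \(\widehat{\mathcal A}_p(\widehat{\bm M}_k\widehat{\bm M}_{k'})\) I would expand \(\widehat{\bm M}_k\widehat{\bm M}_{k'}\). This produces \(\|\bar{\bm\mu}\|^2-\bm m_k^\top\bm D_k\bm m_k-\bm m_{k'}^\top\bm D_{k'}\bm m_{k'}+\bm m_k^\top\bm B_k\bm C_{kk'}\bm B_{k'}\bm m_{k'}\). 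When \(k=k'\), the idempotence \(\widehat{\bm P}^2=\widehat{\bm P}\) gives the diagonal form of \(\bm C_{kk}\).

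Each \(\widehat\Theta_{kk',\ell}\) is then checked to converge to \((\bm q_k)_\ell(\bm q_{k'})_\ell\). The four cases arise because, when \(k=K\), \(\widehat{\bm P}_{\ell,K}\) coincides with a projection inside \(\widehat{\bm M}_K\). The limit
\[
\widehat{\bm P}_{\ell,K}\widehat{\bm M}_K=(1-b_{\ell,K})\widehat{\bm P}_{\ell,K}+\text{cross terms}
\]
must be reconciled with \((\bm q_K)_\ell=m_{\ell,K}(1-b_{\ell,K}a_{\ell,K})\). The division by \(\widehat a_{\ell,K}\) undoes one attenuation factor. Summing \(\lambda_{\ell,K}\widehat\Theta_{kk',\ell}\) over \(\ell\) reproduces \(\bm q_k^\top\bm\Lambda_K\bm q_{k'}\). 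Together with \(\widehat\tau_k\to\tau_k\) and \(\widehat\beta^E_k\to\alpha_k^2\), this gives \(\widehat{\bm A}^E_p-\bm A^E_p\to\bm0\).

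Finally, I would derive the weight consistency. \(\bm A^E_p\) is uniformly positive definite: it is the Gram-type limit of \(\bm w^\top\bm A^E_p\bm w\propto\) the variance term of \(\widehat{\bm d}^E(\bm w)\), plus a diagonal part bounded below by \(\tau_k\sigma_K^2/\sigma_k^2>0\). The entries of \(\bm A^E_p\) are bounded. Hence \(\widehat{\bm A}^E_p\) is positive definite with probability tending to one. Because \(\|\bm u^E_p\|\ge c_u\), the quantity \((\bm u^E_p)^\top(\bm A^E_p)^{-1}\bm u^E_p\) is bounded away from zero, and the continuous mapping theorem yields \(\widehat{\bm w}^\ast_{p,E}-\bm w^\ast_{p,E}\to\bm0\).

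I expect the main obstacle to be the second step: the off-diagonal eigenvector limit, and in particular the terms with \(k=K\) where the same-domain projections appear twice. I would first verify these terms carefully using the resolvent representation of \(\widehat{\bm P}_{\ell,K}\).
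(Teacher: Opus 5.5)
Your proposal is correct and takes essentially the same route as the paper. It uses the cross-domain limits $\widehat{\bm\mu}_a^\top\bm M\widehat{\bm\mu}_b\to\bar{\bm\mu}^\top\bm M\bar{\bm\mu}$ for $a\neq b$, the projector limits of Lemma~\ref{lem:hetero_mixed_projectors}, and the four-case check $\widehat\Theta_{kk',\ell}\to(\bm q_k)_\ell(\bm q_{k'})_\ell$, finishing with Weyl's inequality and continuous mapping via the uniform positive definiteness of $\bm A_p^{E}$. The within-domain dependence you flag is handled in the paper by $e_k^\top\widehat{\bm v}_{j,k}\xrightarrow{p}0$ (Lemma~\ref{lem:hetero_mean_noise_vhat}) plus independence across domains, so your fallback is the right one; note also that $\widehat{\bm P}_{\ell,K}\widehat{\bm M}_K=(1-b_{\ell,K})\widehat{\bm P}_{\ell,K}$ holds exactly with no cross terms, which is why the $k=K$ cases of $\widehat\Theta$ omit $\widehat{\bm M}_K$ and use the factor $1-b_{\ell,K}\widehat a_{\ell,K}$ instead.
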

	\begin{Remark}
		This theorem is the heterogeneous analogue of Theorem~\ref{thm:emp_weight_homo}. Its consistency statement is likewise understood for known spectral parameters, or for consistent plug-in estimates of these parameters.
	\end{Remark}
	
	\section{Simulations}\label{sec:simulation}
	
	We conduct four sets of simulations to examine: the accuracy of the deterministic approximation, the performance of the proposed classifiers, robustness to spike-number misspecification, and the effect of intercept bias correction under class imbalance.
	
	Throughout the simulations, we set $K=3$, with domain $K$ being the target domain. Each reported empirical error rate is computed on an independently generated balanced target-domain test set with 1000 observations, consisting of 500 observations from each class. Each simulation result is averaged over 1000 replications.
	
	For each replication, the common signal is generated as
	$
	\bar{\bm\mu}\sim N\left(\bm 0,\bm I_p/p\right)
	$.
	The domain-specific deviations are generated according to Assumption~\ref{ass2} with
	$
	\alpha_1=0.2,
	\alpha_2=0.5,
	\alpha_3=1.
	$
	Then $\bm\mu_k=\bar{\bm\mu}+\bm\delta_k$, and the class mean vectors are set as
	$
	\bm\mu_{+1,k}=\bm\mu_k/2
	$
	and
	$
	\bm\mu_{-1,k}=-\bm\mu_k/2.
	$
	For the covariance matrices, the noise level, spike numbers, and spike strengths are set to be the same across domains. That is, for each domain $k$, we take
	$
	\sigma_k^2=\sigma^2=1,
	r_{k,+}=r_+=1,
	r_{k,-}=r_-=2,
	$
	and set
	$
	\lambda_{1,k}=\lambda_1=3,
	\lambda_{-2,k}=\lambda_{-2}=-0.8,
	\lambda_{-1,k}=\lambda_{-1}=-0.9 .
	$
	These values are used to generate the population covariance matrices. In implementing the classifiers, unless otherwise stated, the spike numbers, spike strengths, and noise variances are estimated from the data using the methods of \cite{bai2012Estimation,jiang2021Generalized}, rather than being set to their population values.
	\paragraph{Case 1: homogeneous covariance.}
	Generate an orthonormal family
	$
	\bm v_1,\bm v_{-2},\bm v_{-1}\in\mathbb R^p
	$
	by orthonormalizing three independent standard Gaussian vectors. All domains share
	$
	\bm\Sigma_1=\cdots=\bm\Sigma_K=\bm\Sigma
	$,
	where
	\[
	\bm\Sigma
	=
	\bm I_p
	+\lambda_1\bm v_1\bm v_1\trans
	+\lambda_{-2}\bm v_{-2}\bm v_{-2}\trans
	+\lambda_{-1}\bm v_{-1}\bm v_{-1}\trans .
	\]
	In this case, the proposed method uses the pooled covariance estimator and is denoted by TLDA-O.
	
	\paragraph{Case 2: heterogeneous covariance.}
	For each domain $k=1,\ldots,K$, generate an orthonormal family
	$
	\bm v_{1,k},\bm v_{-2,k},\bm v_{-1,k}\in\mathbb R^p
	$
	by orthonormalizing three independent standard Gaussian vectors. The domain-specific covariance matrix is
	\[
	\bm\Sigma_k
	=
	\bm I_p
	+\lambda_1\bm v_{1,k}\bm v_{1,k}\trans
	+\lambda_{-2}\bm v_{-2,k}\bm v_{-2,k}\trans
	+\lambda_{-1}\bm v_{-1,k}\bm v_{-1,k}\trans,
	\qquad
	k=1,\ldots,K.
	\]
	In this case, the proposed method uses domain-specific covariance estimators and is denoted by TLDA-E.
	
	\subsection{Deterministic approximation}
	We first examine the accuracy of the deterministic approximation to the Gaussian-calibrated error. Both Case 1 and Case 2 are considered. We take
	$
	p=20,40,\ldots,400
	$
	and set
	$
	p/n_k=0.5
	$
	for each domain, with balanced class sizes
	$
	n_{k,+}=n_{k,-}=n_k/2.
	$
	
	In this experiment, we use the fixed transfer weight
	$
	\bm\omega=(1,1,1)\trans .
	$
	For Case 1, the classifier is constructed by TLDA-O with this fixed weight. For Case 2, the classifier is constructed by TLDA-E with the same fixed weight. We compare the empirical target-domain test error with the deterministic approximation derived in the theory.
	
	\begin{figure}[!htbp]
		\centering
		\begin{minipage}{0.48\textwidth}
			\centering
			\includegraphics[page=1,width=\textwidth]{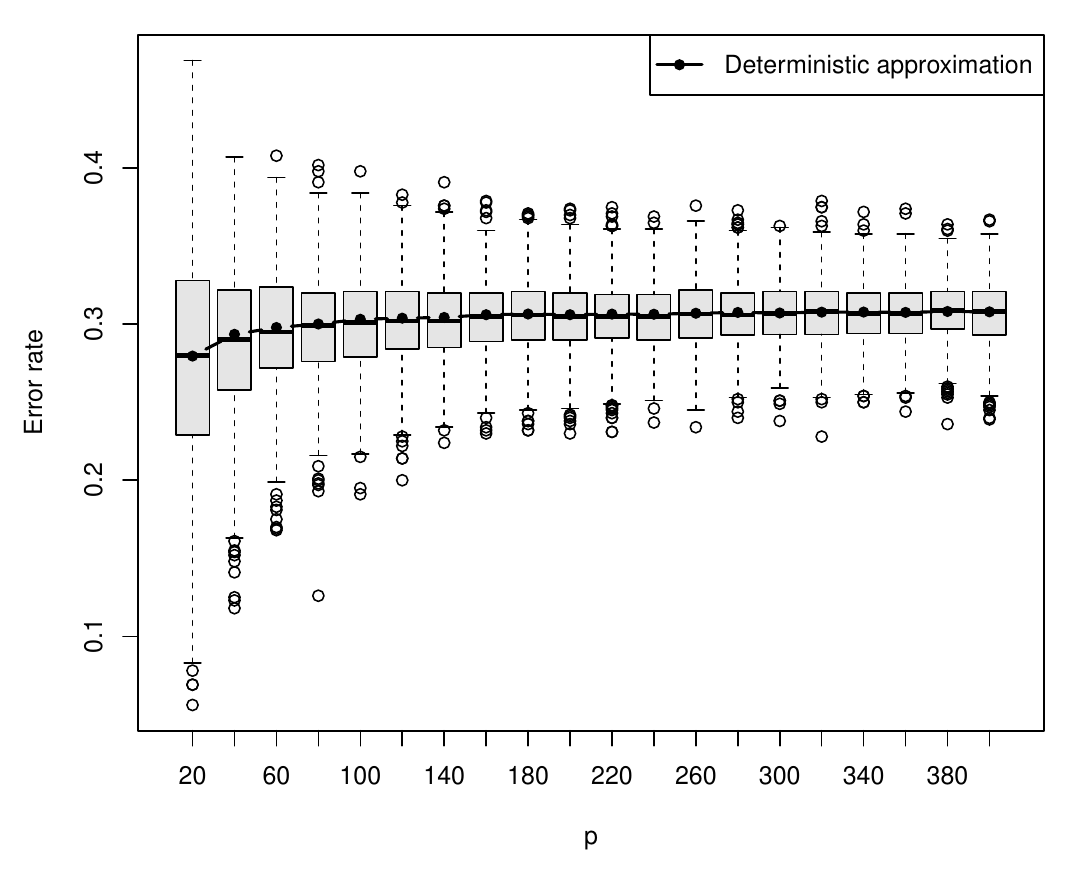}
			
			(a) Case 1: homogeneous covariance.
		\end{minipage}
		\hfill
		\begin{minipage}{0.48\textwidth}
			\centering
			\includegraphics[page=2,width=\textwidth]{boxplot_test_error_rate_by_p.pdf}
			
			(b) Case 2: heterogeneous covariance.
		\end{minipage}
		\caption{Empirical target-domain test error and its deterministic approximation.}
		\label{fig:deterministic_approximation}
	\end{figure}
	
	Figure~\ref{fig:deterministic_approximation} shows that the deterministic approximation closely tracks the empirical target-domain test error in both covariance settings. As $p$ increases, the empirical error becomes more concentrated around the theoretical curve, supporting the accuracy of the proposed deterministic equivalent.
	
	\subsection{Performance of classifiers}
	We next compare the proposed TLDA methods with baseline methods. In this experiment, we fix
	$
	p=100,
	$
	set the two source-domain sample sizes as
	$
	n_1=n_2=500,
	$
	and vary the target-domain sample size as
	$
	n_3=200,250,\ldots,500.
	$
	All domains have balanced class sizes.
	
	In Case 1, we compare LDA, pooled LDA, TLP-RDA, TLDA-O with equal weights, and TLDA-O with estimated optimal weights. In Case 2, we compare LDA, pooled LDA, TL-RDA, TLDA-E with equal weights, and TLDA-E with estimated optimal weights.
	For the optimal-weight TLDA methods, we use the plug-in estimators based on Theorem~\ref{thm:emp_weight_homo} and Theorem~\ref{thm:emp_weight_hetero}.
	
	\begin{figure}[!htbp]
		\centering
		\begin{minipage}{0.48\textwidth}
			\centering
			\includegraphics[page=1,width=\textwidth]{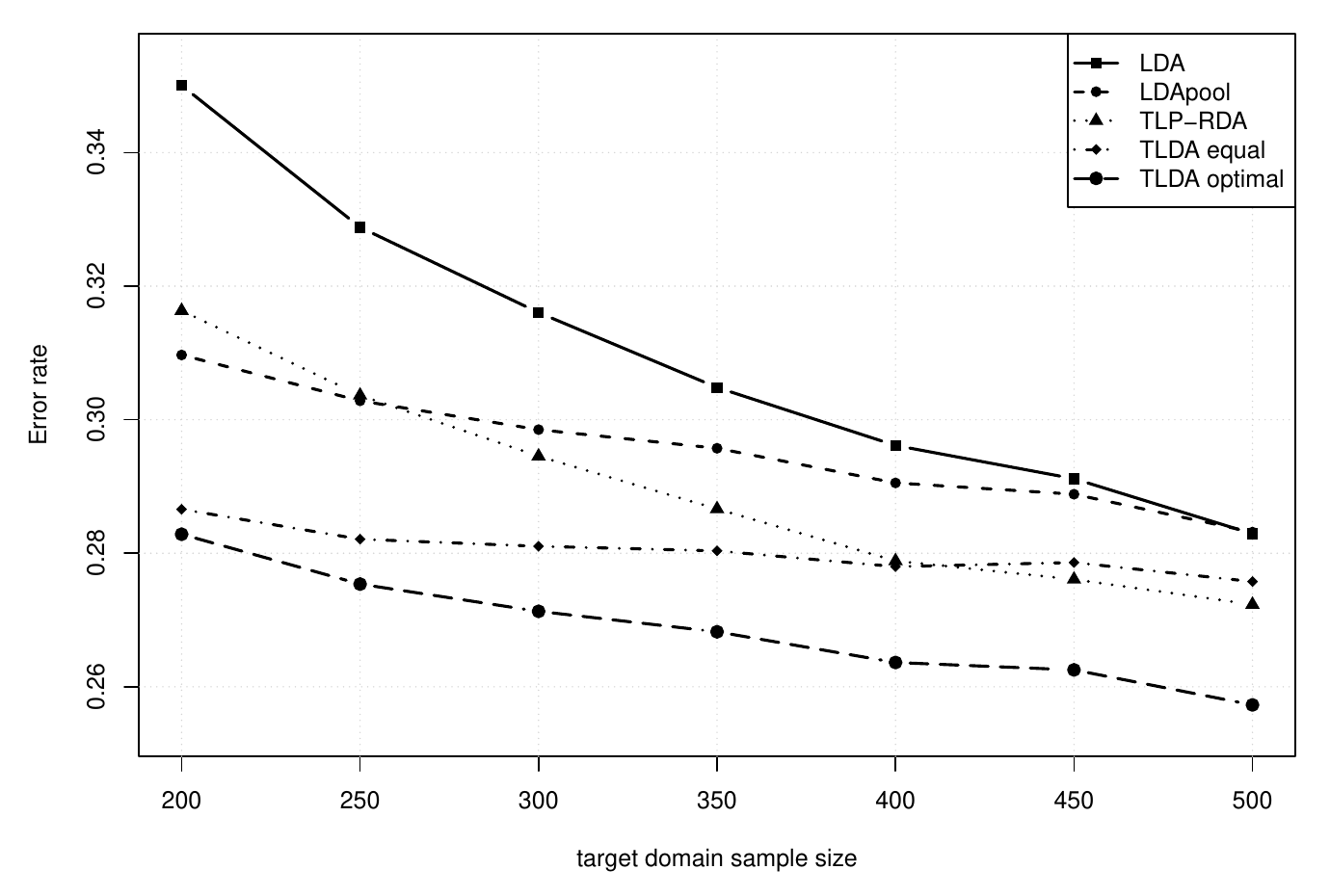}
			
			(a) Case 1: homogeneous covariance.
		\end{minipage}
		\hfill
		\begin{minipage}{0.48\textwidth}
			\centering
			\includegraphics[page=2,width=\textwidth]{lineplot_all_methods_mean_test_error_rate.pdf}
			
			(b) Case 2: heterogeneous covariance.
		\end{minipage}
		\caption{Mean target-domain test error of different classifiers as the target-domain sample size varies.}
		\label{fig:classifier_performance}
	\end{figure}
	
	Figure~\ref{fig:classifier_performance} reports the mean test error of all competing methods. In both covariance settings, the proposed TLDA method with estimated optimal weights gives the lowest error across the target sample sizes. Compared with the equal-weight version, the optimal-weight version achieves a clear improvement, showing the benefit of selecting transfer weights according to the proposed asymptotic criterion.
	
	\subsection{Robustness to spike-number misspecification}
	
	We further examine the stability of the proposed methods when the number of spikes is misspecified. The simulation setting follows that in the previous subsection. The data are generated from the true spike configuration
	$
	(r_+,r_-)=(1,2)
	$,
	while in constructing the TLDA classifiers, we deliberately use several prescribed spike numbers
	\[
	(r_+^\ast,r_-^\ast)
	\in
	\{(1,2),(0,2),(1,1),(2,2),(1,3)\}.
	\]
	Here $(1,2)$ corresponds to the correct specification, $(0,2)$ and $(1,1)$ represent under-estimation, and $(2,2)$ and $(1,3)$ represent over-estimation. For each prescribed spike number, the corresponding spike eigenvalues and eigenvectors are still estimated from the selected sample eigenpairs. In Case 1, we compare TLDA-O with TLP-RDA; in Case 2, we compare TLDA-E with TL-RDA.
	
	\begin{figure}[!htbp]
		\centering
		\begin{minipage}{0.48\textwidth}
			\centering
			\includegraphics[page=1,width=\textwidth]{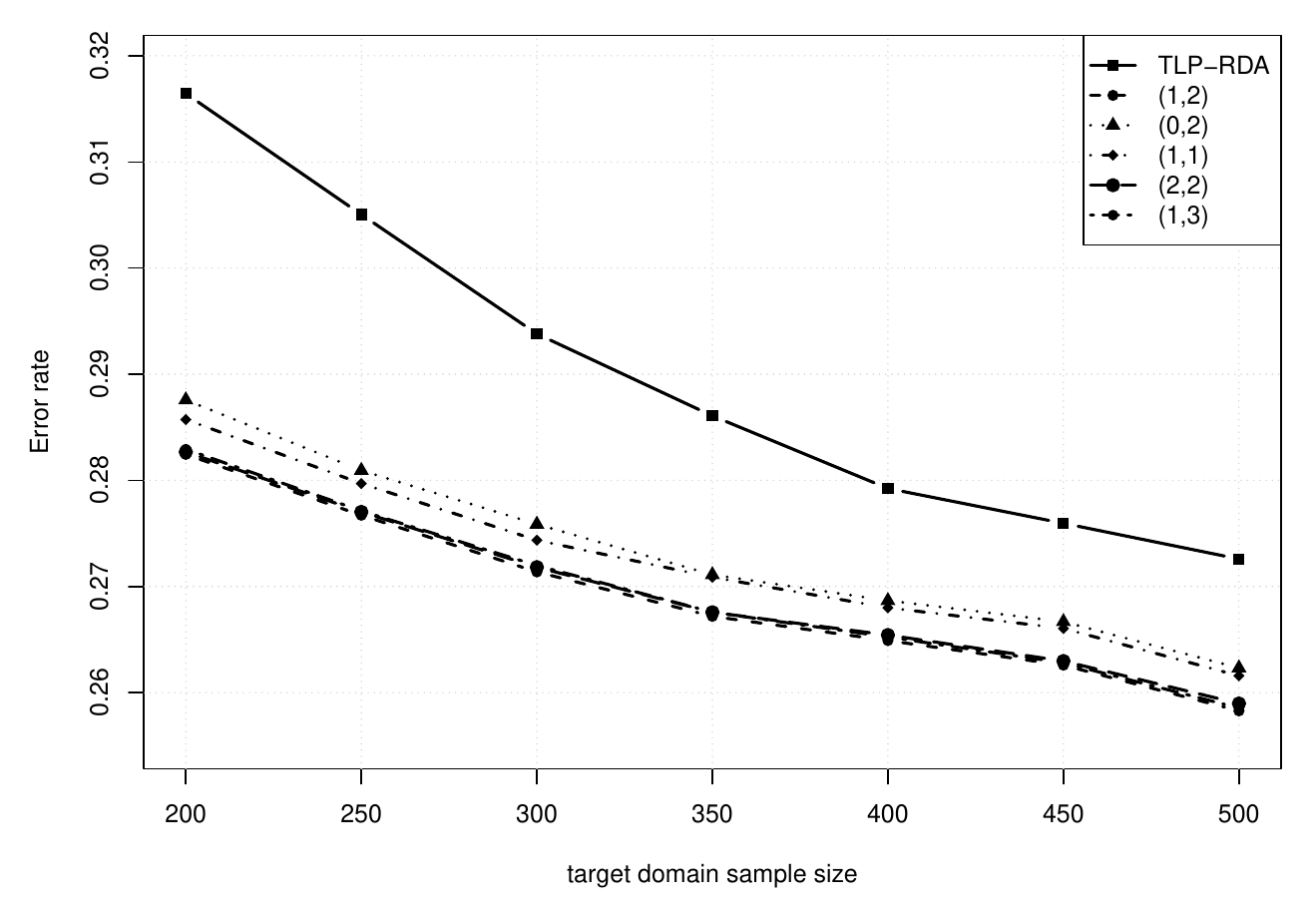}
			
			(a) Case 1: homogeneous covariance.
		\end{minipage}
		\hfill
		\begin{minipage}{0.48\textwidth}
			\centering
			\includegraphics[page=2,width=\textwidth]{lineplot_spike_misspec_stability.pdf}
			
			(b) Case 2: heterogeneous covariance.
		\end{minipage}
		\caption{Sensitivity of TLDA-O and TLDA-E to spike-number misspecification. The true spike numbers are $(r_+,r_-)=(1,2)$, while the classifiers use prescribed values $(r_+^\ast,r_-^\ast)$.}
		\label{fig:spike_misspec_stability}
	\end{figure}
	
	Figure~\ref{fig:spike_misspec_stability} shows that the proposed methods are stable under moderate spike-number misspecification. In both covariance settings, the curves corresponding to different prescribed spike numbers remain close to the correctly specified case $(1,2)$. Moreover, the proposed TLDA classifiers continue to outperform the corresponding transfer RDA baselines. This suggests that the classification performance is not overly sensitive to small errors in spike-number selection.
	
	\subsection{Bias correction}
	Finally, we study the intercept bias correction under class imbalance. Both Case 1 and Case 2 are considered. We fix
	$
	p=100
	$
	and set
	$
	n_k=200,
	k=1,\ldots,K.
	$
	For each domain, the positive-class sample size varies as
	$
	n_{k,+}=30,40,\ldots,170,
	$
	and the negative-class sample size is
	$
	n_{k,-}=200-n_{k,+}.
	$
	
	In this experiment, the discriminant direction is constructed using the estimated optimal TLDA weights. We compare three choices of intercept while keeping the discriminant direction fixed.
	The first one is the uncorrected intercept. The second one is the theoretically corrected intercept, obtained by Proposition~\ref{prop:optimal_intercept_homo} and Proposition~\ref{prop:optimal_intercept_hetero}. The third one is an oracle intercept, obtained by optimizing an intercept correction on the target-domain test set, and is used only as a benchmark.
	
	\begin{figure}[!htbp]
		\centering
		\begin{minipage}{0.48\textwidth}
			\centering
			\includegraphics[page=1,width=\textwidth]{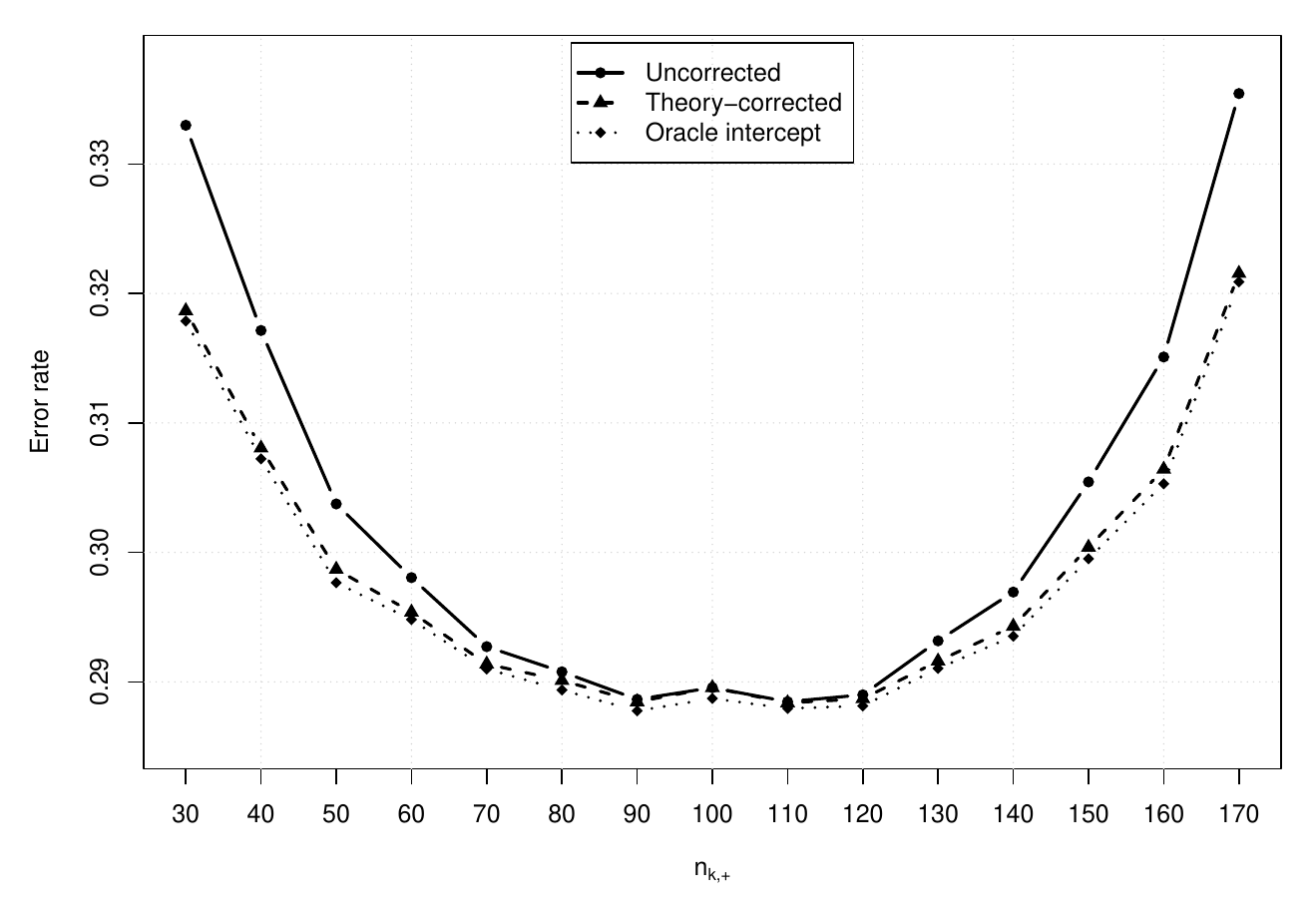}
			
			(a) Case 1: homogeneous covariance.
		\end{minipage}
		\hfill
		\begin{minipage}{0.48\textwidth}
			\centering
			\includegraphics[page=2,width=\textwidth]{lineplot_bias_correction_test1000_gd.pdf}
			
			(b) Case 2: heterogeneous covariance.
		\end{minipage}
		\caption{Effect of intercept bias correction under class imbalance.}
		\label{fig:bias_correction}
	\end{figure}
	
	Figure~\ref{fig:bias_correction} shows that the uncorrected intercept suffers from a clear increase in test error when the class sizes are highly unbalanced. The theoretically corrected intercept substantially reduces this error in highly unbalanced cases and is close to the oracle intercept selected on the test set. Around the balanced case $n_{k,+}=100$, all three intercepts perform similarly, as expected since the bias term is close to zero.
	
	\section{Experiments}\label{sec:experiments}
	We evaluate the proposed methods on two real multi-site biomedical datasets: ADHD-200 and the Parkinson's Progression Markers Initiative (PPMI). In both datasets, each site is treated as one domain. We select one site as the target domain in turn and use the remaining selected sites as source domains. The proposed TLDA-E and TLDA-O methods are compared with LDA, pooled LDA, and the transfer regularized discriminant analysis methods of \cite{zhang2025Transfera}. Specifically, LDA uses only the target-domain training data, while LDA-POOL pools the training data from all domains. When the sample covariance matrix in LDA is singular, a small ridge regularization term is added for numerical stability.
	
	\subsection{ADHD-200 dataset}
	We first evaluate the proposed methods on the ADHD-200 multi-site resting-state functional magnetic resonance imaging dataset. ADHD-200 is a publicly available neuroimaging dataset collected from 8 independent imaging sites, comprising 973 subjects in total, including 585 typically developing controls, 362 ADHD patients, and 26 subjects with missing diagnostic labels. Due to differences across sites in scanning equipment, acquisition parameters, and sample composition, ADHD-200 exhibits substantial multi-site heterogeneity. After removing duplicate and incomplete samples and aligning the phenotype information with the imaging features, we selected 4 sites comprising a total of 310 subjects for analysis. Each selected site was treated as the target domain in turn, while the remaining selected sites were used as source domains.
	
	For each target site, 20\% of the subjects are randomly assigned to the test set, and the remaining subjects are used as the training set. 
	Feature screening is conducted exclusively on the training data, using a two-sample \(t\)-test. 
	The classification accuracy of different methods is then compared under feature dimensions \(p=60,70,80,90,100\). All results are reported as averages across 50 random training-test splits, where $n$ represents the number of target-domain training samples.
	\begin{table}[!htbp]
		\centering
		\caption{Classification accuracy on the ADHD-200 dataset.}
		\label{tab:adhd_results}
		\small
		\setlength{\tabcolsep}{6pt}
		\renewcommand{\arraystretch}{1.12}
		\begin{tabular}{cccccccc}
			\toprule
			Target site & ~~~~\(p\)~~~~ & LDA & LDA-POOL & TL-RDA & TLP-RDA & TLDA-E & TLDA-O \\
			\midrule
			\multirow{5}{*}{\shortstack{site 1\\~\\(n=115)}}
			& 60 & 0.693 & 0.684 & 0.624 & 0.622 & 0.732 & \textbf{0.741} \\
			& 70 & 0.673 & 0.676 & 0.529 & 0.539 & 0.721 & \textbf{0.744} \\
			& 80 & 0.674 & 0.689 & 0.565 & 0.527 & 0.740 & \textbf{0.766} \\
			& 90 & 0.656 & 0.665 & 0.590 & 0.516 & 0.719 & \textbf{0.759} \\
			& 100 & 0.645 & 0.684 & 0.584 & 0.535 & 0.715 & \textbf{0.756} \\
			\midrule
			\multirow{5}{*}{\shortstack{site 2\\~\\(n=60)}}
			& 60 & 0.520 & 0.583 & 0.639 & 0.599 & \textbf{0.640} & 0.557 \\
			& 70 & 0.544 & 0.586 & 0.570 & 0.527 & \textbf{0.674} & 0.557 \\
			& 80 & 0.530 & 0.564 & 0.541 & 0.510 & \textbf{0.643} & 0.577 \\
			& 90 & 0.526 & 0.549 & 0.519 & 0.539 & \textbf{0.627} & 0.560 \\
			& 100 & 0.590 & 0.593 & 0.496 & 0.559 & \textbf{0.639} & 0.594 \\
			\midrule
			\multirow{5}{*}{\shortstack{site 3\\~\\(n=28)}}
			& 60 & 0.633 & 0.667 & 0.410 & 0.463 & 0.750 & \textbf{0.753} \\
			& 70 & 0.603 & 0.613 & 0.457 & 0.567 & \textbf{0.763} & 0.680 \\
			& 80 & 0.607 & 0.567 & 0.440 & 0.597 & 0.673 & \textbf{0.697} \\
			& 90 & 0.543 & 0.573 & 0.453 & 0.623 & \textbf{0.693} & 0.683 \\
			& 100 & 0.577 & 0.613 & 0.423 & 0.600 & \textbf{0.710} & 0.680 \\
			\midrule
			\multirow{5}{*}{\shortstack{site 4\\~\\(n=47)}}
			& 60 & 0.507 & 0.565 & 0.522 & 0.553 & \textbf{0.600} & 0.545 \\
			& 70 & 0.551 & 0.575 & 0.533 & 0.478 & \textbf{0.607} & 0.558 \\
			& 80 & 0.585 & 0.547 & 0.507 & 0.473 & \textbf{0.591} & 0.538 \\
			& 90 & 0.564 & 0.500 & 0.525 & 0.535 & \textbf{0.625} & 0.536 \\
			& 100 & 0.553 & 0.558 & 0.493 & 0.520 & \textbf{0.591} & 0.535 \\
			\bottomrule
		\end{tabular}
	\end{table}
	
	Table~\ref{tab:adhd_results} reports the classification accuracy on the ADHD-200 dataset. The proposed TLDA methods achieve the best performance in all reported settings. TLDA-O is consistently best for site 1, while TLDA-E performs better for site 2 and site 4. For site 3, both TLDA methods outperform the baseline methods, with the better method depending on the feature dimension. These results indicate that transfer learning is effective on this multi-site neuroimaging dataset, and that the relative advantage of TLDA-O and TLDA-E depends on the target site.
	
	\subsection{PPMI dataset}
	We next evaluate the proposed methods on the Parkinson’s Progression Markers Initiative (PPMI) dataset. PPMI is a publicly available multi-center dataset for Parkinson’s disease research, covering multimodal information such as clinical data, imaging data, genetic information, biospecimens, and biomarkers. It is mainly used to study the onset and progression of Parkinson’s disease. In the binary classification task, Parkinson’s disease patients were treated as the positive class, while healthy controls and prodromal subjects were combined as the negative class. After removing duplicate and incomplete samples and matching and aligning the features, we selected the 4 acquisition sites with the largest sample sizes, comprising a total of 612 subjects, for analysis. Each selected site was treated as the target domain in turn, while the remaining selected sites were used as source domains.
	
	For the target site, approximately 20\% of the subjects are held out as the test set, and the remaining subjects are used as the training set. 
	Feature screening is conducted exclusively on the training data using a two-sample \(t\)-test. 
	The classification accuracy of different methods is then compared under feature dimensions \(p=160,170,180,190,200\). All results are reported as averages across 50 random training-test splits, where $n$ represents the number of target-domain training samples.
	
	\begin{table}[!htbp]
		\centering
		\caption{Classification accuracy on the PPMI dataset.}
		\label{tab:ppmi_results}
		\small
		\setlength{\tabcolsep}{6pt}
		\renewcommand{\arraystretch}{1.12}
		\begin{tabular}{cccccccc}
			\toprule
			Target site & ~~~~\(p\)~~~~ & LDA & LDA-POOL & TL-RDA & TLP-RDA & TLDA-E & TLDA-O \\
			\midrule
			\multirow{5}{*}{\shortstack{site 1\\~\\(n=218)}}
			& 160 & 0.582 & 0.619 & 0.518 & 0.483 & 0.500 & \textbf{0.629} \\
			& 170 & 0.600 & 0.613 & 0.502 & 0.487 & 0.497 & \textbf{0.642} \\
			& 180 & 0.586 & 0.600 & 0.524 & 0.505 & 0.478 & \textbf{0.643} \\
			& 190 & 0.600 & 0.592 & 0.518 & 0.511 & 0.490 & \textbf{0.635} \\
			& 200 & 0.573 & 0.594 & 0.507 & 0.503 & 0.492 & \textbf{0.624} \\
			\midrule
			\multirow{5}{*}{\shortstack{site 2\\~\\(n=122)}}
			& 160 & 0.584 & 0.605 & 0.503 & 0.502 & \textbf{0.733} & 0.689 \\
			& 170 & 0.567 & 0.607 & 0.519 & 0.515 & \textbf{0.736} & 0.678 \\
			& 180 & 0.592 & 0.595 & 0.515 & 0.491 & \textbf{0.735} & 0.675 \\
			& 190 & 0.599 & 0.599 & 0.513 & 0.502 & \textbf{0.732} & 0.674 \\
			& 200 & 0.594 & 0.593 & 0.509 & 0.473 & \textbf{0.736} & 0.667 \\
			\midrule
			\multirow{5}{*}{\shortstack{site 3\\~\\(n=98)}}
			& 160 & 0.570 & 0.607 & 0.519 & 0.524 & \textbf{0.696} & 0.644 \\
			& 170 & 0.552 & 0.589 & 0.506 & 0.481 & \textbf{0.688} & 0.653 \\
			& 180 & 0.545 & 0.585 & 0.490 & 0.490 & \textbf{0.702} & 0.668 \\
			& 190 & 0.547 & 0.566 & 0.493 & 0.487 & \textbf{0.698} & 0.668 \\
			& 200 & 0.554 & 0.579 & 0.483 & 0.502 & \textbf{0.696} & 0.665 \\
			\midrule
			\multirow{5}{*}{\shortstack{site 4\\~\\(n=53)}}
			& 160 & 0.536 & 0.544 & 0.540 & 0.540 & 0.525 & \textbf{0.676} \\
			& 170 & 0.493 & 0.487 & 0.498 & 0.482 & 0.498 & \textbf{0.665} \\
			& 180 & 0.529 & 0.518 & 0.513 & 0.522 & 0.496 & \textbf{0.649} \\
			& 190 & 0.555 & 0.529 & 0.547 & 0.564 & 0.509 & \textbf{0.647} \\
			& 200 & 0.516 & 0.520 & 0.516 & 0.516 & 0.498 & \textbf{0.640} \\
			\bottomrule
		\end{tabular}
	\end{table}
	Table~\ref{tab:ppmi_results} reports the classification accuracy on the PPMI dataset. The proposed TLDA methods again achieve the best performance in all reported settings. TLDA-O performs best for site 1 and site 4, whereas TLDA-E gives the highest accuracy for site 2 and site 3. This site-dependent pattern suggests that the homogeneous and heterogeneous covariance versions capture different aspects of cross-site transfer.

	\section{Conclusion}\label{sec:conclusion}
	
	This paper develops a transfer learning framework for high-dimensional LDA with a shared classification signal. By decomposing the mean difference in each domain into a deterministic common component and a domain-specific random deviation, the proposed model separates transferable discriminative information from domain heterogeneity.
	
	Under spiked covariance models, we derive deterministic limits for the target-domain Gaussian-calibrated error under both homogeneous and heterogeneous covariance settings. The resulting formulas lead to oracle transfer weights and consistent plug-in estimators. We also identify the intercept bias caused by unbalanced target-domain class sample sizes and provide an asymptotically optimal correction. The empirical studies on the ADHD-200 and PPMI datasets show that the proposed TLDA methods achieve the best accuracy in all reported settings, with TLDA-O and TLDA-E showing complementary site-dependent advantages.
	
	There are several possible directions for future work. 
	First, the present model focuses on linear discriminant rules, while nonlinear classification methods may be more suitable for some complex data. Second, it would be interesting to study how to select useful source domains automatically when some source domains are weakly related to the target domain. Finally, future work may consider more general types of domain heterogeneity beyond the spiked covariance models studied in this paper.
	\newpage
	\bibliographystyle{apalike}
	\bibliography{TLDA}
	\newpage
	\appendix	
	\section{Appendix}
	\subsection{Proof of Theorem \ref{thm:centered_overlap_prob}}
	Throughout the proof of Theorem~\ref{thm:centered_overlap_prob}, we take \(\sigma^2=1\) without loss of generality, since multiplying the covariance matrix by a positive scalar does not change its eigenvectors or spectral projectors. Let
	$\bm Z=[\bm Z_1,\dots,\bm Z_K]^{\top}\in\mathbb R^{n\times p}$ denote the stacked whitened data matrix. We first present the following lemmas.
	\begin{lemma}
		\label{thm:uncentered_overlap_target_short}
		Suppose that Assumptions \ref{ass1}--\ref{ass5} hold. Define the uncentered pooled sample covariance matrix
		\[
		\widetilde{\bm S}_n^{\,e}
		:=
		\frac{1}{n}\bm\Sigma^{1/2}\bm Z^{\top}\bm Z\bm\Sigma^{1/2}.
		\]
		For each $j\in\mathcal I$, let $\widetilde\ell_j$ be the sample eigenvalue of
		$\widetilde{\bm S}_n^{\,e}$ associated with the population spike $\lambda_{j}$, and let
		$\widetilde{\bm v}_j$ be a corresponding eigenvector. Then, for every deterministic unit vector $\bm\xi\in\mathbb R^p$,
		\[
		\bm\xi^{\top}\bm{\widetilde v}_{j}\bm{\widetilde v}_j^{\top}\bm\xi
		\;\xrightarrow{\mathrm{a.s.}}\;
		\frac{\lambda_{j}^2-\gamma}{\lambda_{j}(\lambda_{j}+\gamma)}
		\,
		\bm\xi^{\top}\bm v_{j}\bm v_{j}^{\top}\bm\xi,
		\qquad j\in\mathcal I.
		\]
	\end{lemma}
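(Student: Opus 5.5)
The plan is to reduce the lemma to the standard spiked sample covariance model $\frac1n\bm\Sigma^{1/2}\bm Z^\top\bm Z\bm\Sigma^{1/2}$ with i.i.d. entries of finite fourth moment, and to use the resolvent/contour-integral representation of the spectral projector. Since $\bm Z$ stacks the $K$ independent blocks, its $n\times p$ entries are i.i.d. standardized with finite fourth moment. Hence $\widetilde{\bm S}_n^{\,e}$ is exactly a generalized spiked population model with ratio $p/n\to\gamma$, and the pooling plays no further role. With $\sigma^2=1$, we write $\bm\Sigma=\bm I_p+\bm V\bm\Lambda\bm V^\top$, where $\bm V=(\bm v_j)_{j\in\mathcal I}$.

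\emph{Step 1 (outlier localization).} First we establish the outlier locations. For $\lambda_j>\sqrt\gamma$, the eigenvalue $\widetilde\ell_j$ converges a.s. to $\phi(\lambda_j)=(1+\lambda_j)(1+\gamma/\lambda_j)$. For $-1<\lambda_j<-\sqrt\gamma$ with $\gamma<1$, it converges to the same expression, which lies below $(1-\sqrt\gamma)^2$. The remaining eigenvalues stick to the MP bulk. All of this follows from Bai--Yao / Baik--Silverstein results under finite fourth moments; these results include the no-eigenvalue-outside-support result of Bai--Silverstein, which requires only a fourth moment. Because the $\lambda_j$ are distinct and fixed, each outlier is separated from the others and from the bulk. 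This lets us choose a small contour $\mathcal C_j$ enclosing only $\phi(\lambda_j)$, a.s. for large $p$.

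\emph{Step 2 (projector via low-rank resolvent identity).} Next we write $\bm\xi^\top\widetilde{\bm v}_j\widetilde{\bm v}_j^\top\bm\xi=-\frac1{2\pi i}\oint_{\mathcal C_j}\bm\xi^\top(\widetilde{\bm S}_n^{\,e}-z)^{-1}\bm\xi\,dz$. Setting $\bm S_0=\frac1n\bm Z^\top\bm Z$, we get $(\widetilde{\bm S}_n^{\,e}-z)^{-1}=\bm\Sigma^{-1/2}(\bm S_0-z\bm\Sigma^{-1})^{-1}\bm\Sigma^{-1/2}$. Moreover, $\bm\Sigma^{-1}=\bm I-\bm V\bm B\bm V^\top$ with $\bm B=\diag(b_j)$. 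The Woodbury formula then expresses $(\bm S_0-z\bm\Sigma^{-1})^{-1}$ through $\bm G(z)=(\bm S_0-z)^{-1}$ and the $r\times r$ matrix $\bm B^{-1}/z+\bm V^\top\bm G(z)\bm V$. The key input is the isotropic local law, or simply the anisotropic Marchenko--Pastur quadratic-form convergence: for deterministic unit vectors $\bm x,\bm y$, $\bm x^\top\bm G(z)\bm y-m(z)\bm x^\top\bm y\to0$ a.s., uniformly on $\mathcal C_j$. Here $m$ is the MP Stieltjes transform. Under finite fourth moments, this is available after a standard truncation at level $\eta_p\sqrt n$ and recentering (the truncated matrix differs from the original with probability zero eventually). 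Applying it to $\bm x,\bm y\in\{\bm\xi,\bm v_j\}$ reduces the integrand to a deterministic rational function of $m(z)$ plus a uniform $o(1)$. Cauchy's formula then yields the residue at $\phi(\lambda_j)$, of the form $\frac{m(\phi)^2\cdots}{(\cdot)'}\,(\bm\xi^\top\bm v_j)^2$. After using the defining equation of $m$ at $z=\phi(\lambda_j)$, this simplifies to $\frac{\lambda_j^2-\gamma}{\lambda_j(\lambda_j+\gamma)}(\bm\xi^\top\bm v_j)^2$. One also checks that the contributions of the other spikes and of the $\bm\Sigma^{-1/2}$ factors either vanish or are analytic inside $\mathcal C_j$. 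The $\bm\Sigma^{-1/2}$ factors change $\bm\xi$ only by components along $\bm v_j$, and these combine into the stated factor; this bookkeeping is where the $1+\lambda_j$ terms cancel.

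\emph{Main obstacle.} We expect the hardest step to be the lower spikes $\lambda_j\in(-1,-\sqrt\gamma)$. There the contour sits in the gap $(0,(1-\sqrt\gamma)^2)$, and we need uniform control of $\bm G(z)$ near the left edge. We also need sign care, since $b_j<0$ and $\lambda_j+\gamma$ may be small or negative. For this we would rely on the convergence of the smallest eigenvalue of $\bm S_0$ to $(1-\sqrt\gamma)^2$ when $\gamma<1$ (Bai--Yin, fourth moment). This keeps $\mathcal C_j$ a.s. at positive distance from the spectrum, so the same uniform argument applies. Finally, we would verify that the limiting quantity is positive by checking that $\lambda_j^2>\gamma$ and $\lambda_j(\lambda_j+\gamma)>0$ in both regimes.
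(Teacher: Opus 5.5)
Your proposal is correct and follows essentially the same route as the paper. Both arguments use the contour representation of the spectral projector and the identity $(\widetilde{\bm S}_n^{\,e}-z\bm I_p)^{-1}=\bm\Sigma^{-1/2}(\bm S_0-z\bm\Sigma^{-1})^{-1}\bm\Sigma^{-1/2}$ with a Woodbury reduction to an $r\times r$ matrix. They also share almost-sure isotropic convergence of the white resolvent to $m(z)$ (the paper cites Bai--Miao--Pan pointwise and upgrades to uniformity on $\mathcal C_j$ by a Lipschitz/finite-net argument using Baik--Silverstein and Bai--Yin localization), followed by a residue computation at $\phi(\lambda_j)$.

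The differences are cosmetic: the paper substitutes $z=\vartheta(\zeta)$ before taking the residue. One point to tighten: the quadratic-form limit must be applied to $\bm\Sigma^{-1/2}\bm\xi$ and to all $\bm v_l$, $l\in\mathcal I$, not just to $\{\bm\xi,\bm v_j\}$, since $\bm V^\top\bm G(z)\bm V$ involves every spike direction. This is exactly what the paper does.
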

	
	\begin{proof}
		Fix $j\in\mathcal I$. Let
		\[
		\bm B:=\frac{1}{n}\bm Z^{\top}\bm Z,
		\qquad
		\bm R(z):=(\bm B-z\bm I_p)^{-1},
		\qquad
		\widetilde{\bm G}(z):=(\widetilde{\bm S}_n^{\,e}-z\bm I_p)^{-1}.
		\]
		Set
		\[
		\gamma_-:=(1-\sqrt\gamma)^2,
		\qquad
		\gamma_+:=(1+\sqrt\gamma)^2,
		\]
		and define
		\[
		\mathfrak S_\gamma
		:=
		\begin{cases}
			[\gamma_-,\gamma_+], & 0<\gamma\le 1,\\[1mm]
			\{0\}\cup[\gamma_-,\gamma_+], & \gamma>1.
		\end{cases}
		\]
		Also write
		\[
		\bm V:=(\bm v_{l})_{l\in\mathcal I}\in\mathbb R^{p\times r},
		\qquad
		\bm D:=\operatorname{diag}\!\left(\frac{\lambda_l}{1+\lambda_l}\right)_{l\in\mathcal I}.
		\]
		Since
		\[
		\bm \Sigma^{-1}=\bm I_p-\bm V\bm D\bm V^{\top},
		\]
		we have
		\[
		\widetilde{\bm S}_n^{\,e}-z\bm I_p
		=
		\bm\Sigma^{1/2}(\bm B-z\bm \Sigma^{-1})\bm \Sigma^{1/2}
		=
		\bm\Sigma^{1/2}\bigl(\bm R(z)^{-1}+z\bm V\bm D\bm V^{\top}\bigr)\bm\Sigma^{1/2}.
		\]
		Hence
		\[
		\bm{\widetilde G}(z)
		=
		\bm \Sigma^{-1/2}
		\bigl(\bm R(z)^{-1}+z\bm V\bm D\bm V^{\top}\bigr)^{-1}
		\bm\Sigma^{-1/2}.
		\]
		
		Choose $\rho_j>0$ such that
		\[
		0<\rho_j<
		\frac12\min\Bigl\{
		\min_{l\in\mathcal I,l\neq j}|\lambda_{j}-\lambda_l|,
		|\lambda_{j}-\sqrt{\gamma}|,
		|\lambda_{j}+\sqrt{\gamma}|
		\Bigr\},
		\]
		and define
		\[
		\Gamma_j:=\{\zeta\in\mathbb C:|\zeta-\lambda_{j}|=\rho_j\},
		\qquad
		\vartheta(\zeta):=1+\zeta+\gamma+\frac{\gamma}{\zeta},
		\qquad
		\mathcal C_j:=\vartheta(\Gamma_j).
		\]
		By \cite{baik2006Eigenvalues},
		$\widetilde\ell_{j}\to \vartheta(\lambda_{j})$ almost surely, and $\mathcal C_j$
		encloses exactly the eigenvalue $\widetilde\ell_{j}$ of $\widetilde{\bm S}_n^{\,e}$ for all
		sufficiently large $p$. Therefore, define
		$
		\bm{\widetilde P}_{j}:=\widetilde{\bm v}_{j}\bm{\widetilde v}_{j}^{\top}
		$,
		then
		\[
		\bm{\widetilde P}_{j}
		=
		-\frac{1}{2\pi i}\oint_{\mathcal C_j}\bm{\widetilde G}(z)\,dz
		\]
		for all sufficiently large $p$, and hence
		\[
		\bm\xi^{\top}\bm{\widetilde P}_{j}\bm\xi
		=
		-\frac{1}{2\pi i}\oint_{\mathcal C_j}\bm\xi^{\top}\bm{\widetilde G}(z)\bm\xi\,dz.
		\]
		
		Define
		$
		\widetilde{\bm\xi}:=\bm\Sigma^{-1/2}\bm\xi.
		$
		By the matrix inversion lemma,
		\[
		\bm\xi^{\top}\bm{\widetilde G}(z)\bm\xi
		=
		\widetilde{\bm\xi}^{\top}\bm R(z)\widetilde{\bm\xi}
		-
		z\,\widetilde{\bm\xi}^{\top}\bm R(z)\bm V
		\bigl(\bm D^{-1}+z\bm V^{\top}\bm R(z)\bm V\bigr)^{-1}
		\bm V^{\top}\bm R(z)\widetilde{\bm\xi}.
		\]
		We next replace the random resolvent blocks by deterministic equivalents.
		
		Since the entries of $\bm Z$ are i.i.d. with mean zero, variance one, and finite fourth moment, Theorem~1 of \cite{bai2007Asymptotics} yields that for every fixed
		$
		z\in\mathbb C\setminus \mathfrak S_\gamma
		$
		and every deterministic unit vector $\bm\xi$,
		\begin{align}\label{eqR}
			\bm\xi^{\top}\bm R(z)\bm\xi-m_{\gamma}(z)\to 0
			\qquad\text{a.s.},
		\end{align}
		where $m_{\gamma}(z)$ is the Stieltjes transform of the Marchenko--Pastur law with
		ratio $\gamma$. To justify the uniform convergence on $\mathcal C_j$, fix any two deterministic vectors
		$\bm a,\bm b$ in the finite set
		\[
		\mathcal A:=\{\widetilde{\bm\xi}\}\cup\{\bm v_l:l\in\mathcal I\}.
		\]
		Define
		\[
		\Delta_p^{\bm a,\bm b}(z):=\bm a^\top \bm R(z)\bm b-m_\gamma(z)\bm a^\top \bm b.
		\]
		By \eqref{eqR} and polarization, for every fixed
		$z\in\mathbb C\setminus\mathfrak S_\gamma$,
		\[
		\Delta_p^{\bm a,\bm b}(z)\to 0
		\qquad\text{a.s.}
		\]
		
		Now let
		$
		\delta_j:=\operatorname{dist}(\mathcal C_j,\mathfrak S_\gamma)>0
		$.
		Since $\mathcal C_j$ is a fixed compact contour disjoint from $\mathfrak S_\gamma$, almost surely
		for all sufficiently large $p$,
		\[
		\operatorname{dist}(\mathcal C_j,\operatorname{spec}(\bm B))\ge \frac{\delta_j}{2}.
		\]
		Hence, for all $z\in\mathcal C_j$,
		\[
		\|\bm R(z)\|
		\le \frac{2}{\delta_j}
		\]
		almost surely for all sufficiently large $p$.
		
		Moreover, by the resolvent identity,
		\[
		\bm R(z_1)-\bm R(z_2)=(z_1-z_2)\bm R(z_1)\bm R(z_2),
		\]
		so that for $z_1,z_2\in\mathcal C_j$,
		\[
		\|\bm R(z_1)-\bm R(z_2)\|
		\le |z_1-z_2|\,\|\bm R(z_1)\|\,\|\bm R(z_2)\|
		\le \frac{4}{\delta_j^2}|z_1-z_2|.
		\]
		Since $m_\gamma$ is analytic on a neighborhood of $\mathcal C_j$, it is Lipschitz on
		$\mathcal C_j$, i.e., there exists a constant $L_j>0$ such that
		\[
		|m_\gamma(z_1)-m_\gamma(z_2)|\le L_j|z_1-z_2|,
		\qquad z_1,z_2\in\mathcal C_j.
		\]
		Therefore,
		\[
		|\Delta_p^{\bm a,\bm b}(z_1)-\Delta_p^{\bm a,\bm b}(z_2)|
		\le C_j |z_1-z_2|
		\]
		for all $z_1,z_2\in\mathcal C_j$, almost surely for all sufficiently large $p$, where $C_j>0$
		is deterministic.
		
		Fix $\varepsilon>0$, and choose a finite $\varepsilon$-net
		$\{z_1,\dots,z_{N_\varepsilon}\}\subset \mathcal C_j$. Then for any $z\in\mathcal C_j$, there
		exists $z_\ell$ such that $|z-z_\ell|\le \varepsilon$, and hence
		\[
		|\Delta_p^{\bm a,\bm b}(z)|
		\le
		|\Delta_p^{\bm a,\bm b}(z_\ell)|
		+
		C_j\varepsilon.
		\]
		Taking the supremum over $z\in\mathcal C_j$ gives
		\[
		\sup_{z\in\mathcal C_j}|\Delta_p^{\bm a,\bm b}(z)|
		\le
		\max_{1\le \ell\le N_\varepsilon}|\Delta_p^{\bm a,\bm b}(z_\ell)|
		+
		C_j\varepsilon.
		\]
		Since the net is finite and $\Delta_p^{\bm a,\bm b}(z_\ell)\to 0$ almost surely for each
		$\ell$, we obtain
		\[
		\limsup_{p\to\infty}\sup_{z\in\mathcal C_j}|\Delta_p^{\bm a,\bm b}(z)|
		\le C_j\varepsilon
		\qquad\text{a.s.}
		\]
		for every $\varepsilon>0$. Since $\varepsilon>0$ is arbitrary, it follows that
		\[
		\limsup_{p\to\infty}\sup_{z\in\mathcal C_j}|\Delta_p^{\bm a,\bm b}(z)|
		=0
		\qquad\text{a.s.}
		\]
		That is,
		\[
		\sup_{z\in\mathcal C_j}|\Delta_p^{\bm a,\bm b}(z)|\to 0
		\qquad\text{a.s.}
		\]
		Since $\mathcal A$ is finite, this uniform convergence holds simultaneously for all pairs
		$\bm a,\bm b\in\mathcal A$. In particular, uniformly for $z\in\mathcal C_j$,
		\[
		\widetilde{\bm\xi}^{\top}\bm R(z)\bm V
		=
		m_{\gamma}(z)\,\widetilde{\bm\xi}^{\top}\bm V+o(1),
		\qquad
		\bm V^{\top}\bm R(z)\bm V
		=
		m_{\gamma}(z)\bm I_r+o(1),
		\qquad\text{a.s.}
		\]
		
		Set
		$
		\bm L(z):=\bigl(\bm D^{-1}+zm_{\gamma}(z)\bm I_r\bigr)^{-1}.
		$
		For $z=\vartheta(\zeta)$ with $\zeta\in\Gamma_j$, the Marchenko--Pastur equation gives
		\[
		m_{\gamma}(\vartheta(\zeta))
		=
		-\frac{1}{\zeta+\gamma}.
		\]
		Hence, for every $l\in\mathcal I$,
		\[
		\frac{1+\lambda_l}{\lambda_l}
		+
		\vartheta(\zeta)m_{\gamma}(\vartheta(\zeta))
		=
		\frac{1}{\lambda_l}-\frac{1}{\zeta}.
		\]
		Since $\Gamma_j$ stays a positive distance away from every $\lambda_i$, the matrix
		$\bm D^{-1}+zm_{\gamma}(z)\bm I_r$ is uniformly invertible on $\mathcal C_j$. Thus,
		uniformly for $z\in\mathcal C_j$,
		\[
		\bigl(\bm D^{-1}+z\bm V^{\top}\bm R(z)\bm V\bigr)^{-1}
		=
		\bm L(z)+o(1),
		\qquad\text{a.s.}
		\]
		Consequently,
		\[
		\bm\xi^{\top}\bm{\widetilde G}(z)\bm\xi
		=
		\widetilde{\bm\xi}^{\top}\bm R(z)\widetilde{\bm\xi}
		-
		z\,m_{\gamma}(z)^2\,
		\widetilde{\bm\xi}^{\top}\bm V\bm L(z)\bm V^{\top}\widetilde{\bm\xi}
		+
		o(1)
		\]
		uniformly on $\mathcal C_j$, almost surely.
		
		Since $\mathcal C_j$ lies outside $\mathfrak S_\gamma$, it stays a positive distance away from
		the support of the $p$-dimensional Marchenko--Pastur law. If $0<\gamma\le 1$, the almost sure
		convergence of the extreme eigenvalues of $\bm B$ to $\gamma_-$ and $\gamma_+$
		(see \cite{yin1988Limita} for the upper edge and \cite{bai1993Limit} for the lower edge)
		implies that, almost surely for all sufficiently large $p$, the contour $\mathcal C_j$ and its
		interior are disjoint from $\operatorname{spec}(\bm B)$.
		
		If $\gamma>1$, then $\bm B=\frac{1}{n}\bm Z^\top\bm Z$ has exactly $p-n$ zero eigenvalues,
		and its nonzero extreme eigenvalues converge almost surely to $\gamma_-$ and $\gamma_+$.
		Since $\mathcal C_j$ surrounds the positive outlier location $\vartheta(\lambda_j)>\gamma_+$,
		it is disjoint from $\{0\}\cup[\gamma_-,\gamma_+]$, and hence also disjoint from
		$\operatorname{spec}(\bm B)$ for all sufficiently large $p$, almost surely.
		
		Therefore the function
		\[
		z\mapsto \widetilde{\bm\xi}^\top \bm R(z)\widetilde{\bm\xi}
		\]
		is analytic on and inside $\mathcal C_j$. Hence, by Cauchy's integral theorem,
		\[
		\oint_{\mathcal C_j}\widetilde{\bm\xi}^{\top}\bm R(z)\widetilde{\bm\xi}\,dz=0,
		\]
		and therefore
		\[
		\bm\xi^{\top}\bm{\widetilde P}_{j}\bm\xi
		=
		\frac{1}{2\pi i}\oint_{\mathcal C_j}
		z\,m_{\gamma}(z)^2\,
		\widetilde{\bm\xi}^{\top}\bm V\bm L(z)\bm V^{\top}\widetilde{\bm\xi}\,dz
		+
		o(1)
		\qquad\text{a.s.}
		\]
		
		Since $\bm L(z)$ is diagonal,
		\[
		\widetilde{\bm\xi}^{\top}\bm V\bm L(z)\bm V^{\top}\widetilde{\bm\xi}
		=
		\sum_{l\in\mathcal I}
		\frac{\lambda_l\langle \widetilde{\bm\xi},\bm v_l\rangle^2}
		{1+\lambda_l+\lambda_lzm_{\gamma}(z)}.
		\]
		Hence
		\[
		\bm\xi^{\top}\bm{\widetilde P}_j\bm\xi
		=
		\sum_{l\in\mathcal I}
		\frac{1}{2\pi i}\oint_{\mathcal C_j}
		zm_{\gamma}(z)^2
		\frac{\lambda_l\langle \widetilde{\bm\xi},\bm v_l\rangle^2}
		{1+\lambda_l+\lambda_lzm_{\gamma}(z)}\,dz
		+
		o(1).
		\]
		Now use the change of variables $z=\vartheta(\zeta)$ with $\zeta\in\Gamma_j$. Since
		\[
		m_{\gamma}(\vartheta(\zeta))
		=
		-\frac{1}{\zeta+\gamma},
		\qquad
		\frac{\lambda_l}{1+\lambda_l+\lambda_l\vartheta(\zeta)m_{\gamma}(\vartheta(\zeta))}
		=
		\frac{\zeta\lambda_l}{\zeta-\lambda_l},
		\]
		we obtain
		\[
		\bm\xi^{\top}\bm{\widetilde P}_j\bm\xi
		=
		\sum_{l\in\mathcal I}
		\frac{1}{2\pi i}\oint_{\Gamma_j}
		\vartheta(\zeta)\vartheta'(\zeta)m_{\gamma}(\vartheta(\zeta))^2
		\frac{\zeta\lambda_l}{\zeta-\lambda_l}
		\langle \widetilde{\bm\xi},\bm v_l\rangle^2\,d\zeta
		+
		o(1).
		\]
		Since $\Gamma_j$ encloses only $\lambda_j$, all terms with $l\neq j$ vanish by Cauchy's theorem. Thus
		\[
		\bm\xi^{\top}\bm{\widetilde P}_j\bm\xi
		=
		\frac{1}{2\pi i}\oint_{\Gamma_j}
		\vartheta(\zeta)\vartheta'(\zeta)m_{\gamma}(\vartheta(\zeta))^2
		\frac{\zeta\lambda_j}{\zeta-\lambda_j}
		\langle \widetilde{\bm\xi},\bm v_j\rangle^2\,d\zeta
		+
		o(1).
		\]
		
		Next, since
		$
		\bm\Sigma\bm v_j=(1+\lambda_j)\bm v_j
		$,
		we have
		\[
		\langle \widetilde{\bm\xi},\bm v_j\rangle^2
		=
		\frac{1}{1+\lambda_j}\langle \bm\xi,\bm v_j\rangle^2.
		\]
		Also,
		\[
		\vartheta(\zeta)=1+\zeta+\gamma+\frac{\gamma}{\zeta},\quad
		\vartheta'(\zeta)=1-\frac{\gamma}{\zeta^2},
		\quad
		m_{\gamma}(\vartheta(\zeta))
		=
		-\frac{1}{\zeta+\gamma}.
		\]
		Therefore the integrand equals
		\[
		\frac{\lambda_j}{1+\lambda_j}
		\cdot
		\frac{(1+\zeta)(\zeta^2-\gamma)}
		{\zeta^2(\zeta+\gamma)(\zeta-\lambda_j)}
		\,
		\langle \bm\xi,\bm v_j\rangle^2.
		\]
		Hence the residue at $\zeta=\lambda_j$ is
		\[
		\frac{\lambda_j}{1+\lambda_j}
		\cdot
		\frac{(1+\lambda_j)(\lambda_j^2-\gamma)}
		{\lambda_j^2(\lambda_j+\gamma)}
		\,
		\langle \bm\xi,\bm v_j\rangle^2
		=
		\frac{\lambda_j^2-\gamma}{\lambda_j(\lambda_j+\gamma)}
		\,
		\langle \bm\xi,\bm v_j\rangle^2.
		\]
		By the residue theorem,
		\[
		\bm\xi^{\top}\bm{\widetilde P}_j\bm\xi
		=
		\frac{\lambda_j^2-\gamma}{\lambda_j(\lambda_j+\gamma)}
		\,
		\langle \bm\xi,\bm v_j\rangle^2
		+
		o(1)
		\qquad\text{a.s.}
		\]
		Since
		\[
		\bm{\widetilde P}_j=\bm{\widetilde v}_j\bm{\widetilde v}_j^{\top},
		\qquad
		\langle \bm\xi,\bm v_j\rangle^2
		=
		\bm\xi^{\top}\bm v_j\bm v_j^{\top}\bm\xi,
		\]
		the desired conclusion follows.
	\end{proof}
	\begin{lemma}
		\label{lem:center_uncentered_fixed}
		Suppose that Assumptions \ref{ass1}--\ref{ass5} hold. Let
		\[
		\bm G_K
		:=
		\operatorname{diag}\!\bigl(
		\bm P_{n_{1,+}},\bm P_{n_{1,-}},\dots,\bm P_{n_{K,+}},\bm P_{n_{K,-}}
		\bigr),
		\qquad
		\bm P_m:=\bm I_m-\frac{1}{m}\bm 1_m\bm 1_m^\top.
		\]
		Define
		\[
		\widetilde{\bm S}_n^{\,e}
		:=
		\frac1n\,\bm\Sigma^{1/2}\bm Z^\top \bm Z\,\bm\Sigma^{1/2},
		\qquad
		\widetilde{\bm S}_n
		:=
		\frac1n\,\bm\Sigma^{1/2}\bm Z^\top \bm G_K \bm Z\,\bm\Sigma^{1/2}.
		\]
		For \(z\in\mathbb C^+\), let
		\[
		\bm Q_n^{\,e}(z):=(\widetilde{\bm S}_n^{\,e}-z\bm I_p)^{-1},
		\qquad
		\bm Q_n(z):=(\widetilde{\bm S}_n-z\bm I_p)^{-1}.
		\]
		Then for any deterministic unit vector $\bm\xi\in\mathbb R^p$ and any fixed
		$z\in\mathbb C^+$,
		\[
		\bm\xi^\top\bm Q_n(z)\bm\xi-\bm\xi^\top\bm Q_n^{\,e}(z)\bm\xi
		\;\xrightarrow{\mathrm{p}}\;0.
		\]
	\end{lemma}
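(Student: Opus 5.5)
The plan is to write the centering as a finite-rank perturbation and control it with the resolvent identity. Since $\bm G_K=\bm I_n-\sum_{k,s}\frac{1}{n_{k,s}}\bm 1_{k,s}\bm 1_{k,s}^\top$, where $\bm 1_{k,s}$ is the indicator vector of the block of rows belonging to domain $k$ and class $s$, we have
\[
\widetilde{\bm S}_n=\widetilde{\bm S}_n^{\,e}-\bm U\bm U^\top,
\qquad
\bm U:=\frac{1}{\sqrt n}\bm\Sigma^{1/2}\bm Z^\top\bm E\in\mathbb R^{p\times 2K},
\]
where $\bm E$ has columns $\bm e_{k,s}:=\bm 1_{k,s}/\sqrt{n_{k,s}}$. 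These columns are orthonormal, and there are $2K$ of them, a fixed number. The perturbation therefore has rank at most $2K$.

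By Woodbury,
\[
\bm Q_n(z)=\bm Q_n^{\,e}(z)+\bm Q_n^{\,e}(z)\bm U\bigl(\bm I_{2K}-\bm U^\top\bm Q_n^{\,e}(z)\bm U\bigr)^{-1}\bm U^\top\bm Q_n^{\,e}(z).
\]
So the task reduces to showing that
\[
\bm\xi^\top\bm Q_n^{\,e}\bm U\bigl(\bm I-\bm U^\top\bm Q_n^{\,e}\bm U\bigr)^{-1}\bm U^\top\bm Q_n^{\,e}\bm\xi\xrightarrow{p}0.
\]
This is a fixed-size ($2K\times 2K$) computation. I will first obtain the limits of the $2K$-dimensional row vector $\bm\xi^\top\bm Q_n^{\,e}\bm U$ and of the matrix $\bm U^\top\bm Q_n^{\,e}\bm U$, then show the middle inverse is bounded.

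\textbf{Step 1: the cross term.} Write $\bm U\bm e=n^{-1/2}\bm\Sigma^{1/2}\bm Z^\top\bm e$ for a deterministic unit vector $\bm e\in\mathbb R^n$. Then $\bm\xi^\top\bm Q_n^{\,e}\bm U\bm e=n^{-1/2}\bm\xi^\top\bm Q_n^{\,e}\bm\Sigma^{1/2}\bm Z^\top\bm e$. To show this tends to zero I would work with the companion form
\[
\frac{1}{\sqrt n}\bm Q_n^{\,e}\bm\Sigma^{1/2}\bm Z^\top=\frac{1}{\sqrt n}\bm\Sigma^{1/2}\bm Z^\top\check{\bm Q}(z),
\qquad
\check{\bm Q}(z):=\Bigl(\frac1n\bm Z\bm\Sigma\bm Z^\top-z\bm I_n\Bigr)^{-1}.
\]
Expanding $\bm Z^\top\bm e=\sum_i e_i\bm z_i$ and using leave-one-out (Sherman--Morrison) on each row gives a sum of terms with mean zero. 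Their second moments are $O(1/n)$ because $\|\bm\xi\|=1$, $\|\bm\Sigma\|$ is bounded, and $\|\check{\bm Q}\|\le 1/\Im z$. The variance bound then gives $\bm\xi^\top\bm Q_n^{\,e}\bm U\to0$ in probability. This is the standard ``isotropic'' vanishing of mixed bilinear forms $\bm a^\top\bm Q\bm X^\top\bm b$ for deterministic $\bm a,\bm b$; I will cite or prove it via the martingale/variance computation using the finite fourth moment in Assumption~\ref{ass1}.

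\textbf{Step 2: the middle matrix.} Similarly, $\bm U^\top\bm Q_n^{\,e}\bm U$ has entries $\bm e^\top(\tfrac1n\bm Z\bm\Sigma^{1/2}\bm Q_n^{\,e}\bm\Sigma^{1/2}\bm Z^\top)\bm e'$. Note that $\tfrac1n\bm Z\bm\Sigma^{1/2}\bm Q_n^{\,e}\bm\Sigma^{1/2}\bm Z^\top=\bm I_n+z\check{\bm Q}(z)$. Isotropic local laws for $\check{\bm Q}$ on orthonormal deterministic vectors then give $\bm U^\top\bm Q_n^{\,e}\bm U\to(1+z\check m(z))\bm I_{2K}$, where $\check m$ is the companion Stieltjes transform. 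Only boundedness is actually needed, however: $\|\bm U^\top\bm Q_n^{\,e}\bm U\|\le\|\bm I+z\check{\bm Q}\|\le 1+|z|/\Im z$.

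\textbf{Step 3: invertibility of $\bm I-\bm U^\top\bm Q_n^{\,e}\bm U$.} I expect this to be the main obstacle. I would avoid computing the limit by using the identity
\[
\bigl(\bm I-\bm U^\top\bm Q_n^{\,e}\bm U\bigr)^{-1}=\bm I+\bm U^\top\bm Q_n\bm U,
\]
which follows from Woodbury applied in reverse. The right-hand side is bounded in norm by $1+\|\bm U\|^2/\Im z$. Since $\|\bm U\|^2\le\|\bm\Sigma\|\,\|\bm Z^\top\bm Z/n\|=O_p(1)$ by the Yin--Bai--Krishnaiah bound (finite fourth moment), the inverse is $O_p(1)$.

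Combining Steps 1--3, the correction term is $o_p(1)\cdot O_p(1)\cdot o_p(1)=o_p(1)$, which proves the claim.
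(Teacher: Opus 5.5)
Your proposal is correct and follows the same route as the paper, which writes $\bm G_K=\bm I_n-\bm U\bm U^\top$ with $2K$ normalized block indicators and carries over the Sherman--Morrison--Woodbury argument of Proposition~2 in \cite{liu2025Asymptotic}, with $2K\times 2K$ matrix corrections in place of scalar ones. Your Step~1 is exactly the block-mean estimate $\bar{\bm z}_{a,t}^\top\bm\Sigma^{1/2}\bm Q_n^{\,e}(z)\bm\xi\xrightarrow{p}0$ that the paper proves by a martingale decomposition in Lemma~\ref{lem:ek_vhat_zero}; because the summands are dependent and only approximately centered, that decomposition is what is needed, not a naive sum of per-term variances. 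Your reverse-Woodbury identity in Step~3 neatly supplies the $O_p(1)$ bound on the middle factor, which the paper leaves implicit.
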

	
	\begin{proof}
		The proof follows the same argument as Proposition~2 of
		\cite{liu2025Asymptotic}. In that proposition, the effect of the usual
		centering matrix
		\[
		\bm\Phi=\bm I_n-n^{-1}\bm 1_n\bm 1_n^\top
		\]
		is controlled by a rank-one resolvent perturbation. In the present setting,
		the block-centering matrix can be written as
		\[
		\bm G=\bm I_n-\bm U\bm U^\top ,
		\]
		where the columns of \(\bm U\) are the normalized block indicator vectors and
		\(\bm U^\top\bm U=\bm I_q\), with \(q=2K\) fixed. Thus the scalar correction
		terms in the proof of Proposition~2 are replaced by \(q\times q\) matrix
		corrections. Since \(q\) is fixed, the same resolvent bounds, the same
		Sherman--Morrison--Woodbury expansion, and the same moment estimates apply
		without changing the order of the error terms. Consequently, for any
		deterministic unit vector \(\bm\xi\) and any fixed \(z\in\mathbb C^+\),
		\[
		\bm\xi^\top\bm Q_n(z)\bm\xi
		-
		\bm\xi^\top\bm Q_n^{\,e}(z)\bm\xi
		\xrightarrow{p}0 .
		\]
	\end{proof}
	
	\begin{lemma}\label{lem:centered_uncentered_same_outlier_limit}
		Suppose that Assumptions \ref{ass1}--\ref{ass5} hold. Let $\widetilde\ell_j$ and $\widehat\ell_j$ denote the sample outlier eigenvalues
		of $\widetilde{\bm S}_n^{\,e}$ and $\widetilde{\bm S}_n$, respectively, associated with
		the population spike $\lambda_j,j\in\mathcal I$.
		
		Define
		\[
		\theta_j:=1+\lambda_j+\gamma+\frac{\gamma}{\lambda_j}.
		\]
		Then
		\[
		\widehat\ell_j \xrightarrow{p} \theta_j.
		\]
		In particular,
		\[
		\widehat\ell_j-\widetilde\ell_j \xrightarrow{p}0.
		\]
	\end{lemma}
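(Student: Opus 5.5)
The plan is to locate the outliers of $\widetilde{\bm S}_n$ through the same determinantal equation that governs $\widetilde{\bm S}_n^{\,e}$, now with the centered resolvent, and to use Lemma~\ref{lem:center_uncentered_fixed} to show that the two equations have the same deterministic limit. Write $\bm B_G:=\frac1n\bm Z^\top\bm G_K\bm Z$ and $\bm R_G(z):=(\bm B_G-z\bm I_p)^{-1}$. As in the proof of Lemma~\ref{thm:uncentered_overlap_target_short},
\[
\widetilde{\bm S}_n-z\bm I_p=\bm\Sigma^{1/2}\bigl(\bm R_G(z)^{-1}+z\bm V\bm D\bm V^\top\bigr)\bm\Sigma^{1/2}.
\]
Hence any $z$ outside $\operatorname{spec}(\bm B_G)$ is an eigenvalue of $\widetilde{\bm S}_n$ exactly when
\[
\det\bigl(\bm D^{-1}+z\bm V^\top\bm R_G(z)\bm V\bigr)=0 .
\]

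\textbf{Step 1: bulk spectrum of $\bm B_G$.} Since $\bm G_K=\bm I_n-\bm U\bm U^\top$ with $\operatorname{rank}(\bm U\bm U^\top)=2K$ fixed, $\bm B_G$ is a finite-rank perturbation of $\bm B$, and $\bm B_G\preceq\bm B$. Weyl interlacing then gives two facts. First, the largest eigenvalue of $\bm B_G$ is at most that of $\bm B$, so it tends to at most $\gamma_+$. Second, apart from at most $2K$ eigenvalues, the spectrum of $\bm B_G$ is interlaced with that of $\bm B$.

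The lower edge needs more care. I would argue that $\bm B_G$ equals $\frac{n-2K}{n}$ times a sample covariance built from $n-2K$ effectively independent whitened rows, so its smallest nonzero eigenvalue still converges to $\gamma_-$. In the Gaussian case this is exact via an orthogonal transformation; in general I would cite the interlacing bound and argue that no eigenvalue of $\bm B_G$ separates below $\gamma_-$. Consequently, any fixed compact set at positive distance from $\mathfrak S_\gamma$ is eventually free of $\operatorname{spec}(\bm B_G)$, with probability tending to one.

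\textbf{Step 2: uniform deterministic equivalent.} Lemma~\ref{lem:center_uncentered_fixed} compares $\bm\xi^\top\bm Q_n\bm\xi$ with $\bm\xi^\top\bm Q_n^{\,e}\bm\xi$ for $z\in\mathbb C^+$. Using the identity above and the Woodbury formula, this comparison should transfer to $\bm V^\top\bm R_G(z)\bm V-\bm V^\top\bm R(z)\bm V\xrightarrow{p}0$. Combined with \eqref{eqR}, this gives
\[
\bm V^\top\bm R_G(z)\bm V\xrightarrow{p}m_\gamma(z)\bm I_r .
\]
I would then extend this from $\mathbb C^+$ to real $z$ in a compact neighborhood $\mathcal N_j$ of $\theta_j$ away from $\mathfrak S_\gamma$. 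The tools are Vitali/Montel: the functions are analytic and uniformly bounded on $\mathcal N_j$ by Step 1, and convergence in probability is handled via subsequences that converge almost surely. The equicontinuity argument is the same $\varepsilon$-net argument as in Lemma~\ref{thm:uncentered_overlap_target_short}. This yields uniform convergence of $f_p(z):=\det(\bm D^{-1}+z\bm V^\top\bm R_G(z)\bm V)$ to
\[
f(z)=\prod_{l\in\mathcal I}\Bigl(\frac{1+\lambda_l}{\lambda_l}+zm_\gamma(z)\Bigr).
\]
Via $z=\vartheta(\zeta)$, the factor indexed by $l$ equals $1/\lambda_l-1/\zeta$. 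On the relevant region, $f$ therefore has exactly one simple zero in $\mathcal N_j$, at $\vartheta(\lambda_j)=\theta_j$.

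\textbf{Step 3: counting and conclusion.} Hurwitz's theorem, or Rouché's theorem applied on a small circle around $\theta_j$, shows that $f_p$ has exactly one zero near $\theta_j$ with probability tending to one. Hence $\widetilde{\bm S}_n$ has exactly one eigenvalue there, and this is $\widehat\ell_j$ under the labeling. Taking the circle radius arbitrarily small gives $\widehat\ell_j\xrightarrow{p}\theta_j$. Since $\widetilde\ell_j\to\theta_j$ almost surely by \cite{baik2006Eigenvalues}, as used in Lemma~\ref{thm:uncentered_overlap_target_short}, the difference $\widehat\ell_j-\widetilde\ell_j\xrightarrow{p}0$ follows.

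For $j\in\mathcal I_-$, the point $\theta_j$ lies in $(0,\gamma_-)$ and $\gamma<1$. Here Step 1's control of the lower edge of $\bm B_G$ is essential, since no spurious small eigenvalues may be created by the centering. I expect this lower-edge control, together with upgrading Lemma~\ref{lem:center_uncentered_fixed} from $\mathbb C^+$ to uniform convergence on real neighborhoods of the outliers, to be the main obstacle.
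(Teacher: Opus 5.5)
Your Steps 2 and 3 follow the paper's proof closely. The paper also works with $\det\bigl(\bm D^{-1}+z\bm V^\top(\bm B_n-z\bm I_p)^{-1}\bm V\bigr)$. It transfers Lemma~\ref{lem:center_uncentered_fixed} to $\bm V^\top(\bm B_n-z\bm I_p)^{-1}\bm V\to m_\gamma(z)\bm I_r$ through the Woodbury identity, using $\bm N_n(z)=\bm V^\top(\bm B_n-z\bm\Sigma^{-1})^{-1}\bm V$ and the map $\Phi_z$. It then upgrades to uniform convergence on $\mathcal C_j$ and concludes with Rouch\'e. Your Vitali/Montel upgrade is an acceptable substitute for its Lipschitz finite-net argument. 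The upper-spike case is also fine, since $\bm B_G\preceq\bm B$ gives $\lambda_{\max}(\bm B_G)\le\lambda_{\max}(\bm B)$. The genuine gap is the lower edge in Step 1. It is indispensable for $j\in\mathcal I_-$, and your argument does not establish it.

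Interlacing cannot give the lower edge. The matrix $\bm B_G=\bm B-\frac1n\bm Z^\top\bm U\bm U^\top\bm Z$ is a rank-$2K$ negative semidefinite perturbation of $\bm B$. Weyl therefore yields $\lambda_i(\bm B_G)\ge\lambda_{i+2K}(\bm B)$ only for $i\le p-2K$, so the bottom $2K$ eigenvalues of $\bm B_G$ are only known to be nonnegative. The perturbation is not small: each piece is $\frac{n_{k,s}}{n}\bar{\bm z}_{k,s}\bar{\bm z}_{k,s}^\top$ with $\|\bar{\bm z}_{k,s}\|^2\approx p/n_{k,s}$, so its norm is of order $\gamma$. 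Nothing in your argument therefore keeps these eigenvalues out of $\mathcal D_j\subset(0,\gamma_-)$. Your ``effectively independent rows'' heuristic holds only for Gaussian entries. With $\bm G_K=\bm O\bm O^\top$ and $\bm O^\top\bm O=\bm I_{n-2K}$, the rows of $\bm O^\top\bm Z$ are uncorrelated but not independent in general. If an eigenvalue of $\bm B_G$ does enter $\mathcal D_j$, three things break. First, $f_p$ acquires a pole there, so its zeros no longer coincide with eigenvalues of $\widetilde{\bm S}_n$. Second, the resolvent bounds behind your equicontinuity step fail. Third, $\widetilde{\bm S}_n$ can pick up extra eigenvalues near $\theta_j$, which invalidates the count.

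The missing step is the one the paper supplies. The nonzero eigenvalues of $\bm B_G$ coincide with those of $\frac1n\sum_{c=1}^p\bm x_c\bm x_c^\top$, where $\bm x_c\in\mathbb R^{n-2K}$ is $\bm O^\top$ applied to the $c$th column of $\bm Z$. These $\bm x_c$ are i.i.d.\ isotropic vectors with dependent coordinates. An extremal-eigenvalue convergence result for empirical covariance matrices with such dependence (the paper cites \cite{chafai2018Convergence}) then keeps $\operatorname{spec}(\bm B_G)$ at positive distance from $\mathcal D_j$ with probability tending to one. With that event in hand, the rest of your argument goes through.
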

	
	\begin{proof}
		For convenience, write
		\[
		\bm B_n:=\frac1n\bm Z^\top \bm G_K \bm Z,
		\qquad
		\bm B_n^{e}:=\frac1n\bm Z^\top \bm Z,
		\]
		so that
		\[
		\widetilde{\bm S}_n=\bm\Sigma^{1/2}\bm B_n\bm\Sigma^{1/2},
		\qquad
		\widetilde{\bm S}_n^{\,e}=\bm\Sigma^{1/2}\bm B_n^{e}\bm\Sigma^{1/2}.
		\]
		Also recall that
		\[
		\bm\Sigma^{-1}=\bm I_p-\bm V\bm D\bm V^\top,
		\qquad
		\bm D=\operatorname{diag}\!\left(\frac{\lambda_\ell}{1+\lambda_\ell}\right)_{\ell\in\mathcal I}.
		\]
		
		Set
		\[
		\gamma_-:=(1-\sqrt{\gamma})^2,\qquad
		\gamma_+:=(1+\sqrt{\gamma})^2,
		\]
		and
		\[
		\mathfrak S_\gamma
		:=
		\begin{cases}
			[\gamma_-,\gamma_+], & 0<\gamma\le 1,\\[1mm]
			\{0\}\cup[\gamma_-,\gamma_+], & \gamma>1.
		\end{cases}
		\]
		For each \(\ell\in\mathcal I\), write
		\[
		\theta_\ell:=1+\lambda_\ell+\gamma+\frac{\gamma}{\lambda_\ell}.
		\]
		Let
		\[
		\vartheta(\zeta):=1+\zeta+\gamma+\frac{\gamma}{\zeta}.
		\]
		Since
		\[
		\vartheta'(\lambda_j)=1-\frac{\gamma}{\lambda_j^2}\neq0,
		\]
		we may choose \(\rho_j>0\) small enough such that \(\vartheta\) is one-to-one
		on a neighborhood of \(\overline{B(\lambda_j,\rho_j)}\), the disk
		\(\overline{B(\lambda_j,\rho_j)}\) contains no \(\lambda_\ell\) with
		\(\ell\neq j\), and
		\[
		\vartheta\bigl(\overline{B(\lambda_j,\rho_j)}\bigr)
		\cap
		\mathfrak S_\gamma
		=
		\varnothing.
		\]
		Define
		\[
		\Gamma_j:=\{\zeta\in\mathbb C:|\zeta-\lambda_j|=\rho_j\},
		\qquad
		\mathcal C_j:=\vartheta(\Gamma_j),
		\qquad
		\mathcal D_j:=\vartheta\bigl(\overline{B(\lambda_j,\rho_j)}\bigr).
		\]
		Then \(\mathcal C_j\) is a simple closed contour enclosing
		\(\theta_j=\vartheta(\lambda_j)\), and \(\mathcal D_j\) is the closed domain
		enclosed by \(\mathcal C_j\).

		Since \(\bm G_K\) is an orthogonal projection with rank \(n-2K\), the nonzero
		eigenvalues of $\bm B_n$ are those of a centered empirical covariance matrix with dependent coordinates.
		By the extremal eigenvalue convergence result for such empirical covariance
		matrices in \cite{chafai2018Convergence}, the spectrum of \(\bm B_n\) is asymptotically confined to the
		Marchenko--Pastur support \(\mathfrak S_\gamma\). The same conclusion holds for
		\(\bm B_n^{e}\) by the classical Bai--Yin edge convergence; see
		\cite{yin1988Limita,bai1993Limit}. Since \(\mathcal D_j\) lies a positive
		distance away from \(\mathfrak S_\gamma\), there exists \(c_j>0\) such that
		\[
		\mathbb P(E_n)\to 1,
		\]
		where
		\[
		E_n:=
		\left\{
		\operatorname{dist}\bigl(\mathcal D_j,\operatorname{spec}(\bm B_n^{e})\bigr)\ge c_j,\ 
		\operatorname{dist}\bigl(\mathcal D_j,\operatorname{spec}(\bm B_n)\bigr)\ge c_j
		\right\}.
		\]
		Define
		\[
		\bm T_n(z):=\bm V^\top(\bm B_n-z\bm I_p)^{-1}\bm V,
		\qquad
		\bm T_n^{e}(z):=\bm V^\top(\bm B_n^{e}-z\bm I_p)^{-1}\bm V,
		\]
		and
		\[
		\bm M_n(z):=\bm D^{-1}+z\bm T_n(z).
		\]
		On the event $E_n$, both $\bm T_n^{e}(z)$ and $\bm T_n(z)$ are analytic on a neighborhood of
		\(\mathcal D_j\), and
		\[
		\sup_{z\in\mathcal C_j}\|(\bm B_n^{e}-z\bm I_p)^{-1}\|\le c_j^{-1},
		\qquad
		\sup_{z\in\mathcal C_j}\|(\bm B_n-z\bm I_p)^{-1}\|\le c_j^{-1}.
		\]
		
		We now prove that
		\[
		\sup_{z\in\mathcal C_j}
		\left\|
		\bm T_n(z)-m_\gamma(z)\bm I_r
		\right\|
		\xrightarrow{p}0.
		\]
		To this end, for fixed $z\in\mathbb C^+$, define
		\[
		\bm N_n(z):=\bm V^\top(\bm B_n-z\bm\Sigma^{-1})^{-1}\bm V,
		\qquad
		\bm N_n^{e}(z):=\bm V^\top(\bm B_n^{e}-z\bm\Sigma^{-1})^{-1}\bm V.
		\]
		Since
		\[
		\widetilde{\bm S}_n-z\bm I_p
		=
		\bm\Sigma^{1/2}(\bm B_n-z\bm\Sigma^{-1})\bm\Sigma^{1/2},
		\]
		we have
		\[
		(\bm B_n-z\bm\Sigma^{-1})^{-1}
		=
		\bm\Sigma^{1/2}(\widetilde{\bm S}_n-z\bm I_p)^{-1}\bm\Sigma^{1/2},
		\]
		and similarly for the uncentered matrix. Therefore, by polarization and
		Lemma~\ref{lem:center_uncentered_fixed}, for every fixed $z\in\mathbb C^+$,
		\[
		\bm N_n(z)-\bm N_n^{e}(z)\xrightarrow{p}0.
		\]
		
		On the other hand, for every fixed $z\in\mathbb C^+$, the deterministic-equivalent step in the
		uncentered proof gives
		\[
		\bm T_n^{e}(z)\xrightarrow{\mathrm{a.s.}} m_\gamma(z)\bm I_r.
		\]
		Using the Woodbury identity,
		\[
		(\bm B_n-z\bm\Sigma^{-1})^{-1}
		=
		(\bm B_n-z\bm I_p)^{-1}
		-
		(\bm B_n-z\bm I_p)^{-1}\bm V
		\bigl((z\bm D)^{-1}+\bm T_n(z)\bigr)^{-1}
		\bm V^\top(\bm B_n-z\bm I_p)^{-1},
		\]
		hence
		\[
		\bm N_n(z)
		=
		\bm T_n(z)-\bm T_n(z)\bigl((z\bm D)^{-1}+\bm T_n(z)\bigr)^{-1}\bm T_n(z)
		=
		\bm T_n(z)\bigl(\bm I_r+z\bm D\bm T_n(z)\bigr)^{-1}.
		\]
		Thus
		\[
		(\bm I_r-z\bm N_n(z)\bm D)\bm T_n(z)=\bm N_n(z),
		\]
		so that
		\[
		\bm T_n(z)=\Phi_z(\bm N_n(z)),
		\qquad
		\Phi_z(\bm X):=(\bm I_r-z\bm X\bm D)^{-1}\bm X.
		\]
		Similarly,
		\[
		\bm T_n^{e}(z)=\Phi_z(\bm N_n^{e}(z)).
		\]
		
		Now define
		\[
		\bm N(z):=m_\gamma(z)\bigl(\bm I_r+z\,m_\gamma(z)\bm D\bigr)^{-1}.
		\]
		Then $\bm N_n^{e}(z)\xrightarrow{\mathrm{a.s.}}\bm N(z)$, and therefore
		\[
		\bm N_n(z)\xrightarrow{p}\bm N(z)
		\qquad
		(z\in\mathbb C^+\ \text{fixed}).
		\]
		Since \(\bm I_r-z\bm N(z)\bm D\) is invertible for \(z\in\mathbb C^+\), the map \(\Phi_z\) is
		continuous at \(\bm N(z)\), and hence
		\[
		\bm T_n(z)=\Phi_z(\bm N_n(z))\xrightarrow{p}\Phi_z(\bm N(z))=m_\gamma(z)\bm I_r
		\qquad
		(z\in\mathbb C^+\ \text{fixed}).
		\]
		Because the matrices are real symmetric,
		\[
		\bm T_n(\bar z)=\overline{\bm T_n(z)},
		\qquad
		m_\gamma(\bar z)=\overline{m_\gamma(z)},
		\]
		so the same convergence holds for every fixed \(z\in \mathcal C_j\setminus\mathbb R\).
		
		We now upgrade pointwise convergence to uniform convergence on \(\mathcal C_j\).
		On the event \(E_n\), the resolvent identity implies that for \(z,w\in\mathcal C_j\),
		\begin{align*}
			\|\bm T_n(z)-\bm T_n(w)\|
			&=
			\left\|
			\bm V^\top\bigl[(\bm B_n-z\bm I_p)^{-1}-(\bm B_n-w\bm I_p)^{-1}\bigr]\bm V
			\right\|\\
			&\le
			|z-w|\,\|\bm V\|^2\,
			\|(\bm B_n-z\bm I_p)^{-1}\|\,
			\|(\bm B_n-w\bm I_p)^{-1}\|\\
			&\le
			c_j^{-2}|z-w|.
		\end{align*}
		Since \(\mathcal C_j\setminus\mathbb R\) is dense in \(\mathcal C_j\), we may choose the net
		inside \(\mathcal C_j\setminus\mathbb R\). Thus \(\{\bm T_n\}\) is stochastically equicontinuous on \(\mathcal C_j\).
		Since \(m_\gamma\) is continuous on \(\mathcal C_j\), a standard finite-net argument yields
		\[
		\sup_{z\in\mathcal C_j}
		\left\|
		\bm T_n(z)-m_\gamma(z)\bm I_r
		\right\|
		\xrightarrow{p}0.
		\]
		Consequently,
		\[
		\sup_{z\in\mathcal C_j}
		\left\|
		\bm M_n(z)-\bm M(z)
		\right\|
		\xrightarrow{p}0,
		\]
		where
		\[
		\bm M(z):=\bm D^{-1}+z\,m_\gamma(z)\bm I_r.
		\]
		
		Define
		\[
		f_n(z):=\det \bm M_n(z),
		\qquad
		f(z):=\det \bm M(z).
		\]
		Since the determinant is continuous on the finite-dimensional space of \(r\times r\) matrices,
		\[
		\sup_{z\in\mathcal C_j}|f_n(z)-f(z)|\xrightarrow{p}0.
		\]
		
		We now analyze \(f(z)\). By the identity already derived in the uncentered proof,
		for \(z=\vartheta(\zeta)=1+\zeta+\gamma+\gamma/\zeta\),
		\[
		\frac{1+\lambda_\ell}{\lambda_\ell}+\vartheta(\zeta)m_\gamma(\vartheta(\zeta))
		=
		\frac{1}{\lambda_\ell}-\frac{1}{\zeta},
		\qquad \ell\in\mathcal I.
		\]
		Hence
		\[
		f(\vartheta(\zeta))
		=
		\prod_{\ell\in\mathcal I}
		\left(
		\frac{1}{\lambda_\ell}-\frac{1}{\zeta}
		\right).
		\]
		Since \(\Gamma_j\) encloses only \(\lambda_j\), and \(\vartheta\) is one-to-one
		on a neighborhood of \(\overline{B(\lambda_j,\rho_j)}\), the function \(f\) has
		exactly one zero inside \(\mathcal C_j\), namely
		\[
		\theta_j=\vartheta(\lambda_j).
		\]
		Moreover this zero is simple because
		\[
		\vartheta'(\lambda_j)=1-\frac{\gamma}{\lambda_j^2}\neq0.
		\]
		In particular,
		\[
		\inf_{z\in\mathcal C_j}|f(z)|>0.
		\]
		
		Fix
		\[
		\eta_j:=\frac12\inf_{z\in\mathcal C_j}|f(z)|>0.
		\]
		Then
		\[
		\mathbb P\!\left(\sup_{z\in\mathcal C_j}|f_n(z)-f(z)|<\eta_j\right)\to 1.
		\]
		On the event
		\[
		E_n\cap\left\{\sup_{z\in\mathcal C_j}|f_n(z)-f(z)|<\eta_j\right\},
		\]
		the functions \(f_n\) and \(f\) are analytic on a neighborhood of \(\mathcal D_j\), and satisfy
		\[
		|f_n(z)-f(z)|<|f(z)|,\qquad z\in\mathcal C_j.
		\]
		Hence, by Rouch\'e's theorem, \(f_n\) and \(f\) have the same number of zeros inside
		\(\mathcal C_j\), namely exactly one.
		
		Since \(E_n\) also guarantees that \(\mathcal C_j\) is disjoint from \(\operatorname{spec}(\bm B_n)\),
		the zeros of \(f_n\) inside \(\mathcal C_j\) are precisely the eigenvalues of
		\(\widetilde{\bm S}_n\) inside \(\mathcal C_j\). Hence, with probability tending to one,
		\(\widetilde{\bm S}_n\) has exactly one eigenvalue in \(\mathcal C_j\). Denote it by
		\(\widehat\ell_j\).
		
		Since \(\operatorname{diam}(\mathcal D_j)\to0\) as \(\rho_j\downarrow0\), the above implies
		\[
		\widehat\ell_j\xrightarrow{p}\theta_j.
		\]
		Together with
		$
		\widetilde\ell_j\xrightarrow{\mathrm{a.s.}}\theta_j
		$,
		we obtain
		\[
		\widehat\ell_j-\widetilde\ell_j\xrightarrow{p}0.
		\]
		This completes the proof.
	\end{proof}
	Next, we formally begin the proof of Theorem~\ref{thm:centered_overlap_prob}. Recall that
	\[
	\widetilde{\bm S}_n
	:=
	\frac1n\,\bm\Sigma^{1/2}\bm Z^\top \bm G_K \bm Z\bm\Sigma^{1/2},
	\qquad
	\widetilde{\bm S}_n^{\,e}
	:=
	\frac1n\,\bm\Sigma^{1/2}\bm Z^\top \bm Z\bm\Sigma^{1/2}.
	\]
	By the definition of the pooled sample covariance matrix,
	\[
	\bm S_n=\frac{n}{n-2K}\,\widetilde{\bm S}_n.
	\]
	Hence \(\bm S_n\) and \(\widetilde{\bm S}_n\) have exactly the same eigenvectors. It is therefore enough to prove the claimed limit for the eigenvectors of \(\widetilde{\bm S}_n\).
	
	For $z\in\mathbb C\setminus\mathbb R$, define the resolvents
	\[
	\bm Q_n(z):=(\widetilde{\bm S}_n-z\bm I_p)^{-1},
	\qquad
	\bm Q_n^{\,e}(z):=(\widetilde{\bm S}_n^{\,e}-z\bm I_p)^{-1}.
	\]
	
	Fix $j\in\mathcal I$, and define
	\[
	\theta_j:=1+\lambda_j+\gamma+\frac{\gamma}{\lambda_j}.
	\]
	Take the same \(\rho_j\), \(\Gamma_j\), \(\mathcal C_j\), and \(\mathcal D_j\)
	as in the proof of Lemma~\ref{lem:centered_uncentered_same_outlier_limit}.
	Then \(\mathcal C_j\) is a simple closed contour enclosing \(\theta_j\), and it is
	disjoint from the limiting bulk \(\mathfrak S_\gamma\) and from all the other
	outlier limits \(\{\theta_\ell:\ell\in\mathcal I,\ell\neq j\}\).
	
	By Lemma~\ref{lem:centered_uncentered_same_outlier_limit},
	\[
	\widetilde\ell_j \xrightarrow{\mathrm{a.s.}} \theta_j,\quad
	\widehat\ell_j \xrightarrow{p} \theta_j.
	\]
	Moreover, by the proof of Lemma~\ref{lem:centered_uncentered_same_outlier_limit}, for
	sufficiently small \(\rho_j\), there exists an event \(E_n\) such that
	\[
	\mathbb P(E_n)\to 1,
	\]
	and on \(E_n\) the contour \(\mathcal C_j\) encloses exactly one eigenvalue of
	\(\widetilde{\bm S}_n^{\,e}\), namely \(\widetilde\ell_j\), and exactly one eigenvalue of
	\(\widetilde{\bm S}_n\), namely \(\widehat\ell_j\), and moreover, for some constant \(c_j>0\),
	\[
	\operatorname{dist}(\mathcal C_j,\operatorname{spec}(\widetilde{\bm S}_n))\ge c_j,
	\qquad
	\operatorname{dist}(\mathcal C_j,\operatorname{spec}(\widetilde{\bm S}_n^{\,e}))\ge c_j.
	\]
	
	On the event $E_n$, define the spectral projectors
	\[
	\widetilde{\bm P}_j
	:=
	-\frac{1}{2\pi i}\oint_{\mathcal C_j}\bm Q_n^{\,e}(z)\,dz,
	\qquad
	\widehat{\bm P}_j
	:=
	-\frac{1}{2\pi i}\oint_{\mathcal C_j}\bm Q_n(z)\,dz.
	\]
	Since each contour encloses exactly one eigenvalue, both projectors are rank one, and
	\[
	\widetilde{\bm P}_j=\widetilde{\bm v}_j\widetilde{\bm v}_j^\top,
	\qquad
	\widehat{\bm P}_j=\widehat{\bm v}_j\widehat{\bm v}_j^\top.
	\]
	
	We next show that
	\begin{equation}
		\bm\xi^\top(\widehat{\bm P}_j-\widetilde{\bm P}_j)\bm\xi\xrightarrow{p}0.
		\label{eq:projector_bridge_prob}
	\end{equation}
	
	To this end, define
	\[
	D_n(z):=\bm\xi^\top\bm Q_n(z)\bm\xi-\bm\xi^\top\bm Q_n^{\,e}(z)\bm\xi,
	\qquad z\in\mathcal C_j.
	\]
	By Lemma~\ref{lem:center_uncentered_fixed}, for every fixed
	$z\in\mathcal C_j\cap\mathbb C^+$,
	\[
	D_n(z)\xrightarrow{p}0.
	\]
	Since $\widetilde{\bm S}_n$ and $\widetilde{\bm S}_n^{\,e}$ are real symmetric, we have
	\[
	\bm Q_n(\bar z)=\overline{\bm Q_n(z)},
	\qquad
	\bm Q_n^{\,e}(\bar z)=\overline{\bm Q_n^{\,e}(z)},
	\]
	and hence
	\[
	D_n(\bar z)=\overline{D_n(z)}.
	\]
	Therefore,
	\[
	D_n(z)\xrightarrow{p}0
	\qquad\text{for every fixed } z\in\mathcal C_j\setminus\mathbb R.
	\]
	We will combine this pointwise convergence on the dense subset
	$\mathcal C_j\setminus\mathbb R$ with stochastic equicontinuity on $\mathcal C_j$
	to obtain uniform convergence on the whole contour.
	
	We now upgrade this pointwise convergence to uniform convergence on \(\mathcal C_j\).
	On the event \(E_n\), the contour \(\mathcal C_j\) encloses exactly one eigenvalue of
	\(\widetilde{\bm S}_n^{\,e}\) and exactly one eigenvalue of \(\widetilde{\bm S}_n\), and moreover,
	for some constant \(c_j>0\),
	\[
	\operatorname{dist}(\mathcal C_j,\operatorname{spec}(\widetilde{\bm S}_n))\ge c_j,
	\qquad
	\operatorname{dist}(\mathcal C_j,\operatorname{spec}(\widetilde{\bm S}_n^{\,e}))\ge c_j.
	\]
	Hence, on \(E_n\),
	\[
	\sup_{z\in\mathcal C_j}\|\bm Q_n(z)\|\le c_j^{-1},
	\qquad
	\sup_{z\in\mathcal C_j}\|\bm Q_n^{\,e}(z)\|\le c_j^{-1}.
	\]
	
	Let $z,w\in\mathcal C_j$. By the resolvent identity,
	\[
	\bm Q_n(z)-\bm Q_n(w)=(w-z)\bm Q_n(z)\bm Q_n(w),
	\]
	\[
	\bm Q_n^{\,e}(z)-\bm Q_n^{\,e}(w)=(w-z)\bm Q_n^{\,e}(z)\bm Q_n^{\,e}(w).
	\]
	Therefore, on \(E_n\),
	\begin{align*}
		|D_n(z)-D_n(w)|
		&\le
		\bigl|\bm\xi^\top(\bm Q_n(z)-\bm Q_n(w))\bm\xi\bigr|
		+
		\bigl|\bm\xi^\top(\bm Q_n^{\,e}(z)-\bm Q_n^{\,e}(w))\bm\xi\bigr|\\
		&\le
		|z-w|\,\|\bm Q_n(z)\|\|\bm Q_n(w)\|
		+
		|z-w|\,\|\bm Q_n^{\,e}(z)\|\|\bm Q_n^{\,e}(w)\|\\
		&\le C_j|z-w|,
	\end{align*}
	where \(C_j>0\) is deterministic.
	
	Fix \(\varepsilon>0\). Since \(\mathcal C_j\setminus\mathbb R\) is dense in \(\mathcal C_j\),
	choose a finite \(\delta\)-net
	\(\{z_1,\dots,z_{N_\delta}\}\subset\mathcal C_j\setminus\mathbb R\), where
	\(\delta>0\) is chosen so small that \(C_j\delta\le\varepsilon/2\).
	Then for every \(z\in\mathcal C_j\), there exists \(z_\ell\) such that \(|z-z_\ell|\le \delta\).
	On \(E_n\),
	\[
	|D_n(z)|\le |D_n(z_\ell)|+C_j\delta
	\le |D_n(z_\ell)|+\frac{\varepsilon}{2}.
	\]
	Hence, on $E_n$,
	\[
	\sup_{z\in\mathcal C_j}|D_n(z)|
	\le
	\max_{1\le \ell\le N_\delta}|D_n(z_\ell)|+\frac{\varepsilon}{2}.
	\]
	Therefore
	\[
	\mathbb P\!\left(\sup_{z\in\mathcal C_j}|D_n(z)|>\varepsilon\right)
	\le
	\mathbb P(E_n^c)
	+
	\sum_{\ell=1}^{N_\delta}
	\mathbb P\!\left(|D_n(z_\ell)|>\frac{\varepsilon}{2}\right).
	\]
	Since $\mathbb P(E_n^c)\to 0$ and $D_n(z_\ell)\xrightarrow{p}0$ for each fixed $\ell$, we obtain
	\begin{equation}
		\sup_{z\in\mathcal C_j}
		\left|
		\bm\xi^\top\bm Q_n(z)\bm\xi-\bm\xi^\top\bm Q_n^{\,e}(z)\bm\xi
		\right|
		\xrightarrow{p}0.
		\label{eq:uniform_bridge_prob}
	\end{equation}
	
	Using the contour representations of the projectors, we get
	\begin{align*}
		\left|\bm\xi^\top(\widehat{\bm P}_j-\widetilde{\bm P}_j)\bm\xi\right|
		&=
		\left|
		\frac{1}{2\pi i}\oint_{\mathcal C_j}
		\left(
		\bm\xi^\top\bm Q_n(z)\bm\xi-\bm\xi^\top\bm Q_n^{\,e}(z)\bm\xi
		\right)\,dz
		\right|\\
		&\le
		\frac{\operatorname{length}(\mathcal C_j)}{2\pi}
		\sup_{z\in\mathcal C_j}
		\left|
		\bm\xi^\top\bm Q_n(z)\bm\xi-\bm\xi^\top\bm Q_n^{\,e}(z)\bm\xi
		\right|.
	\end{align*}
	Together with \eqref{eq:uniform_bridge_prob}, this proves \eqref{eq:projector_bridge_prob}.
	
	Finally, by Lemma~\ref{thm:uncentered_overlap_target_short},
	\[
	\bm\xi^\top \widetilde{\bm P}_j \bm\xi
	=
	\bm\xi^\top \widetilde{\bm v}_j\widetilde{\bm v}_j^\top \bm\xi
	\;\xrightarrow{\mathrm{a.s.}}\;
	\frac{\lambda_j^2-\gamma}{\lambda_j(\lambda_j+\gamma)}
	\,\bm\xi^\top \bm v_j\bm v_j^\top \bm\xi.
	\]
	Hence also
	\[
	\bm\xi^\top \widetilde{\bm P}_j \bm\xi
	\;\xrightarrow{p}\;
	\frac{\lambda_j^2-\gamma}{\lambda_j(\lambda_j+\gamma)}
	\,\bm\xi^\top \bm v_j\bm v_j^\top \bm\xi.
	\]
	Combining this with \eqref{eq:projector_bridge_prob}, we obtain
	\[
	\bm\xi^\top \widehat{\bm v}_j\widehat{\bm v}_j^\top \bm\xi
	=
	\bm\xi^\top \widehat{\bm P}_j\bm\xi
	=
	\bm\xi^\top \widetilde{\bm P}_j\bm\xi
	+
	o_p(1),
	\]
	and therefore
	\[
	\bm\xi^\top \widehat{\bm v}_j\widehat{\bm v}_j^\top \bm\xi
	\;\xrightarrow{p}\;
	\frac{\lambda_j^2-\gamma}{\lambda_j(\lambda_j+\gamma)}
	\,\bm\xi^\top \bm v_j\bm v_j^\top \bm\xi.
	\]
	This completes the proof.
	\subsection{Proof of Theorem \ref{thm:err_homo}}
	\begin{lemma}\label{lem:Ap_positive}
		Under Assumptions~\ref{ass1}--\ref{ass5}, the matrix \(\bm A_p\) is positive
		definite. More precisely,
		\[
		\lambda_{\min}(\bm A_p)
		\ge
		\min_{1\le k\le K}\tau_k>0.
		\]
	\end{lemma}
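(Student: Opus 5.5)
The plan is to write \(\bm A_p\) as a rank-one matrix plus a diagonal matrix and show that the rank-one coefficient is nonnegative. By definition,
\[
\bm A_p = s_p\,\bm 1_K\bm 1_K^\top + \diag\!\Bigl(\frac{\alpha_k^2}{\sigma^2}+\tau_k\Bigr)_{k=1}^K,
\qquad
s_p:=\frac{\|\bar{\bm\mu}\|^2}{\sigma^2}+\sum_{j\in\mathcal I}c_j\frac{(\bar{\bm\mu}^\top\bm v_j)^2}{\sigma^2}.
\]
For any \(\bm w\in\mathbb R^K\) this gives
\[
\bm w^\top\bm A_p\bm w = s_p(\bm 1_K^\top\bm w)^2+\sum_k\Bigl(\frac{\alpha_k^2}{\sigma^2}+\tau_k\Bigr)w_k^2 .
\]
If \(s_p\ge0\), the right-hand side is at least \(\min_k\tau_k\|\bm w\|^2\). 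Moreover \(\tau_k=\gamma_{k,+}+\gamma_{k,-}>0\) by Assumption~\ref{ass4}. So the lemma reduces to proving \(s_p\ge 0\).

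For \(s_p\ge0\), I use that \(\{\bm v_j\}\) is orthonormal, so Bessel's inequality gives \(\|\bar{\bm\mu}\|^2\ge\sum_{j\in\mathcal I}(\bar{\bm\mu}^\top\bm v_j)^2\). Hence
\[
s_p\ \ge\ \sigma^{-2}\sum_{j\in\mathcal I}(1+c_j)(\bar{\bm\mu}^\top\bm v_j)^2 ,
\]
and it suffices to show \(1+c_j\ge0\) for every \(j\in\mathcal I\). This is a scalar inequality in \(\lambda_j\) and \(\gamma\), and I expect it to be the only real obstacle.

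\emph{Step 1: bound \(a_j\).} I first show \(a_j\in(0,1)\) under Assumption~\ref{ass5}.
\begin{itemize}
\item For \(\lambda_j>\sqrt\gamma\), the numerator \(\lambda_j^2-\gamma\) and the denominator \(\lambda_j(\lambda_j+\gamma)\) are both positive, and \(\lambda_j^2-\gamma<\lambda_j^2+\lambda_j\gamma\).
\item For \(-1<\lambda_j<-\sqrt\gamma\), we have \(\gamma<1\) (since \(\mathcal I_-=\varnothing\) when \(\gamma\ge1\)). Then \(\lambda_j+\gamma<\gamma-\sqrt\gamma<0\), so the denominator is positive, and \(a_j<1\) is equivalent to \(-\gamma<\lambda_j\gamma\), i.e. \(\lambda_j>-1\).
\end{itemize}

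\emph{Step 2: view \(1+c_j\) as a quadratic in \(a=a_j\).} With \(b=b_j=\lambda/(1+\lambda)\) and \(\lambda=\lambda_j\),
\[
q(a)=1+\lambda+(b^2-2b-2\lambda b)\,a+\lambda b^2a^2 .
\]
I will show \(q\ge 0\) on \([0,1]\). The checks are the endpoint values \(q(0)=1+\lambda>0\) and \(q(1)\), together with the sign of the vertex value when the vertex falls inside \([0,1]\). The identity \((1+\lambda)b=\lambda\) should simplify these expressions considerably. This is split into the cases \(\lambda>0\) (convex \(q\)) and \(-1<\lambda<0\) (concave \(q\), where nonnegativity at the two endpoints suffices).

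\emph{Fallback.} If the direct algebra becomes messy, the alternative route is to identify \(1+c_j\) as the limit of a manifestly nonnegative quantity, namely the normalized variance \(\bm x^\top\widehat{\bm\Sigma}^{-1}\bm\Sigma\widehat{\bm\Sigma}^{-1}\bm x\) restricted to the direction \(\bm v_j\). That is exactly how \(c_j\) arises in the proof of Theorem~\ref{thm:err_homo} via Theorem~\ref{thm:centered_overlap_prob}. Such a limit is automatically \(\ge0\) (up to the \(\|\bar{\bm\mu}\|^2\) contribution), which would also give \(1+c_j\ge0\).

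Once \(s_p\ge0\) is established, the eigenvalue bound \(\lambda_{\min}(\bm A_p)\ge\min_k\tau_k\) follows immediately from the quadratic-form inequality above.
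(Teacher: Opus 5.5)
Your proposal is correct and follows the paper's proof essentially step for step. Both arguments use the same decomposition $\bm A_p=s_p\bm 1_K\bm 1_K^\top+\diag(\alpha_k^2/\sigma^2+\tau_k)$, the same reduction of $s_p\ge0$ to $1+c_j\ge0$ (your Bessel inequality plays the role of the paper's orthogonal split of $\bar{\bm\mu}$), and the same bound $0<a_j<1$.

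The only difference is the scalar step. The paper uses the identity $1+c_j=(1+\lambda_j)(1-b_ja_j)^2+b_j^2a_j(1-a_j)$, which settles it in one line. Your quadratic-in-$a$ route also closes:
\begin{itemize}
\item $q(1)=(1+\lambda_j)(1-b_j)^2=1/(1+\lambda_j)>0$.
\item For $\lambda_j>0$ the vertex sits at $a^*=1+(3\lambda_j+2)/(2\lambda_j^2)>1$, so no interior case arises.
\item For $-1<\lambda_j<0$, concavity reduces the check to the two positive endpoints $q(0)$ and $q(1)$.
\end{itemize}
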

	
	\begin{proof}
		Write
		\[
		A_{kk',p}=s_p+d_k\ind_{\{k=k'\}},
		\]
		where
		\[
		s_p=
		\frac{\|\bar{\bm\mu}\|^2}{\sigma^2}
		+
		\sum_{j\in\mathcal I}c_j
		\frac{(\bar{\bm\mu}^{\top}\bm v_j)^2}{\sigma^2},
		\qquad
		d_k=\frac{\alpha_k^2}{\sigma^2}+\tau_k .
		\]
		Under Assumption~\ref{ass5}, \(0<a_j<1\). Moreover,
		\[
		1+c_j
		=
		(1+\lambda_j)(1-b_ja_j)^2+b_j^2a_j(1-a_j)>0 .
		\]
		Let \(\bar{\bm\mu}_{\perp}\) be the component of \(\bar{\bm\mu}\) orthogonal to
		\(\operatorname{span}\{\bm v_j:j\in\mathcal I\}\). Then
		\[
		s_p
		=
		\frac{1}{\sigma^2}
		\left[
		\|\bar{\bm\mu}_{\perp}\|^2
		+
		\sum_{j\in\mathcal I}
		(1+c_j)(\bar{\bm\mu}^{\top}\bm v_j)^2
		\right]
		\ge0 .
		\]
		Hence, for any \(\bm w\neq\bm0\),
		\[
		\bm w^\top\bm A_p\bm w
		=
		s_p\left(\sum_{k=1}^K w_k\right)^2
		+
		\sum_{k=1}^K d_kw_k^2
		\ge
		\min_{1\le k\le K}\tau_k\,\|\bm w\|^2>0 .
		\]
		This proves the claim.
	\end{proof}
	
	\begin{lemma}\label{lem:ek_vhat_zero}
		Suppose that Assumptions~\ref{ass1}--\ref{ass5} hold. For each \(k=1,\dots,K\), define
		\[
		e_k:=\widehat{\bm\mu}_k-\bm\mu_k.
		\]
		Then, for every fixed \(k\in\{1,\dots,K\}\) and every fixed \(j\in\mathcal I\),
		\[
		e_k^\top \widehat{\bm v}_j \xrightarrow{p} 0.
		\]
	\end{lemma}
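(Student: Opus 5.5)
The starting point is an explicit form for $e_k$. Under Assumption~\ref{ass1},
\[
e_k=\widehat{\bm\mu}_k-\bm\mu_k=\bm\Sigma^{1/2}\bm Z^\top \bm h_k,
\]
where $\bm h_k\in\mathbb R^n$ is supported on the rows of domain $k$, with entries $1/n_{k,+}$ on the class $+1$ rows and $-1/n_{k,-}$ on the class $-1$ rows. Two observations follow.

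First, $\|\bm h_k\|^2=1/n_{k,+}+1/n_{k,-}$, so $\|e_k\|^2=O_p(p/n_k)=O_p(1)$ does not vanish. Hence $e_k$ is not small in norm.

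Second, $\bm h_k$ lies in the span of the block-indicator vectors, i.e.\ in the column space of $\bm U$ from the proof of Lemma~\ref{lem:center_uncentered_fixed}, so $\bm G_K\bm h_k=\bm 0$. This means $e_k$ is built from exactly the part of the noise that centering removes from $\widetilde{\bm S}_n$. Therefore $e_k$ is independent of $\widetilde{\bm S}_n$ when $\bm Z$ is Gaussian, and it is "asymptotically free" of it in general.

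The plan is to exploit this structure rather than norm bounds. Since $\widehat{\bm v}_j$ is an eigenvector of $\widetilde{\bm S}_n$, we will show $e_k^\top\widehat{\bm P}_j e_k\xrightarrow{p}0$, which is equivalent to $(e_k^\top\widehat{\bm v}_j)^2\to0$.

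\textbf{Gaussian case.} Here $\bm Z^\top\bm G_K\bm Z$ and $\bm Z^\top\bm U$ are independent, because the projections $\bm G_K$ and $\bm U\bm U^\top$ are orthogonal. Conditioning on $\widetilde{\bm S}_n$, the vector $e_k$ is $N(\bm 0,\|\bm h_k\|^2\bm\Sigma)$. Then
\[
\mathbb E\bigl[(e_k^\top\widehat{\bm v}_j)^2\mid \widetilde{\bm S}_n\bigr]=\|\bm h_k\|^2\,\widehat{\bm v}_j^\top\bm\Sigma\widehat{\bm v}_j\le \|\bm\Sigma\|\,(1/n_{k,+}+1/n_{k,-})\to0,
\]
and the claim follows from Markov's inequality.

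\textbf{General case.} Write $\bm Z^\top\bm h_k=\|\bm h_k\|\,\bm Z^\top\bm u$ with $\bm u=\bm h_k/\|\bm h_k\|$ a unit column of the span of $\bm U$, and $\bm Z^\top\bm u=\sum_i u_i(\bm Z)_i$. We use the contour representation of the projector:
\[
e_k^\top\widehat{\bm P}_je_k=-\frac{1}{2\pi i}\oint_{\mathcal C_j}e_k^\top\bm Q_n(z)e_k\,dz
\]
on the event $E_n$ from the proof of Theorem~\ref{thm:centered_overlap_prob}. The goal is then to show that, uniformly on $\mathcal C_j$,
\[
e_k^\top\bm Q_n(z)e_k=\|\bm h_k\|^2\,\tr\bigl(\bm\Sigma\bm Q_n(z)\bigr)\cdot(\text{deterministic limit})+o_p(1),
\]
with this main term analytic inside $\mathcal C_j$. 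Equivalently, the quadratic form in the "removed" directions $\bm Z^\top\bm u$ should behave like a trace against a resolvent that has no pole at $\widehat\ell_j$ in the limit. If that holds, the contour integral of the leading term vanishes, just as the $\widetilde{\bm\xi}^\top\bm R\widetilde{\bm\xi}$ term did in Lemma~\ref{thm:uncentered_overlap_target_short}.

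To carry this out, we condition on the rows of $\bm Z$ and apply the standard quadratic-form concentration (the trace lemma with finite fourth moments) to $(\bm Z^\top\bm u)^\top\bm M(\bm Z^\top\bm u)$ with $\bm M=\bm\Sigma^{1/2}\bm Q_n\bm\Sigma^{1/2}$. The dependence between $\bm Z^\top\bm u$ and $\bm Q_n$ comes only through a rank-$2K$ structure. We would handle it with the Woodbury expansion relating $\bm Z^\top\bm G_K\bm Z$ to $\bm Z^\top\bm Z$, as in Lemma~\ref{lem:center_uncentered_fixed}.

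A simpler alternative is to sidestep resolvents and use a Lindeberg-type universality replacement. The functional $(e_k^\top\widehat{\bm v}_j)^2$ can be smoothed via the contour integral, and it is then a smooth function of $\bm Z$ on $E_n$. Comparing it with its Gaussian counterpart entrywise, using fourth moments, reduces the claim to the Gaussian case above.

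The main obstacle is this non-Gaussian step: controlling the weak dependence between $e_k$ and $\widehat{\bm v}_j$ without independence. I would first try the Woodbury/rank-$2K$ decoupling, since it reuses the machinery already cited for Lemma~\ref{lem:center_uncentered_fixed}.
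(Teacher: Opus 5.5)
Your Gaussian argument is correct and neat. Since $\bm G_K\bm U=\bm 0$, the matrices $\bm G_K\bm Z$ and $\bm U^\top\bm Z$ are independent, so given $\widehat{\bm v}_j$ the vector $e_k$ is $N(\bm 0,\|\bm h_k\|^2\bm\Sigma)$, and the conditional second moment is $O(n_{k,+}^{-1}+n_{k,-}^{-1})$. However, Assumption~\ref{ass1} only gives finite fourth moments, so the non-Gaussian case is the real content of the lemma, and there your proposal has a genuine gap.

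\textbf{The trace-lemma step fails.} You propose to apply the trace lemma to $(\bm Z^\top\bm u)^\top\bm M(\bm Z^\top\bm u)$ with $\bm M=\bm\Sigma^{1/2}\bm Q_n\bm\Sigma^{1/2}$. The trace lemma needs the random vector to be independent of the matrix. Here $\bm Z^\top\bm u=\sum_i u_i(\bm Z)_i$ and $\bm Q_n$ are functions of the same rows of $\bm Z$. Outside the Gaussian case, no conditioning makes them independent.

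\textbf{The dependence is not rank-$2K$.} The identity $\bm Q_n-\bm Q_n^{\,e}=\bm Q_n\bm R_n\bm Q_n^{\,e}$ (your Woodbury step) does remove the centering. But $\bm Q_n^{\,e}$ still depends on every row of the block, and each row enters $\bm Z^\top\bm u$ with weight of order $n^{-1/2}$. This diffuse dependence must be removed row by row, using a Sherman--Morrison leave-one-out together with a martingale-difference variance bound. Because you work with the quadratic form $e_k^\top\bm Q_n e_k$, whose limit is nonzero, you would also need a deterministic equivalent for the $i\neq i'$ cross terms, which typically requires a double leave-one-out.

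\textbf{The Lindeberg alternative is only named.} With four moments and a functional that is smooth only on $E_n$, it would need truncation, a smooth surrogate for $E_n$, and third-derivative bounds summed over $np$ entries. None of these is supplied.

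\textbf{The missing idea is the paper's linearization.} Since $\widehat{\bm P}_j\bm v_j=(\widehat{\bm v}_j^\top\bm v_j)\widehat{\bm v}_j$, and $|\widehat{\bm v}_j^\top\bm v_j|^2\to a_j>0$ by Theorem~\ref{thm:centered_overlap_prob}, it suffices to show $\bar{\bm z}^\top\bm\Sigma^{1/2}\widehat{\bm P}_j\bm v_j\to0$ for each block mean $\bar{\bm z}$. By the contour formula, this reduces to $\sup_{z\in\mathcal C_j}|\bar{\bm z}^\top\bm\Sigma^{1/2}\bm Q_n(z)\bm v_j|\to0$. This is a bilinear form with one deterministic vector and limit zero, so a single leave-one-out suffices. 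The paper proceeds in three steps:
\begin{enumerate}
\item It proves $\bar{\bm z}_{a,t}^\top\bm\Sigma^{1/2}\bm Q_n^{\,e}(z)\bm\xi\to0$ pointwise. A martingale decomposition over the block's rows gives variance $O(n^{-1})$, and a Sherman--Morrison bound gives conditional mean $O(n^{-1/2})$.
\item It transfers this to $\bm Q_n$ via $\bm Q_n-\bm Q_n^{\,e}=\bm Q_n\bm R_n\bm Q_n^{\,e}$, where each of the $2K$ terms is $O_p(1)\cdot o_p(1)$.
\item It upgrades to uniform convergence on $\mathcal C_j$ by the Lipschitz and finite-net argument on $E_n$.
\end{enumerate}
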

	
	\begin{proof}
		Without loss of generality, we take \(\sigma^2=1\), since multiplying the covariance matrix by a positive scalar does not change its eigenvectors or spectral projectors. Recall that
		\[
		\widehat{\bm\mu}_k
		=
		\widehat{\bm\mu}_{+1,k}-\widehat{\bm\mu}_{-1,k},
		\qquad
		\bm\mu_k=\bm\mu_{+1,k}-\bm\mu_{-1,k}.
		\]
		Under Assumption~\ref{ass1}, write
		\[
		(\bm X_k)_i
		=
		\bm\mu_{(y_k)_i,k}
		+
		\bm\Sigma^{1/2}(\bm z_k)_i,
		\]
		where \((\bm z_k)_i\in\mathbb R^p\) denotes the corresponding whitened sample vector. Define the classwise whitened sample means by
		\[
		\bar{\bm z}_{k,\pm1}
		:=
		\frac{1}{n_{k,\pm1}}\sum_{i:(y_k)_i=\pm1}(\bm z_k)_i.
		\]
		Then
		\[
		\widehat{\bm\mu}_{\pm1,k}
		=
		\bm\mu_{\pm1,k}
		+
		\bm\Sigma^{1/2}\bar{\bm z}_{k,\pm1},
		\]
		and hence
		\[
		e_k
		=
		\widehat{\bm\mu}_k-\bm\mu_k
		=
		\bm\Sigma^{1/2}\bigl(\bar{\bm z}_{k,+}-\bar{\bm z}_{k,-}\bigr).
		\]
		Therefore it suffices to prove that, for each fixed sign \(s\in\{+,-\}\),
		\[
		\bar{\bm z}_{k,s}^\top \bm\Sigma^{1/2}\widehat{\bm v}_j \xrightarrow{p} 0.
		\]
		Fix one such block mean and write
		\[
		\bar{\bm z}:=\bar{\bm z}_{k,s}.
		\]
		
		We first express \(\widehat{\bm v}_j\) through the spectral projector associated with the
		centered pooled sample covariance matrix. Recall from the proof of
		Theorem~\ref{thm:centered_overlap_prob} that
		\[
		\widetilde{\bm S}_n
		=
		\frac1n\,\bm\Sigma^{1/2}\bm Z^\top \bm G_K \bm Z\bm\Sigma^{1/2},
		\qquad
		\bm S_n=\frac{n}{n-2K}\,\widetilde{\bm S}_n,
		\]
		so that \(\bm S_n\) and \(\widetilde{\bm S}_n\) have exactly the same eigenvectors. In
		particular, \(\widehat{\bm v}_j\) is also the eigenvector of \(\widetilde{\bm S}_n\)
		associated with the centered outlier \(\widehat\ell_j\).
		
		Let
		\[
		\widehat{\bm P}_j:=\widehat{\bm v}_j\widehat{\bm v}_j^\top.
		\]
		By the proof of
		Theorem~\ref{thm:centered_overlap_prob}, we may take the same contour
		\(\mathcal C_j=\vartheta(\Gamma_j)\) as above. With probability tending to one,
		\(\mathcal C_j\) encloses exactly one eigenvalue of \(\widetilde{\bm S}_n\), namely
		\(\widehat\ell_j\), and
		\[
		\widehat{\bm P}_j
		=
		-\frac{1}{2\pi i}\oint_{\mathcal C_j}\bm Q_n(z)\,dz,
		\qquad
		\bm Q_n(z):=(\widetilde{\bm S}_n-z\bm I_p)^{-1}.
		\]
		Also,
		\[
		\widehat{\bm P}_j\bm v_j
		=
		(\widehat{\bm v}_j^\top \bm v_j)\widehat{\bm v}_j.
		\]
		Hence
		\[
		\bar{\bm z}^\top \bm\Sigma^{1/2}\widehat{\bm v}_j
		=
		\frac{\bar{\bm z}^\top \bm\Sigma^{1/2}\widehat{\bm P}_j\bm v_j}
		{\widehat{\bm v}_j^\top \bm v_j},
		\]
		whenever \(\widehat{\bm v}_j^\top \bm v_j\neq 0\).
		
		We now show that the denominator is bounded away from zero with probability tending to one.
		By Theorem~\ref{thm:centered_overlap_prob}, taking \(\bm\xi=\bm v_j\), we have
		\[
		\bm v_j^\top \widehat{\bm P}_j \bm v_j
		=
		\bm v_j^\top \widehat{\bm v}_j\widehat{\bm v}_j^\top \bm v_j
		=
		|\widehat{\bm v}_j^\top \bm v_j|^2
		\xrightarrow{p}
		\frac{\lambda_j^2-\gamma}{\lambda_j(\lambda_j+\gamma)}.
		\]
		Since Assumption~\ref{ass5} implies
		\[
		\frac{\lambda_j^2-\gamma}{\lambda_j(\lambda_j+\gamma)}>0,
		\]
		there exists a constant \(c_j>0\) such that
		\[
		\mathbb P\bigl(|\widehat{\bm v}_j^\top \bm v_j|\ge c_j\bigr)\to 1.
		\]
		Therefore, it remains to prove that
		\[
		\bar{\bm z}^\top \bm\Sigma^{1/2}\widehat{\bm P}_j\bm v_j \xrightarrow{p}0.
		\]
		
		Using the contour representation of \(\widehat{\bm P}_j\),
		\[
		\bar{\bm z}^\top \bm\Sigma^{1/2}\widehat{\bm P}_j\bm v_j
		=
		-\frac{1}{2\pi i}\oint_{\mathcal C_j}
		\bar{\bm z}^\top \bm\Sigma^{1/2}\bm Q_n(z)\bm v_j\,dz.
		\]
		Thus it suffices to show that
		\begin{equation}\label{eq:blockmean_resolvent_uniform}
			\sup_{z\in\mathcal C_j}
			\left|
			\bar{\bm z}^\top \bm\Sigma^{1/2}\bm Q_n(z)\bm v_j
			\right|
			\xrightarrow{p}0.
		\end{equation}
		
		Recall
		\[
		\bm Q_n^{\,e}(z):=(\widetilde{\bm S}_n^{\,e}-z\bm I_p)^{-1},
		\qquad
		\widetilde{\bm S}_n^{\,e}
		=
		\frac1n\,\bm\Sigma^{1/2}\bm Z^\top \bm Z\bm\Sigma^{1/2}.
		\]
		We first prove that, for any fixed block mean \(\bar{\bm z}_{a,t}\) and any
		deterministic unit vector \(\bm\xi\),
		\[
		\bar{\bm z}_{a,t}^{\top}\bm\Sigma^{1/2}\bm Q_n^{\,e}(z)\bm\xi
		\xrightarrow{p}0,\qquad z\in\mathbb C^+\ \text{fixed}.
		\]
		Fix the block \((a,t)\), write \(m=n_{a,t}\), and denote its whitened
		observations by \(\bm z_1,\ldots,\bm z_m\). Let
		\[
		\bm y_i=\frac1{\sqrt n}\bm\Sigma^{1/2}\bm z_i,
		\qquad
		\bm Q(z)=\bm Q_n^{\,e}(z).
		\]
		Then
		\[
		\bar{\bm z}_{a,t}^{\top}\bm\Sigma^{1/2}\bm Q_n^{\,e}(z)\bm\xi
		=
		\frac{\sqrt n}{m}\sum_{i=1}^m \bm y_i^\top \bm Q(z)\bm\xi .
		\]
		It is enough to prove
		\[
		L:=
		\frac{\sqrt n}{m}\sum_{i=1}^m \bm y_i^\top \bm Q(z)\bm\xi=o_p(1).
		\]
		
		Let \(\mathcal G\) be the \(\sigma\)-field generated by all observations outside
		this block. Define
		\[
		\mathcal F_0=\mathcal G,\qquad
		\mathcal F_i=\sigma(\mathcal G,\bm z_1,\ldots,\bm z_i),
		\quad i=1,\ldots,m.
		\]
		Write
		\[
		L-E(L\mid\mathcal G)=\sum_{i=1}^m \Delta_i,\qquad
		\Delta_i=E(L\mid\mathcal F_i)-E(L\mid\mathcal F_{i-1}).
		\]
		Let \(L^{(i)}\) be obtained from \(L\) by replacing \(\bm y_i\) with an
		independent copy \(\bm y_i'\). Then
		\[
		E(L^{(i)}\mid\mathcal F_{i-1})=E(L\mid\mathcal F_{i-1}),
		\]
		and hence
		\[
		\Delta_i=E(L-L^{(i)}\mid\mathcal F_i).
		\]
		By Jensen's inequality and the orthogonality of martingale differences,
		\[
		E|L-E(L\mid\mathcal G)|^2
		=
		\sum_{i=1}^m E|\Delta_i|^2
		\le
		\sum_{i=1}^m E|L-L^{(i)}|^2 .
		\]
		
		For each \(i\), let \(\bm Q_{(i)}(z)\) be the resolvent with \(\bm y_i\) removed, and set
		\[
		\beta_i(z)=\frac{1}{1+\bm y_i^\top\bm Q_{(i)}(z)\bm y_i},
		\qquad
		\beta_i'(z)=\frac{1}{1+(\bm y_i')^\top\bm Q_{(i)}(z)\bm y_i'} .
		\]
		Since \(z\in\mathbb C^+\) is fixed,
		\[
		|\beta_i(z)|+|\beta_i'(z)|\le C_z,\qquad
		\|\bm Q_{(i)}(z)\|\le C_z .
		\]
		By the Sherman--Morrison formula,
		\[
		\bm Q(z)=\bm Q_{(i)}(z)
		-\beta_i(z)\bm Q_{(i)}(z)\bm y_i\bm y_i^\top\bm Q_{(i)}(z),
		\]
		and the analogous identity holds after replacing \(\bm y_i\) by \(\bm y_i'\). Thus
		\[
		\begin{aligned}
			L-L^{(i)}
			=&
			\frac{\sqrt n}{m}
			\Big[
			\beta_i(z)\,\bm y_i^\top\bm Q_{(i)}(z)\bm\xi
			-\beta_i'(z)\,(\bm y_i')^\top\bm Q_{(i)}(z)\bm\xi  \\
			&
			-\beta_i(z)
			\Big(\sum_{\ell\ne i}\bm y_\ell^\top\bm Q_{(i)}(z)\bm y_i\Big)
			\bm y_i^\top\bm Q_{(i)}(z)\bm\xi \\
			&
			+\beta_i'(z)
			\Big(\sum_{\ell\ne i}\bm y_\ell^\top\bm Q_{(i)}(z)\bm y_i'\Big)
			(\bm y_i')^\top\bm Q_{(i)}(z)\bm\xi
			\Big].
		\end{aligned}
		\]
		
		Let
		\[
		\mathcal H_i
		=
		\sigma\bigl(\mathcal G,\{\bm y_\ell:\ell\ne i,\ 1\le \ell\le m\}\bigr).
		\]
		Conditional on \(\mathcal H_i\), \(\bm Q_{(i)}(z)\) and
		\(\sum_{\ell\ne i}\bm y_\ell\) are fixed, and \(\bm y_i\) is independent of them.
		Thus, for any \(\mathcal H_i\)-measurable matrices \(\bm A,\bm B\) with
		\(\|\bm A\|+\|\bm B\|\le C_z\) and any \(\mathcal H_i\)-measurable vectors
		\(\bm u,\bm v\), the finite fourth moment assumption gives
		\[
		E\left(
		|\bm y_i^\top\bm A\bm u|^2
		\mid \mathcal H_i
		\right)
		\le
		\frac{C_z}{n}\|\bm u\|^2 .
		\]
		Moreover, by the Cauchy--Schwarz inequality,
		\[
		\begin{aligned}
			&E\left(
			|\bm y_i^\top\bm A\bm u|^2
			|\bm y_i^\top\bm B\bm v|^2
			\mid \mathcal H_i
			\right) \\
			&\qquad\le
			\left\{
			E\left(
			|\bm y_i^\top\bm A\bm u|^4
			\mid \mathcal H_i
			\right)
			\right\}^{1/2}
			\left\{
			E\left(
			|\bm y_i^\top\bm B\bm v|^4
			\mid \mathcal H_i
			\right)
			\right\}^{1/2} \\
			&\qquad\le
			\frac{C_z}{n^2}\|\bm u\|^2\|\bm v\|^2 .
		\end{aligned}
		\]
		Therefore,
		\[
		E\left(
		|\beta_i(z)\bm y_i^\top\bm Q_{(i)}(z)\bm\xi|^2
		\mid\mathcal H_i
		\right)
		\le \frac{C_z}{n},
		\]
		and
		\[
		\begin{aligned}
			&E\left(
			\left|
			\beta_i(z)
			\Big(\sum_{\ell\ne i}\bm y_\ell^\top\bm Q_{(i)}(z)\bm y_i\Big)
			\bm y_i^\top\bm Q_{(i)}(z)\bm\xi
			\right|^2
			\mid\mathcal H_i
			\right)  \\
			&\qquad\le
			\frac{C_z}{n^2}
			\left\|\sum_{\ell\ne i}\bm y_\ell\right\|^2 .
		\end{aligned}
		\]
		The same bounds hold for the terms involving \(\bm y_i'\). Hence, by
		\(|x_1+\cdots+x_4|^2\le 4\sum_{r=1}^4|x_r|^2\),
		\[
		E|L-L^{(i)}|^2
		\le
		\frac{Cn}{m^2}
		\left\{
		\frac1n
		+
		\frac1{n^2}
		E\left\|\sum_{\ell\ne i}\bm y_\ell\right\|^2
		\right\}=O(n^{-2}).
		\]
		Consequently,
		\[
		E|L-E(L\mid\mathcal G)|^2
		\le
		\sum_{i=1}^m O(n^{-2})
		=O(n^{-1}),
		\]
		and hence
		\[
		L-E(L\mid\mathcal G)=o_p(1).
		\]
		
		It remains to prove \(E(L\mid\mathcal G)=o_p(1)\). By exchangeability,
		\[
		E(L\mid\mathcal G)
		=
		\sqrt n\,
		E\left(\bm y_1^\top \bm Q(z)\bm\xi\mid\mathcal G\right).
		\]
		Let
		\[
		\mathcal H_1=\sigma(\mathcal G,\bm y_2,\ldots,\bm y_m).
		\]
		Conditional on \(\mathcal H_1\), \(\bm Q_{(1)}(z)\) is fixed and
		\[
		\bm y_1^\top\bm Q(z)\bm\xi
		=
		\frac{\bm y_1^\top\bm Q_{(1)}(z)\bm\xi}
		{1+\bm y_1^\top\bm Q_{(1)}(z)\bm y_1}.
		\]
		Set
		\[
		c_1=1+\frac1n\operatorname{tr}\{\bm\Sigma\bm Q_{(1)}(z)\},
		\qquad
		\Delta_1=
		\bm y_1^\top\bm Q_{(1)}(z)\bm y_1
		-\frac1n\operatorname{tr}\{\bm\Sigma\bm Q_{(1)}(z)\}.
		\]
		Since
		\[
		E\left(\bm y_1^\top\bm Q_{(1)}(z)\bm\xi\mid\mathcal H_1\right)=0,
		\]
		we have
		\[
		\begin{aligned}
			&\left|
			E\left(
			\bm y_1^\top\bm Q(z)\bm\xi
			\mid\mathcal H_1
			\right)
			\right|=
			\left|
			E\left(
			\frac{
				\bm y_1^\top\bm Q_{(1)}(z)\bm\xi
			}{
				c_1+\Delta_1
			}
			\mid\mathcal H_1
			\right)
			\right|=
			\left|
			E\left(
			-\frac{
				\bm y_1^\top\bm Q_{(1)}(z)\bm\xi\,\Delta_1
			}{
				c_1(c_1+\Delta_1)
			}
			\mid\mathcal H_1
			\right)
			\right| .
		\end{aligned}
		\]
		By the standard resolvent denominator bounds, as in the bounds for
		\(\beta_i(z)\) and its deterministic counterpart,
		\[
		|c_1|^{-1}\le C_z,\qquad |c_1+\Delta_1|^{-1}\le C_z .
		\]
		Thus
		\[
		\begin{aligned}
			&\left|
			E\left(
			\bm y_1^\top\bm Q(z)\bm\xi
			\mid\mathcal H_1
			\right)
			\right|  \\
			&\qquad\le
			C_z
			\left\{
			E\left(
			|\bm y_1^\top\bm Q_{(1)}(z)\bm\xi|^2
			\mid\mathcal H_1
			\right)
			\right\}^{1/2}
			\left\{
			E\left(
			|\Delta_1|^2
			\mid\mathcal H_1
			\right)
			\right\}^{1/2}.
		\end{aligned}
		\]
		The finite fourth moment assumption gives
		\[
		E\left(
		|\bm y_1^\top\bm Q_{(1)}(z)\bm\xi|^2
		\mid\mathcal H_1
		\right)
		\le \frac{C_z}{n},
		\]
		and
		\[
		E\left(
		|\Delta_1|^2
		\mid\mathcal H_1
		\right)
		\le \frac{C_z}{n}.
		\]
		Therefore,
		\[
		\left|
		E\left(
		\bm y_1^\top\bm Q(z)\bm\xi
		\mid\mathcal H_1
		\right)
		\right|
		\le C_z n^{-1}.
		\]
		Taking conditional expectation with respect to \(\mathcal G\), we get
		\[
		\left|
		E\left(
		\bm y_1^\top\bm Q(z)\bm\xi
		\mid\mathcal G
		\right)
		\right|
		\le C_z n^{-1}.
		\]
		Consequently,
		\[
		E(L\mid\mathcal G)=O(n^{-1/2})=o(1).
		\]
		Combining the two parts gives \(L=o_p(1)\). Therefore
		\[
		\bar{\bm z}_{a,t}^{\top}\bm\Sigma^{1/2}\bm Q_n^{\,e}(z)\bm\xi
		\xrightarrow{p}0 .
		\]
		
		Taking \(\bm\xi=\bm v_j\), we obtain
		\[
		\bar{\bm z}^\top \bm\Sigma^{1/2}\bm Q_n^{\,e}(z)\bm v_j \xrightarrow{p}0
		\qquad
		(z\in\mathbb C^+\ \text{fixed}).
		\]
		
		Next, by the resolvent identity,
		\[
		\bm Q_n(z)-\bm Q_n^{\,e}(z)
		=
		\bm Q_n(z)\bm R_n \bm Q_n^{\,e}(z),
		\]
		where
		\[
		\bm R_n
		=
		\sum_{a=1}^K \sum_{t\in\{+,-\}} \frac{n_{a,t}}{n}\,
		\bm\Sigma^{1/2}\bar{\bm z}_{a,t}\bar{\bm z}_{a,t}^\top\bm\Sigma^{1/2}.
		\]
		Hence
		\begin{align*}
			\bar{\bm z}^\top \bm\Sigma^{1/2}\bigl(\bm Q_n(z)-\bm Q_n^{\,e}(z)\bigr)\bm v_j
			&=
			\sum_{a=1}^K\sum_{t\in\{+,-\}}
			\frac{n_{a,t}}{n}
			\Bigl(
			\bar{\bm z}^\top \bm\Sigma^{1/2}\bm Q_n(z)\bm\Sigma^{1/2}\bar{\bm z}_{a,t}
			\Bigr)
			\Bigl(
			\bar{\bm z}_{a,t}^\top \bm\Sigma^{1/2}\bm Q_n^{\,e}(z)\bm v_j
			\Bigr).
		\end{align*}
		For each fixed \((a,t)\), the second factor converges to zero in probability by the previous
		paragraph, while the first factor is \(O_p(1)\), because
		\[
		\left|
		\bar{\bm z}^\top \bm\Sigma^{1/2}\bm Q_n(z)\bm\Sigma^{1/2}\bar{\bm z}_{a,t}
		\right|
		\le
		\|\bm\Sigma\|\,\|\bm Q_n(z)\|\,\|\bar{\bm z}\|\,\|\bar{\bm z}_{a,t}\|,
		\]
		and \(\|\bm Q_n(z)\|\le (\Im z)^{-1}\), while \(\|\bar{\bm z}\|=O_p(1)\) and
		\(\|\bar{\bm z}_{a,t}\|=O_p(1)\). Since \(K\) is fixed, the sum has only finitely many terms.
		Therefore
		\[
		\bar{\bm z}^\top \bm\Sigma^{1/2}\bm Q_n(z)\bm v_j
		-
		\bar{\bm z}^\top \bm\Sigma^{1/2}\bm Q_n^{\,e}(z)\bm v_j
		\xrightarrow{p}0
		\qquad
		(z\in\mathbb C^+\ \text{fixed}),
		\]
		and hence
		\[
		\bar{\bm z}^\top \bm\Sigma^{1/2}\bm Q_n(z)\bm v_j \xrightarrow{p}0
		\qquad
		(z\in\mathbb C^+\ \text{fixed}).
		\]
		Because \(\widetilde{\bm S}_n\) is real symmetric, the same convergence holds for every fixed
		\(z\in\mathcal C_j\setminus\mathbb R\).
		
		We now upgrade the above pointwise convergence to uniform convergence on \(\mathcal C_j\).
		By the proof of Theorem~\ref{thm:centered_overlap_prob}, there exists an event \(E_n\) such that
		\[
		\mathbb P(E_n)\to 1
		\]
		and, on \(E_n\), for some constant \(c_j>0\),
		\[
		\operatorname{dist}(\mathcal C_j,\operatorname{spec}(\widetilde{\bm S}_n))\ge c_j.
		\]
		Hence, on \(E_n\),
		\[
		\sup_{z\in\mathcal C_j}\|\bm Q_n(z)\|\le c_j^{-1}.
		\]
		For \(z,w\in\mathcal C_j\), the resolvent identity gives
		\[
		\bm Q_n(z)-\bm Q_n(w)=(w-z)\bm Q_n(z)\bm Q_n(w),
		\]
		so that, on \(E_n\),
		\begin{align*}
			\left|
			\bar{\bm z}^\top \bm\Sigma^{1/2}\bm Q_n(z)\bm v_j
			-
			\bar{\bm z}^\top \bm\Sigma^{1/2}\bm Q_n(w)\bm v_j
			\right|
			&\le
			\|\bar{\bm z}\|\,\|\bm\Sigma^{1/2}\|\,\|\bm Q_n(z)-\bm Q_n(w)\|\\
			&\le
			\|\bar{\bm z}\|\,\|\bm\Sigma^{1/2}\|\,
			|z-w|\,\|\bm Q_n(z)\|\,\|\bm Q_n(w)\|\\
			&\le
			C_j\|\bar{\bm z}\|\,|z-w|,
		\end{align*}
		where \(C_j>0\) is deterministic.
		
		Fix \(\varepsilon>0\). Choose \(M>0\) such that
		\[
		\mathbb P(\|\bar{\bm z}\|>M)<\varepsilon.
		\]
		Since \(\mathcal C_j\setminus\mathbb R\) is dense in \(\mathcal C_j\), choose a finite
		\(\delta\)-net \(\{z_1,\dots,z_N\}\subset \mathcal C_j\setminus\mathbb R\). Then, on
		\(E_n\cap\{\|\bar{\bm z}\|\le M\}\), for every \(z\in\mathcal C_j\),
		\[
		\left|
		\bar{\bm z}^\top \bm\Sigma^{1/2}\bm Q_n(z)\bm v_j
		\right|
		\le
		\max_{1\le \ell\le N}
		\left|
		\bar{\bm z}^\top \bm\Sigma^{1/2}\bm Q_n(z_\ell)\bm v_j
		\right|
		+
		C_jM\delta.
		\]
		Therefore
		\begin{align*}
			\mathbb P\!\left(
			\sup_{z\in\mathcal C_j}
			\left|
			\bar{\bm z}^\top \bm\Sigma^{1/2}\bm Q_n(z)\bm v_j
			\right|>2\varepsilon
			\right)
			&\le
			\mathbb P(E_n^c)
			+
			\mathbb P(\|\bar{\bm z}\|>M)\\
			&\quad
			+
			\sum_{\ell=1}^N
			\mathbb P\!\left(
			\left|
			\bar{\bm z}^\top \bm\Sigma^{1/2}\bm Q_n(z_\ell)\bm v_j
			\right|
			>\varepsilon
			\right)
		\end{align*}
		provided \(\delta>0\) is chosen so small that \(C_jM\delta\le \varepsilon\). Since
		\(\mathbb P(E_n^c)\to 0\), \(\mathbb P(\|\bar{\bm z}\|>M)<\varepsilon\), and each fixed-net
		term tends to zero in probability, we conclude that
		\[
		\sup_{z\in\mathcal C_j}
		\left|
		\bar{\bm z}^\top \bm\Sigma^{1/2}\bm Q_n(z)\bm v_j
		\right|
		\xrightarrow{p}0.
		\]
		This proves \eqref{eq:blockmean_resolvent_uniform}.
		
		Consequently,
		\[
		\bar{\bm z}^\top \bm\Sigma^{1/2}\widehat{\bm P}_j\bm v_j
		=
		-\frac{1}{2\pi i}\oint_{\mathcal C_j}
		\bar{\bm z}^\top \bm\Sigma^{1/2}\bm Q_n(z)\bm v_j\,dz
		\xrightarrow{p}0.
		\]
		Together with the previously established lower bound on
		\(|\widehat{\bm v}_j^\top \bm v_j|\), this yields
		\[
		\bar{\bm z}^\top \bm\Sigma^{1/2}\widehat{\bm v}_j \xrightarrow{p}0.
		\]
		Since \(\bar{\bm z}\) was an arbitrary choice among \(\bar{\bm z}_{k,+}\) and
		\(\bar{\bm z}_{k,-}\), we finally obtain
		\[
		e_k^\top \widehat{\bm v}_j
		=
		\bigl(\bar{\bm z}_{k,+}-\bar{\bm z}_{k,-}\bigr)^\top \bm\Sigma^{1/2}\widehat{\bm v}_j
		\xrightarrow{p}0.
		\]
		This completes the proof.
	\end{proof}
	Next, we formally begin the proof of Theorem~\ref{thm:err_homo}. For \(k=1,\dots,K\), write
	\[
	\eta_{k,\pm1}:=\widehat{\bm\mu}_{\pm1,k}-\bm\mu_{\pm1,k},
	\qquad
	e_k:=\widehat{\bm\mu}_k-\bm\mu_k=\eta_{k,+}-\eta_{k,-},
	\qquad
	s_k:=\eta_{k,+}+\eta_{k,-}.
	\]
	Then
	\[
	\widehat{\bm\mu}_k=\bm\mu_k+e_k,
	\qquad
	2\bm\mu_{+1,K}-\widehat{\bm\mu}_{+1,K}-\widehat{\bm\mu}_{-1,K}
	=
	\bm\mu_K-s_K,
	\]
	and
	\[
	\widehat{\bm\mu}_{+1,K}+\widehat{\bm\mu}_{-1,K}-2\bm\mu_{-1,K}
	=
	\bm\mu_K+s_K.
	\]
	By construction,
	\[
	\widehat{\bm d}(\bm w)=\sum_{k=1}^K w_k\widehat{\bm d}_k,
	\qquad
	\widehat{\bm d}_k=\widehat{\bm\Sigma}^{-1}\widehat{\bm\mu}_k,
	\]
	and the empirical intercept is
	\[
	\widehat b_K
	=
	-\widehat{\bm d}(\bm w)^\top
	\frac{\widehat{\bm\mu}_{+1,K}+\widehat{\bm\mu}_{-1,K}}{2}.
	\]
	
	Define
	\[
	V_n(\bm w):=\widehat{\bm d}(\bm w)^\top \bm\Sigma \widehat{\bm d}(\bm w),
	\]
	and
	\[
	D_{+,n}(\bm w)
	:=
	\bigl(2\bm\mu_{+1,K}-\widehat{\bm\mu}_{+1,K}-\widehat{\bm\mu}_{-1,K}\bigr)^\top
	\widehat{\bm d}(\bm w),
	\]
	\[
	D_{-,n}(\bm w)
	:=
	\bigl(\widehat{\bm\mu}_{+1,K}+\widehat{\bm\mu}_{-1,K}-2\bm\mu_{-1,K}\bigr)^\top
	\widehat{\bm d}(\bm w).
	\]
	Then
	\[
	-\bigl(\widehat{\bm d}(\bm w)^\top\bm\mu_{+1,K}+\widehat b_K\bigr)
	=
	-\frac12 D_{+,n}(\bm w),
	\]
	and
	\[
	\widehat{\bm d}(\bm w)^\top\bm\mu_{-1,K}+\widehat b_K
	=
	-\frac12 D_{-,n}(\bm w).
	\]
	Hence
	\begin{align}\label{eq:err_margin_rep}
		\Err(\bm w)
		&=
		\frac12\Phi\!\left(
		-\frac{D_{-,n}(\bm w)}{2\sqrt{V_n(\bm w)}}
		\right)
		+
		\frac12\Phi\!\left(
		-\frac{D_{+,n}(\bm w)}{2\sqrt{V_n(\bm w)}}
		\right).
	\end{align}
	Therefore it suffices to prove that
	\begin{align}
		D_{+,n}(\bm w)&=\bm u_p^\top\bm w-w_K\Delta_K+o_p(1), \label{eq:Dplus_goal}\\
		D_{-,n}(\bm w)&=\bm u_p^\top\bm w+w_K\Delta_K+o_p(1), \label{eq:Dminus_goal}\\
		V_n(\bm w)&=\bm w^\top \bm A_p \bm w+o_p(1). \label{eq:V_goal}
	\end{align}
	
	We first record several elementary consequences of Assumptions~\ref{ass1}--\ref{ass4}.
	Since
	\[
	\bm\mu_k=\bar{\bm\mu}+\bm\delta_k,
	\qquad
	\mathbb E\|\bm\delta_k\|^2=\alpha_k^2,
	\]
	we have \(\|\bm\mu_k\|=O_p(1)\) for each fixed \(k\). Moreover, for fixed \(k,k'\),
	\begin{gather}
		\bar{\bm\mu}^\top\bm\delta_k=o_p(1),\qquad
		\bar{\bm\mu}^\top e_k=o_p(1),\qquad
		\bm\delta_k^\top e_{k'}=o_p(1),\label{eq:identity_mixed_small}\\
		\bm\mu_K^\top e_k=o_p(1),\qquad
		\bm\mu_K^\top s_K=o_p(1),\qquad
		s_K^\top \bm\mu_k=o_p(1),\label{eq:mu_e_small}\\
		\frac{1}{\sigma^2}s_K^\top e_k
		=
		\Delta_K\ind_{\{k=K\}}+o_p(1),\label{eq:s_e_limit}\\
		\frac{1}{\sigma^2}e_k^\top e_{k'}
		=
		\tau_k\,\ind_{\{k=k'\}}+o_p(1),\label{eq:e_e_limit}\\
		\bm\mu_K^\top\bm\mu_k
		=
		\|\bar{\bm\mu}\|^2+\alpha_K^2\ind_{\{k=K\}}+o_p(1).\label{eq:mu_mu_limit}
	\end{gather}
	Indeed, the above relations follow from standard second-moment calculations, using that each classwise sample-mean noise has covariance
	\[
	\var(\eta_{k,\pm1})=\frac{1}{n_{k,\pm1}}\bm\Sigma.
	\]
	Moreover,
	\[
	s_K^\top e_K
	=
	\|\eta_{K,+}\|^2-\|\eta_{K,-}\|^2,
	\]
	and concentration of each classwise squared norm gives
	\[
	\frac{1}{\sigma^2}s_K^\top e_K
	=
	\frac{p}{n_{K,+}}-\frac{p}{n_{K,-}}+o_p(1)
	=
	\Delta_K+o_p(1).
	\]
	Similarly,
	\[
	e_k=\eta_{k,+}-\eta_{k,-},
	\qquad
	\var(e_k)=\left(\frac{1}{n_{k,+}}+\frac{1}{n_{k,-}}\right)\bm\Sigma,
	\]
	which gives \eqref{eq:e_e_limit}. Finally, \eqref{eq:mu_mu_limit} follows from Assumption~\ref{ass2}.
	
	Next, by \eqref{eq:Sigma_tilde_homo},
	\[
	\widehat{\bm\Sigma}
	=
	\sigma^2\left(
	\bm I_p+\sum_{j\in\mathcal I}\lambda_j \widehat{\bm P}_j
	\right),
	\qquad
	\widehat{\bm P}_j:=\widehat{\bm v}_j\widehat{\bm v}_j^\top.
	\]
	Since the projectors \(\widehat{\bm P}_j\) are mutually orthogonal,
	\[
	\widehat{\bm\Sigma}^{-1}
	=
	\frac1{\sigma^2}
	\left(
	\bm I_p-\sum_{j\in\mathcal I}\frac{\lambda_j}{1+\lambda_j}\widehat{\bm P}_j
	\right).
	\]
	Then
	\begin{equation}\label{eq:dhat_expand}
		\widehat{\bm d}_k
		=
		\frac1{\sigma^2}
		\left(
		\widehat{\bm\mu}_k-\sum_{j\in\mathcal I} b_j \widehat{\bm P}_j \widehat{\bm\mu}_k
		\right).
	\end{equation}
	
	We now analyze the spike-projector terms. By Theorem~\ref{thm:centered_overlap_prob} and
	polarization, for any deterministic vectors \(\bm x,\bm y\) with bounded Euclidean norms,
	\begin{equation}\label{eq:polarization_overlap}
		\bm x^\top \widehat{\bm P}_j \bm y
		=
		a_j\,\bm x^\top \bm v_j\bm v_j^\top \bm y+o_p(1)
		=
		a_j(\bm x^\top \bm v_j)(\bm y^\top \bm v_j)+o_p(1).
	\end{equation}
	In particular,
	\begin{equation}\label{eq:mubar_P_mubar}
		\bar{\bm\mu}^\top \widehat{\bm P}_j \bar{\bm\mu}
		=
		a_j(\bar{\bm\mu}^\top \bm v_j)^2+o_p(1).
	\end{equation}
	Also, since \(\bm\delta_k\) is independent of the training data and
	\[
	\mathbb E\bigl((\bm\delta_k^\top \widehat{\bm v}_j)^2\mid \widehat{\bm v}_j\bigr)
	=
	\frac{\alpha_k^2}{p},
	\]
	we have
	\begin{equation}\label{eq:delta_vhat_small}
		\bm\delta_k^\top \widehat{\bm v}_j \xrightarrow{p}0,
		\qquad
		\bm\delta_k^\top \widehat{\bm P}_j \bm\delta_{k'}=o_p(1),
		\qquad
		\bar{\bm\mu}^\top \widehat{\bm P}_j \bm\delta_k=o_p(1).
	\end{equation}
	Furthermore, by Lemma~\ref{lem:ek_vhat_zero},
	\begin{equation}\label{eq:e_vhat_small}
		e_k^\top \widehat{\bm v}_j\xrightarrow{p}0,
		\qquad
		e_k^\top \widehat{\bm P}_j e_{k'}=o_p(1),
		\qquad
		\bm\mu_K^\top \widehat{\bm P}_j e_k=o_p(1).
	\end{equation}
	Exactly the same proof as in Lemma~\ref{lem:ek_vhat_zero}, replacing
	\[
	e_k=\bm\Sigma^{1/2}(\bar{\bm z}_{k,+}-\bar{\bm z}_{k,-})
	\]
	by
	\[
	s_k=\bm\Sigma^{1/2}(\bar{\bm z}_{k,+}+\bar{\bm z}_{k,-}),
	\]
	shows that
	\begin{equation}\label{eq:s_vhat_small}
		s_k^\top \widehat{\bm v}_j\xrightarrow{p}0,
		\qquad
		s_k^\top \widehat{\bm P}_j \bm\mu_k=o_p(1),
		\qquad
		s_k^\top \widehat{\bm P}_j e_k=o_p(1).
	\end{equation}
	Combining \eqref{eq:mubar_P_mubar}, \eqref{eq:delta_vhat_small}, \eqref{eq:e_vhat_small},
	and \eqref{eq:s_vhat_small}, we obtain, for each fixed \(k\) and \(j\),
	\begin{equation}\label{eq:muKminusS_P_mukhat}
		(\bm\mu_K-s_K)^\top \widehat{\bm P}_j \widehat{\bm\mu}_k
		=
		a_j(\bar{\bm\mu}^\top \bm v_j)^2+o_p(1),
	\end{equation}
	and likewise
	\begin{equation}\label{eq:muKplusS_P_mukhat}
		(\bm\mu_K+s_K)^\top \widehat{\bm P}_j \widehat{\bm\mu}_k
		=
		a_j(\bar{\bm\mu}^\top \bm v_j)^2+o_p(1).
	\end{equation}
	
	We now prove \eqref{eq:Dplus_goal}. By \eqref{eq:dhat_expand},
	\begin{align*}
		D_{+,n}(\bm w)
		&=
		(\bm\mu_K-s_K)^\top \widehat{\bm d}(\bm w)\\
		&=
		\frac1{\sigma^2}\sum_{k=1}^K w_k
		\left[
		(\bm\mu_K-s_K)^\top \widehat{\bm\mu}_k
		-
		\sum_{j\in\mathcal I}
		b_j(\bm\mu_K-s_K)^\top \widehat{\bm P}_j \widehat{\bm\mu}_k
		\right].
	\end{align*}
	By \eqref{eq:mu_mu_limit}, \eqref{eq:mu_e_small}, and \eqref{eq:s_e_limit},
	\[
	\frac1{\sigma^2}(\bm\mu_K-s_K)^\top \widehat{\bm\mu}_k
	=
	\frac{\|\bar{\bm\mu}\|^2}{\sigma^2}
	+
	\frac{\alpha_K^2}{\sigma^2}\ind_{\{k=K\}}
	-
	\Delta_K\ind_{\{k=K\}}
	+
	o_p(1).
	\]
	Together with \eqref{eq:muKminusS_P_mukhat}, this yields
	\begin{align*}
		D_{+,n}(\bm w)
		&=
		\sum_{k=1}^K w_k
		\left[
		\frac{\|\bar{\bm\mu}\|^2}{\sigma^2}
		-
		\sum_{j\in\mathcal I}\frac{\lambda_j}{1+\lambda_j}
		a_j\frac{(\bar{\bm\mu}^\top \bm v_j)^2}{\sigma^2}
		+
		\frac{\alpha_K^2}{\sigma^2}\ind_{\{k=K\}}
		-
		\Delta_K\ind_{\{k=K\}}
		\right]
		+
		o_p(1).
	\end{align*}
	By the definition of \(\bm u_p\),
	\[
	D_{+,n}(\bm w)=\bm u_p^\top \bm w-w_K\Delta_K+o_p(1).
	\]
	This proves \eqref{eq:Dplus_goal}.
	Similarly,
	\[
	D_{-,n}(\bm w)
	=
	(\bm\mu_K+s_K)^\top \widehat{\bm d}(\bm w)
	=
	\bm u_p^\top\bm w+w_K\Delta_K+o_p(1),
	\]
	which proves \eqref{eq:Dminus_goal}.
	
	It remains to prove \eqref{eq:V_goal}. Write
	\[
	V_n(\bm w)
	=
	\sum_{k,k'=1}^K w_k w_{k'}
	\widehat{\bm\mu}_k^\top
	\widehat{\bm\Sigma}^{-1}\bm\Sigma\widehat{\bm\Sigma}^{-1}
	\widehat{\bm\mu}_{k'}.
	\]
	Thus it suffices to show that, for each fixed \(k,k'\),
	\begin{equation}\label{eq:Tkk_goal}
		\widehat{\bm\mu}_k^\top
		\widehat{\bm\Sigma}^{-1}\bm\Sigma\widehat{\bm\Sigma}^{-1}
		\widehat{\bm\mu}_{k'}
		=
		A_{kk',p}+o_p(1).
	\end{equation}
	
	Using
	\[
	\bm\Sigma
	=
	\sigma^2\left(
	\bm I_p+\sum_{\ell\in\mathcal I}\lambda_\ell \bm P_\ell
	\right),
	\qquad
	\bm P_\ell:=\bm v_\ell\bm v_\ell^\top,
	\]
	and
	\[
	\widehat{\bm\Sigma}^{-1}
	=
	\frac1{\sigma^2}\left(
	\bm I_p-\sum_{j\in\mathcal I} b_j \widehat{\bm P}_j
	\right),
	\]
	we obtain
	\begin{align}
		\widehat{\bm\Sigma}^{-1}\bm\Sigma\widehat{\bm\Sigma}^{-1}
		=
		\frac1{\sigma^2}
		\left(
		\bm I_p-\sum_{j\in\mathcal I} b_j \widehat{\bm P}_j
		\right)
		\left(
		\bm I_p+\sum_{\ell\in\mathcal I}\lambda_\ell \bm P_\ell
		\right)
		\left(
		\bm I_p-\sum_{m\in\mathcal I} b_m \widehat{\bm P}_m
		\right).
		\label{eq:Sigma_inv_Sigma_Sigma_inv_expand}
	\end{align}
	Now decompose
	\[
	\widehat{\bm\mu}_k=\bar{\bm\mu}+\bm\delta_k+e_k.
	\]
	Since \(\bm P_\ell=\bm v_\ell\bm v_\ell^\top\) is a rank-one deterministic projector, for each fixed
	\(\ell\in\mathcal I\) we also have
	\begin{equation}\label{eq:det_proj_small}
		\bm\delta_k^\top \bm v_\ell \xrightarrow{p}0,
		\qquad
		e_k^\top \bm v_\ell \xrightarrow{p}0,
	\end{equation}
	and hence
	\begin{equation}\label{eq:det_proj_mixed_small}
		\bm\delta_k^\top \bm P_\ell \bm\delta_{k'}=o_p(1),\quad
		\bar{\bm\mu}^\top \bm P_\ell \bm\delta_k=o_p(1),\quad
		e_k^\top \bm P_\ell e_{k'}=o_p(1),\quad
		\bar{\bm\mu}^\top \bm P_\ell e_k=o_p(1),
	\end{equation}
	as well as the analogous mixed terms with one \(\bm\delta\)-factor and one \(e\)-factor. Indeed,
	each term factors through \(\bm v_\ell\), for example
	\[
	\bm\delta_k^\top \bm P_\ell \bm\delta_{k'}
	=
	(\bm\delta_k^\top \bm v_\ell)(\bm v_\ell^\top \bm\delta_{k'}),
	\qquad
	e_k^\top \bm P_\ell e_{k'}
	=
	(e_k^\top \bm v_\ell)(\bm v_\ell^\top e_{k'}).
	\]
	Therefore, by \eqref{eq:identity_mixed_small}, \eqref{eq:delta_vhat_small}, \eqref{eq:e_vhat_small},
	\eqref{eq:det_proj_small}, \eqref{eq:det_proj_mixed_small}, and the boundedness of
	\(\|\widehat{\bm\Sigma}^{-1}\bm\Sigma\widehat{\bm\Sigma}^{-1}\|\), all mixed terms involving at
	least one factor among the \(\bm\delta_k\)- or \(e_k\)-components are \(o_p(1)\), except for the
	pure identity contribution \(\sigma^{-2}\bm\delta_k^\top\bm\delta_{k'}\) and
	\(\sigma^{-2}e_k^\top e_{k'}\).
	Consequently,
	\begin{align*}
		\widehat{\bm\mu}_k^\top
		\widehat{\bm\Sigma}^{-1}\bm\Sigma\widehat{\bm\Sigma}^{-1}
		\widehat{\bm\mu}_{k'}
		&=
		\bar{\bm\mu}^\top
		\widehat{\bm\Sigma}^{-1}\bm\Sigma\widehat{\bm\Sigma}^{-1}
		\bar{\bm\mu}
		+
		\frac{1}{\sigma^2}\bm\delta_k^\top\bm\delta_{k'}
		+
		\frac{1}{\sigma^2}e_k^\top e_{k'}
		+
		o_p(1).
	\end{align*}
	By Assumption~\ref{ass2},
	\[
	\frac{1}{\sigma^2}\bm\delta_k^\top\bm\delta_{k'}
	=
	\frac{\alpha_k^2}{\sigma^2}\ind_{\{k=k'\}}+o_p(1),
	\]
	and by \eqref{eq:e_e_limit},
	\[
	\frac{1}{\sigma^2}e_k^\top e_{k'}
	=
	\tau_k\ind_{\{k=k'\}}+o_p(1).
	\]
	Thus it remains to evaluate
	\[
	\bar{\bm\mu}^\top
	\widehat{\bm\Sigma}^{-1}\bm\Sigma\widehat{\bm\Sigma}^{-1}
	\bar{\bm\mu}.
	\]
	
	Expanding \eqref{eq:Sigma_inv_Sigma_Sigma_inv_expand} between \(\bar{\bm\mu}\) and itself and
	using \(\widehat{\bm P}_j\widehat{\bm P}_m=\mathbf 1_{\{j=m\}}\widehat{\bm P}_j\), we obtain
	\begin{align*}
		\bar{\bm\mu}^\top
		\widehat{\bm\Sigma}^{-1}\bm\Sigma\widehat{\bm\Sigma}^{-1}
		\bar{\bm\mu}
		&=
		\frac{\|\bar{\bm\mu}\|^2}{\sigma^2}
		+
		\frac{1}{\sigma^2}
		\sum_{j\in\mathcal I}
		\Bigl(b_j^2-2b_j\Bigr)\,
		\bar{\bm\mu}^\top \widehat{\bm P}_j \bar{\bm\mu}\\
		&\quad
		+
		\frac{1}{\sigma^2}
		\sum_{\ell\in\mathcal I}\lambda_\ell\,
		\bar{\bm\mu}^\top \bm P_\ell \bar{\bm\mu}\\
		&\quad
		-
		\frac{1}{\sigma^2}
		\sum_{\ell,m\in\mathcal I}\lambda_\ell b_m\,
		\bar{\bm\mu}^\top \bm P_\ell \widehat{\bm P}_m \bar{\bm\mu}\\
		&\quad
		-
		\frac{1}{\sigma^2}
		\sum_{j,\ell\in\mathcal I}\lambda_\ell b_j\,
		\bar{\bm\mu}^\top \widehat{\bm P}_j \bm P_\ell \bar{\bm\mu}\\
		&\quad
		+
		\frac{1}{\sigma^2}
		\sum_{j,\ell,m\in\mathcal I}\lambda_\ell b_j b_m\,
		\bar{\bm\mu}^\top \widehat{\bm P}_j \bm P_\ell \widehat{\bm P}_m \bar{\bm\mu}.
	\end{align*}
	By Theorem~\ref{thm:centered_overlap_prob} and polarization,
	\[
	\bar{\bm\mu}^\top \widehat{\bm P}_j \bar{\bm\mu}
	=
	a_j(\bar{\bm\mu}^\top \bm v_j)^2+o_p(1),
	\]
	\[
	\bar{\bm\mu}^\top \bm P_\ell \widehat{\bm P}_m \bar{\bm\mu}
	=
	a_m\,\ind_{\{\ell=m\}}(\bar{\bm\mu}^\top \bm v_m)^2+o_p(1),
	\]
	\[
	\bar{\bm\mu}^\top \widehat{\bm P}_j \bm P_\ell \bar{\bm\mu}
	=
	a_j\,\ind_{\{j=\ell\}}(\bar{\bm\mu}^\top \bm v_j)^2+o_p(1),
	\]
	and
	\[
	\bar{\bm\mu}^\top \widehat{\bm P}_j \bm P_\ell \widehat{\bm P}_m \bar{\bm\mu}
	=
	a_j^2\,\ind_{\{j=\ell=m\}}(\bar{\bm\mu}^\top \bm v_j)^2+o_p(1).
	\]
	Therefore
	\[
	\bar{\bm\mu}^\top
	\widehat{\bm\Sigma}^{-1}\bm\Sigma\widehat{\bm\Sigma}^{-1}
	\bar{\bm\mu}
	=
	\frac{\|\bar{\bm\mu}\|^2}{\sigma^2}
	+
	\frac{1}{\sigma^2}
	\sum_{j\in\mathcal I}
	\left[
	(b_j^2-2b_j)a_j+\lambda_j-2\lambda_j b_j a_j+\lambda_j b_j^2 a_j^2
	\right]
	(\bar{\bm\mu}^\top \bm v_j)^2
	+
	o_p(1).
	\]
	By the definition of \(b_j\) and \(c_j\),
	\[
	c_j=(b_j^2-2b_j)a_j+\lambda_j-2\lambda_j b_j a_j+\lambda_j b_j^2 a_j^2.
	\]
	Hence
	\[
	\bar{\bm\mu}^\top
	\widehat{\bm\Sigma}^{-1}\bm\Sigma\widehat{\bm\Sigma}^{-1}
	\bar{\bm\mu}
	=
	\frac{\|\bar{\bm\mu}\|^2}{\sigma^2}
	+
	\sum_{j\in\mathcal I}
	c_j\frac{(\bar{\bm\mu}^\top \bm v_j)^2}{\sigma^2}
	+
	o_p(1).
	\]
	Combining the previous displays, we conclude that
	\[
	\widehat{\bm\mu}_k^\top
	\widehat{\bm\Sigma}^{-1}\bm\Sigma\widehat{\bm\Sigma}^{-1}
	\widehat{\bm\mu}_{k'}
	=
	\frac{\|\bar{\bm\mu}\|^2}{\sigma^2}
	+
	\sum_{j\in\mathcal I}
	c_j\frac{(\bar{\bm\mu}^\top \bm v_j)^2}{\sigma^2}
	+
	\left(
	\frac{\alpha_k^2}{\sigma^2}+\tau_k
	\right)\ind_{\{k=k'\}}
	+
	o_p(1).
	\]
	By the definition of \(A_{kk',p}\), this proves \eqref{eq:Tkk_goal}, and hence
	\eqref{eq:V_goal}.
	
	Finally, \eqref{eq:Dplus_goal}, \eqref{eq:Dminus_goal}, and \eqref{eq:V_goal} imply
	\[
	\frac{D_{+,n}(\bm w)}{2\sqrt{V_n(\bm w)}}
	=
	\frac{\bm u_p^\top \bm w-w_K\Delta_K}{2\sqrt{\bm w^\top \bm A_p \bm w}}+o_p(1),
	\qquad
	\frac{D_{-,n}(\bm w)}{2\sqrt{V_n(\bm w)}}
	=
	\frac{\bm u_p^\top \bm w+w_K\Delta_K}{2\sqrt{\bm w^\top \bm A_p \bm w}}+o_p(1),
	\]
	because \(\bm w\neq\bm0\) and \(\bm A_p\) is positive definite by
	Lemma~\ref{lem:Ap_positive}. Substituting the above into
	\eqref{eq:err_margin_rep} and using the continuity of \(\Phi\), we obtain
	\eqref{eq:err_limit_uncorrected_homo}.
	This completes the proof.
	\subsection{Proof of Proposition \ref{prop:optimal_intercept_homo}}
	For a fixed scalar adjustment \(t\), the preceding proof remains unchanged except that the two margin terms are replaced by
	\[
	D_{+,n}^{(t)}(\bm w)=D_{+,n}(\bm w)+2t,
	\qquad
	D_{-,n}^{(t)}(\bm w)=D_{-,n}(\bm w)-2t.
	\]
	Thus
	\[
	\Err_t(\bm w)
	=
	\frac12\Phi\left(
	-\frac{D_{-,n}(\bm w)-2t}{2\sqrt{V_n(\bm w)}}
	\right)
	+
	\frac12\Phi\left(
	-\frac{D_{+,n}(\bm w)+2t}{2\sqrt{V_n(\bm w)}}
	\right).
	\]
	On the event \(D_{+,n}(\bm w)+D_{-,n}(\bm w)>0\), differentiating the last display with respect to \(t\) shows that its unique minimizer is
	\[
	\widehat t_p^\ast(\bm w)=\frac14\{D_{-,n}(\bm w)-D_{+,n}(\bm w)\}.
	\]
	Since \(\bm u_p^\top\bm w>0\), \eqref{eq:Dplus_goal} and \eqref{eq:Dminus_goal} imply
	\[
	D_{+,n}(\bm w)+D_{-,n}(\bm w)
	=2\bm u_p^\top\bm w+o_p(1)>0
	\]
	with probability tending to one. Moreover,
	\[
	\widehat t_p^\ast(\bm w)
	=\frac12w_K\Delta_K+o_p(1)
	=\frac12w_K\widehat\Delta_K+o_p(1),
	\]
	because \(\widehat\Delta_K\to\Delta_K\). Substituting this minimizer into the margin representation and using \eqref{eq:Dplus_goal}, \eqref{eq:Dminus_goal}, and \eqref{eq:V_goal} yields
	\[
	\Err_{\widehat t_p^\ast(\bm w)}(\bm w)
	-
	\Phi\left(
	-\frac{\bm u_p^\top\bm w}{2\sqrt{\bm w^\top\bm A_p\bm w}}
	\right)
	\to0
	\]
	in probability. This completes the proof.
	
	\subsection{Proof of Corollary \ref{cor:oracle_weight_homo}}
	By Proposition~\ref{prop:optimal_intercept_homo}, the deterministic equivalent of the target-domain
	Gaussian-calibrated error under the optimal intercept is
	\[
	\Phi\left(
	-\frac{\bm u_p^\top \bm w}{2\sqrt{\bm w^\top \bm A_p\bm w}}
	\right).
	\]
	Since \(\Phi\) is strictly increasing, minimizing this quantity is equivalent to maximizing
	\[
	\frac{\bm u_p^\top \bm w}{\sqrt{\bm w^\top \bm A_p\bm w}}
	\]
	over all \(\bm w\neq \bm 0\). This criterion is invariant under positive rescaling of
	\(\bm w\). Therefore the oracle weight is identifiable only up to a positive scalar multiple.
	
	By Lemma~\ref{lem:Ap_positive}, \(\bm A_p\) is positive definite. Therefore,
	the Cauchy--Schwarz inequality under the \(\bm A_p\)-inner product gives
	\[
	\bm u_p^\top \bm w
	=
	(\bm A_p^{-1/2}\bm u_p)^\top(\bm A_p^{1/2}\bm w)
	\le
	\sqrt{\bm u_p^\top \bm A_p^{-1}\bm u_p}\,
	\sqrt{\bm w^\top \bm A_p\bm w}.
	\]
	Hence
	\[
	\frac{\bm u_p^\top \bm w}{\sqrt{\bm w^\top \bm A_p\bm w}}
	\le
	\sqrt{\bm u_p^\top \bm A_p^{-1}\bm u_p}.
	\]
	Equality holds if and only if
	\[
	\bm A_p^{1/2}\bm w
	\propto
	\bm A_p^{-1/2}\bm u_p,
	\]
	or equivalently,
	\[
	\bm w\propto \bm A_p^{-1}\bm u_p.
	\]
	Therefore the deterministic equivalent is minimized by
	\[
	\bm w_p^\ast\propto \bm A_p^{-1}\bm u_p.
	\]
	
	Under the normalization \(\bm w^\top\bm A_p\bm w=1\), the maximizer is uniquely determined as
	\[
	\bm w_p^\ast
	=
	\frac{\bm A_p^{-1}\bm u_p}
	{\sqrt{\bm u_p^\top\bm A_p^{-1}\bm u_p}}.
	\]
	This completes the proof.
	\subsection{Proof of Theorem \ref{thm:emp_weight_homo}}
	We first prove the consistency of the quantities used in
	\(\widehat{\bm u}_p\) and \(\widehat{\bm A}_p\). Recall that
	\[
	\widehat{\bm\mu}_k=\bm\mu_k+e_k
	=
	\bar{\bm\mu}+\bm\delta_k+e_k.
	\]
	From the calculations in the proof of Theorem~\ref{thm:err_homo}, for any fixed
	\(k\neq \ell\),
	\[
	\widehat{\bm\mu}_k^\top\widehat{\bm\mu}_\ell
	=
	\|\bar{\bm\mu}\|^2+o_p(1).
	\]
	Hence, since \(K\) is fixed,
	\[
	\widehat\kappa_p
	=
	\frac{1}{\sigma^2K(K-1)}
	\sum_{m\neq \ell}\widehat{\bm\mu}_m^\top\widehat{\bm\mu}_\ell
	=
	\frac{\|\bar{\bm\mu}\|^2}{\sigma^2}+o_p(1).
	\]
	
	Next, for each fixed \(k\),
	\[
	\|\widehat{\bm\mu}_k\|^2
	=
	\|\bar{\bm\mu}\|^2+\|\bm\delta_k\|^2+e_k^\top e_k+o_p(1).
	\]
	By Assumption~\ref{ass2},
	\[
	\|\bm\delta_k\|^2=\alpha_k^2+o_p(1),
	\]
	and by the same argument used for \eqref{eq:e_e_limit},
	\[
	e_k^\top e_k
	=
	\sigma^2\left(\frac{p}{n_{k,+}}+\frac{p}{n_{k,-}}\right)+o_p(1)
	=
	\sigma^2\tau_k+o_p(1).
	\]
	Moreover, for \(\ell\neq k\),
	\[
	\widehat{\bm\mu}_k^\top\widehat{\bm\mu}_\ell
	=
	\|\bar{\bm\mu}\|^2+o_p(1).
	\]
	Therefore
	\[
	\widehat\beta_k
	=
	\frac1{\sigma^2}
	\left(
	\|\widehat{\bm\mu}_k\|^2
	-
	\frac1{K-1}\sum_{\ell\neq k}
	\widehat{\bm\mu}_k^\top\widehat{\bm\mu}_\ell
	\right)-
	\widehat\tau_k
	=
	\frac{\alpha_k^2}{\sigma^2}+o_p(1).
	\]
	
	We now consider the spike-projection terms. For any fixed \(m\neq \ell\) and
	\(j\in\mathcal I\), write
	\[
	\widehat{\bm\mu}_\ell^\top
	\widehat{\bm P}_j
	\widehat{\bm\mu}_m
	=
	(\bar{\bm\mu}+\bm\delta_\ell+e_\ell)^\top
	\widehat{\bm P}_j
	(\bar{\bm\mu}+\bm\delta_m+e_m).
	\]
	By the same argument as in \eqref{eq:delta_vhat_small} and Lemma~\ref{lem:ek_vhat_zero},
	all terms involving at least one factor among
	\(\bm\delta_\ell,\bm\delta_m,e_\ell,e_m\) are \(o_p(1)\). Hence
	\[
	\widehat{\bm\mu}_\ell^\top
	\widehat{\bm P}_j
	\widehat{\bm\mu}_m
	=
	\bar{\bm\mu}^\top
	\widehat{\bm P}_j
	\bar{\bm\mu}
	+o_p(1).
	\]
	By Theorem~\ref{thm:centered_overlap_prob} and polarization,
	\[
	\bar{\bm\mu}^\top
	\widehat{\bm P}_j
	\bar{\bm\mu}
	=
	a_j(\bar{\bm\mu}^\top\bm v_j)^2+o_p(1).
	\]
	Therefore
	\[
	\widehat{\bm\mu}_\ell^\top
	\widehat{\bm v}_j\widehat{\bm v}_j^\top
	\widehat{\bm\mu}_m
	=
	a_j(\bar{\bm\mu}^\top\bm v_j)^2+o_p(1).
	\]
	
	Since \(K\) and \(|\mathcal I|\) are fixed, this convergence holds after finite summation.
	Consequently,
	\[
	\widehat u_{k,p}
	=
	\frac{\|\bar{\bm\mu}\|^2}{\sigma^2}
	-
	\sum_{j\in\mathcal I}
	b_ja_j\frac{(\bar{\bm\mu}^\top\bm v_j)^2}{\sigma^2}
	+
	\frac{\alpha_K^2}{\sigma^2}\ind_{\{k=K\}}
	+o_p(1)
	=
	u_{k,p}+o_p(1).
	\]
	Thus
	\[
	\widehat{\bm u}_p-\bm u_p\to\bm0
	\qquad\text{in probability}.
	\]
	
	Similarly, since \(\widehat a_j\to a_j\), using the factor \(1/\widehat a_j\) in the definition of \(\widehat A_{kk',p}\),
	\[
	\frac1{K(K-1)}
	\sum_{m\neq \ell}
	\frac1{\widehat a_j}
	\widehat{\bm\mu}_\ell^\top
	\widehat{\bm v}_j\widehat{\bm v}_j^\top
	\widehat{\bm\mu}_m
	=
	(\bar{\bm\mu}^\top\bm v_j)^2+o_p(1).
	\]
	Since \(\widehat c_j\to c_j\), \(\widehat\tau_k\to\tau_k\), together with the consistency of \(\widehat\kappa_p\) and \(\widehat\beta_k\), this gives
	\[
	\widehat A_{kk',p}
	=
	\frac{\|\bar{\bm\mu}\|^2}{\sigma^2}
	+
	\sum_{j\in\mathcal I}
	c_j\frac{(\bar{\bm\mu}^\top\bm v_j)^2}{\sigma^2}
	+
	\left(
	\frac{\alpha_k^2}{\sigma^2}+\tau_k
	\right)\ind_{\{k=k'\}}
	+o_p(1)
	=
	A_{kk',p}+o_p(1).
	\]
	Because \(K\) is fixed, entrywise convergence implies
	\[
	\widehat{\bm A}_p-\bm A_p\to \bm0
	\qquad\text{in probability}.
	\]
	
	It remains to prove the consistency of the empirical weight. By Lemma~\ref{lem:Ap_positive}, there exists a constant \(c_A>0\) such that
	\[
	\lambda_{\min}(\bm A_p)\ge c_A
	\]
	for all sufficiently large \(p\). Since
	\(\|\widehat{\bm A}_p-\bm A_p\|\to0\) in probability, Weyl's inequality gives
	\[
	\mathbb P\bigl(\lambda_{\min}(\widehat{\bm A}_p)\ge c_A/2\bigr)\to1.
	\]
	Thus \(\widehat{\bm A}_p\) is positive definite with probability tending to one. Moreover,
	on this event,
	\[
	\widehat{\bm A}_p^{-1}-\bm A_p^{-1}
	=
	\widehat{\bm A}_p^{-1}(\bm A_p-\widehat{\bm A}_p)\bm A_p^{-1},
	\]
	and therefore
	\[
	\widehat{\bm A}_p^{-1}-\bm A_p^{-1}\to \bm0
	\qquad\text{in probability}.
	\]
	Hence
	\[
	\widehat{\bm A}_p^{-1}\widehat{\bm u}_p
	-
	\bm A_p^{-1}\bm u_p
	\to \bm0
	\qquad\text{in probability}.
	\]
	Also,
	\[
	\widehat{\bm u}_p^\top
	\widehat{\bm A}_p^{-1}
	\widehat{\bm u}_p
	-
	\bm u_p^\top\bm A_p^{-1}\bm u_p
	\to0
	\qquad\text{in probability}.
	\]
	Since \(\|\bm u_p\|\ge c_u\) and \(\lambda_{\max}(\bm A_p)=O(1)\), there exists
	a constant \(c'>0\) such that
	\[
	\bm u_p^\top\bm A_p^{-1}\bm u_p
	\ge c' .
	\]
	Thus the denominator in \(\widehat{\bm w}_p^\ast\) is bounded away from zero with probability
	tending to one. Therefore, by the continuous mapping theorem,
	\[
	\widehat{\bm w}_p^\ast
	=
	\frac{\widehat{\bm A}_p^{-1}\widehat{\bm u}_p}
	{\sqrt{\widehat{\bm u}_p^\top\widehat{\bm A}_p^{-1}\widehat{\bm u}_p}}
	\to
	\frac{\bm A_p^{-1}\bm u_p}
	{\sqrt{\bm u_p^\top\bm A_p^{-1}\bm u_p}}
	=
	\bm w_p^\ast
	\]
	in probability. This completes the proof.
	\subsection{Proof of Theorem \ref{thm:err_hetero}}
	\begin{lemma}\label{lem:hetero_polarization_overlap}
		For each fixed domain \(k\in\{1,\dots,K\}\), let
		\[
		\widehat{\bm P}_{j,k}:=\widehat{\bm v}_{j,k}\widehat{\bm v}_{j,k}^\top,
		\qquad
		\bm P_{j,k}:=\bm v_{j,k}\bm v_{j,k}^\top,
		\qquad j\in\mathcal I_k.
		\]
		Under Assumptions~\ref{ass1}, \ref{ass2}, \ref{ass4}, \ref{ass3_hetero} and \ref{ass5_hetero}, for any deterministic vectors \(\bm x,\bm y\in\mathbb R^p\) with bounded Euclidean norms,
		\[
		\bm x^\top \widehat{\bm P}_{j,k}\bm y
		=
		a_{j,k}\,\bm x^\top \bm P_{j,k}\bm y+o_p(1)
		=
		a_{j,k}(\bm x^\top \bm v_{j,k})(\bm y^\top \bm v_{j,k})+o_p(1),
		\qquad j\in\mathcal I_k.
		\]
	\end{lemma}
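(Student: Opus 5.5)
The plan is to reduce the claim to the single-domain version of Theorem~\ref{thm:centered_overlap_prob} and then polarize. First, fix \(k\). The sample covariance \(\bm S_{n,k}\) in \eqref{eq:sample_cov_hetero} has exactly the structure of the pooled matrix \(\bm S_n\) in the homogeneous case with \(K=1\). Concretely,
\[
\bm S_{n,k}=\frac{n_k}{n_k-2}\cdot\frac1{n_k}\bm\Sigma_k^{1/2}\bm Z_k^\top\bm G_k\bm Z_k\bm\Sigma_k^{1/2},
\]
where \(\bm G_k=\diag(\bm P_{n_{k,+}},\bm P_{n_{k,-}})\) is a block-centering projection of fixed co-rank \(2\). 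The scalar factor does not affect eigenvectors. Assumption~\ref{ass3_hetero} gives a spiked \(\bm\Sigma_k\). Assumption~\ref{ass5_hetero} gives the supercritical separation with respect to \(\gamma_k=\lim p/n_k\). Assumption~\ref{ass1} gives i.i.d.\ finite-fourth-moment entries. So the hypotheses of Lemmas~\ref{thm:uncentered_overlap_target_short}, \ref{lem:center_uncentered_fixed} and \ref{lem:centered_uncentered_same_outlier_limit} hold verbatim, with \((n,\gamma,\bm\Sigma,\lambda_j,\bm v_j,\bm G_K)\) replaced by \((n_k,\gamma_k,\bm\Sigma_k,\lambda_{j,k},\bm v_{j,k},\bm G_k)\). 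The mean vectors, and hence Assumption~\ref{ass2}, play no role, since the centering removes them exactly.

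Running the proof of Theorem~\ref{thm:centered_overlap_prob} for this single domain then gives, for each deterministic unit \(\bm\xi\) and \(j\in\mathcal I_k\),
\[
\bm\xi^\top\widehat{\bm P}_{j,k}\bm\xi\xrightarrow{p}a_{j,k}(\bm\xi^\top\bm v_{j,k})^2 .
\]
This rests on three ingredients. The contour representation of the rank-one projector holds on an event of probability tending to one. The uncentered residue computation produces the factor \(a_{j,k}\). The uniform-on-contour bridge \eqref{eq:uniform_bridge_prob} handles the centering. The one point to check is the lower-spike case \(j\in\mathcal I_{k,-}\), which requires \(\gamma_k<1\) and \(\lambda_{j,k}\in(-1,-\sqrt{\gamma_k})\). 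There \(\vartheta(\lambda_{j,k})<\gamma_-\), and the contour must stay separated from \(\{0\}\cup[\gamma_-,\gamma_+]\); this is exactly the condition already arranged when choosing \(\rho_j\). I expect this identification of hypotheses to be the only real content, and I would state it as a direct appeal to the homogeneous proof rather than redoing it.

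Next comes extension to bounded, non-unit vectors and to bilinear forms. If \(\bm x\neq\bm 0\), apply the unit-vector result to \(\bm x/\|\bm x\|\) and multiply by \(\|\bm x\|^2=O(1)\); this keeps the error \(o_p(1)\). For the bilinear form, use the polarization identity
\[
\bm x^\top\widehat{\bm P}_{j,k}\bm y=\tfrac14\bigl[(\bm x+\bm y)^\top\widehat{\bm P}_{j,k}(\bm x+\bm y)-(\bm x-\bm y)^\top\widehat{\bm P}_{j,k}(\bm x-\bm y)\bigr],
\]
which is valid because \(\widehat{\bm P}_{j,k}\) is symmetric. The vectors \(\bm x\pm\bm y\) are deterministic with bounded norm. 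Applying the quadratic-form limit to each term and using the same identity for \(\bm P_{j,k}\) yields \(a_{j,k}\bm x^\top\bm P_{j,k}\bm y+o_p(1)\). The last equality in the statement is just \(\bm P_{j,k}=\bm v_{j,k}\bm v_{j,k}^\top\).

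One subtlety remains: the norms of \(\bm x,\bm y\) may vary with \(p\) while staying bounded. The convergence for each fixed sequence of unit vectors is all that is used, so the rescaling argument above suffices. The main obstacle is therefore only the bookkeeping that the homogeneous argument never used \(K\ge2\) or pooling in an essential way. The block-centering lemma needs only that the co-rank \(2\) is fixed, which holds here.
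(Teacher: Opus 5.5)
Your proposal is correct and follows the paper's proof: rerun the argument of Theorem~\ref{thm:centered_overlap_prob} for the within-domain matrix $\bm S_{n,k}$, with $(\bm\Sigma,\gamma,\lambda_j,\bm v_j)$ replaced by $(\bm\Sigma_k,\gamma_k,\lambda_{j,k},\bm v_{j,k})$, and then polarize. Your rescaling to unit vectors and your check of the lower-spike contour spell out details the paper leaves implicit.
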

	
	\begin{proof}
		For each fixed domain \(k\), the proof of Theorem~\ref{thm:centered_overlap_prob} can be repeated verbatim
		for the within-domain sample covariance matrix $\bm S_{n,k}$, with
		$(\bm\Sigma,\gamma,\lambda_j,\bm v_j,\widehat{\bm v}_j)$ replaced by
		$(\bm\Sigma_k,\gamma_k,\lambda_{j,k},\bm v_{j,k},\widehat{\bm v}_{j,k})$.
		Hence, for every deterministic unit vector $\bm\xi\in\mathbb R^p$,
		\[
		\bm\xi^\top \widehat{\bm P}_{j,k}\bm\xi
		=
		\bm\xi^\top \widehat{\bm v}_{j,k}\widehat{\bm v}_{j,k}^\top \bm\xi
		\xrightarrow{p}
		a_{j,k}\,\bm\xi^\top \bm v_{j,k}\bm v_{j,k}^\top \bm\xi.
		\]
		Now apply polarization:
		\[
		\bm x^\top \widehat{\bm P}_{j,k}\bm y
		=
		\frac14\Bigl[
		(\bm x+\bm y)^\top \widehat{\bm P}_{j,k}(\bm x+\bm y)
		-
		(\bm x-\bm y)^\top \widehat{\bm P}_{j,k}(\bm x-\bm y)
		\Bigr].
		\]
		The same identity holds with $\widehat{\bm P}_{j,k}$ replaced by $\bm P_{j,k}$. Since
		$\|\bm x\pm1\bm y\|=O(1)$, the desired conclusion follows.
	\end{proof}
	\begin{lemma}\label{lem:hetero_mean_noise_vhat}
		For each domain \(k=1,\dots,K\), define
		\[
		e_k:=\widehat{\bm\mu}_k-\bm\mu_k,
		\qquad
		s_k:=
		(\widehat{\bm\mu}_{+1,k}-\bm\mu_{+1,k})
		+
		(\widehat{\bm\mu}_{-1,k}-\bm\mu_{-1,k}).
		\]
		Under Assumptions~\ref{ass1}, \ref{ass2}, \ref{ass4}, \ref{ass3_hetero} and \ref{ass5_hetero}, for every fixed \(k\in\{1,\dots,K\}\) and every fixed \(j\in\mathcal I_k\),
		\[
		e_k^\top \widehat{\bm v}_{j,k}\xrightarrow{p}0,
		\qquad
		s_k^\top \widehat{\bm v}_{j,k}\xrightarrow{p}0.
		\]
		Consequently,
		\[
		e_k^\top \widehat{\bm P}_{j,k}e_k=o_p(1),
		\qquad
		s_k^\top \widehat{\bm P}_{j,k}s_k=o_p(1),
		\qquad
		s_k^\top \widehat{\bm P}_{j,k}e_k=o_p(1).
		\]
	\end{lemma}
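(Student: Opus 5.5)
This is the domainwise analogue of Lemma~\ref{lem:ek_vhat_zero}, so I would transcribe that proof to a single domain $k$. Fix $k$ and $j\in\mathcal I_k$, and normalize $\sigma_k^2=1$.

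\textbf{Reduction to block means.} Under Assumption~\ref{ass1},
\[
e_k=\bm\Sigma_k^{1/2}(\bar{\bm z}_{k,+}-\bar{\bm z}_{k,-}),\qquad s_k=\bm\Sigma_k^{1/2}(\bar{\bm z}_{k,+}+\bar{\bm z}_{k,-}).
\]
It therefore suffices to prove $\bar{\bm z}^\top\bm\Sigma_k^{1/2}\widehat{\bm v}_{j,k}\xrightarrow{p}0$ for each $\bar{\bm z}\in\{\bar{\bm z}_{k,+},\bar{\bm z}_{k,-}\}$. The sign of the combination does not matter, so $e_k$ and $s_k$ are handled at once.

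\textbf{Centered and uncentered matrices.} Write $\bm S_{n,k}=\frac{n_k}{n_k-2}\widetilde{\bm S}_{n,k}$ with
\[
\widetilde{\bm S}_{n,k}=n_k^{-1}\bm\Sigma_k^{1/2}\bm Z_k^\top\bm G\bm Z_k\bm\Sigma_k^{1/2},\qquad \bm G=\diag(\bm P_{n_{k,+}},\bm P_{n_{k,-}}).
\]
The uncentered counterpart is $\widetilde{\bm S}^{\,e}_{n,k}=n_k^{-1}\bm\Sigma_k^{1/2}\bm Z_k^\top\bm Z_k\bm\Sigma_k^{1/2}$. Lemmas~\ref{lem:center_uncentered_fixed} and \ref{lem:centered_uncentered_same_outlier_limit}, applied with $K=1$ and $q=2$ blocks, give the following: with probability tending to one, a fixed contour $\mathcal C_{j,k}=\vartheta_k(\Gamma_{j,k})$ (with $\gamma$ replaced by $\gamma_k$) isolates exactly one eigenvalue of $\widetilde{\bm S}_{n,k}$, at distance at least $c_j$ from the rest of the spectrum. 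This yields
\[
\widehat{\bm P}_{j,k}=-\frac{1}{2\pi i}\oint_{\mathcal C_{j,k}}\bm Q_{n,k}(z)\,dz .
\]
Lemma~\ref{lem:hetero_polarization_overlap} with $\bm x=\bm y=\bm v_{j,k}$ gives $|\widehat{\bm v}_{j,k}^\top\bm v_{j,k}|^2\to a_{j,k}>0$. This is where Assumption~\ref{ass5_hetero} is needed. Hence
\[
\bar{\bm z}^\top\bm\Sigma_k^{1/2}\widehat{\bm v}_{j,k}=\frac{\bar{\bm z}^\top\bm\Sigma_k^{1/2}\widehat{\bm P}_{j,k}\bm v_{j,k}}{\widehat{\bm v}_{j,k}^\top\bm v_{j,k}},
\]
and the denominator is bounded away from zero with probability tending to one. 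It remains to show that $\sup_{z\in\mathcal C_{j,k}}|\bar{\bm z}^\top\bm\Sigma_k^{1/2}\bm Q_{n,k}(z)\bm v_{j,k}|\xrightarrow{p}0$.

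\textbf{Pointwise step.} Fix $z\in\mathbb C^+$. I would reproduce the martingale-difference and Sherman--Morrison argument from Lemma~\ref{lem:ek_vhat_zero}, now using only the observations of domain $k$: $\mathcal G$ is the $\sigma$-field of the other class block, and $\bm y_i=n_k^{-1/2}\bm\Sigma_k^{1/2}\bm z_i$. This shows $\bar{\bm z}_{k,t}^\top\bm\Sigma_k^{1/2}\bm Q^{\,e}_{n,k}(z)\bm\xi=o_p(1)$. The variance bound is $O(n_k^{-1})$ and the conditional mean is $O(n_k^{-1/2})$, by the finite fourth moment and $\|\bm\Sigma_k\|=O(1)$.

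To pass to the centered resolvent, use $\bm Q_{n,k}-\bm Q^{\,e}_{n,k}=\bm Q_{n,k}\bm R_{n,k}\bm Q^{\,e}_{n,k}$, where
\[
\bm R_{n,k}=\sum_{t}\frac{n_{k,t}}{n_k}\bm\Sigma_k^{1/2}\bar{\bm z}_{k,t}\bar{\bm z}_{k,t}^\top\bm\Sigma_k^{1/2}
\]
has rank at most two. Together with $\|\bar{\bm z}_{k,t}\|=O_p(1)$, this gives the pointwise limit for $\bm Q_{n,k}$. Conjugate symmetry then extends it to all $z\in\mathcal C_{j,k}\setminus\mathbb R$.

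\textbf{Uniformity, the main obstacle.} The contour passes through the real axis near the outlier, so pointwise control on $\mathbb C^+$ must be upgraded. I would use the event $E_n$ on which $\|\bm Q_{n,k}(z)\|\le c_j^{-1}$ on $\mathcal C_{j,k}$. On $E_n$, the resolvent identity gives a Lipschitz bound $C_j\|\bar{\bm z}\||z-w|$. I would then run the finite-net argument with nets chosen in $\mathcal C_{j,k}\setminus\mathbb R$, after truncating on $\{\|\bar{\bm z}\|\le M\}$.

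Integrating over the contour gives the first two claims. For the consequences, note that $e_k^\top\widehat{\bm P}_{j,k}e_k=(e_k^\top\widehat{\bm v}_{j,k})^2$ and $s_k^\top\widehat{\bm P}_{j,k}e_k=(s_k^\top\widehat{\bm v}_{j,k})(e_k^\top\widehat{\bm v}_{j,k})$ are products of $o_p(1)$ factors.
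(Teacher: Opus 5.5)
Your proposal is correct and is essentially the paper's proof. The paper writes $e_k=\bm\Sigma_k^{1/2}(\bar{\bm z}_{k,+}-\bar{\bm z}_{k,-})$ and $s_k=\bm\Sigma_k^{1/2}(\bar{\bm z}_{k,+}+\bar{\bm z}_{k,-})$, says to repeat the contour-projector argument of Lemma~\ref{lem:ek_vhat_zero} for $\bm S_{n,k}$, and reads off the projector bounds as products of $o_p(1)$ factors; you spell out the steps the paper leaves implicit (the denominator bound via $a_{j,k}>0$, the martingale/Sherman--Morrison pointwise step using only the two class blocks of domain $k$, the rank-two centering correction, and the finite-net upgrade on the contour), and these transfer as you describe.
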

	
	\begin{proof}
		The proof is exactly the same as that of Lemma~\ref{lem:ek_vhat_zero}, applied within
		domain \(k\). Indeed, under Assumption~\ref{ass1},
		\[
		e_k=\bm\Sigma_k^{1/2}(\bar{\bm z}_{k,+}-\bar{\bm z}_{k,-}),
		\qquad
		s_k=\bm\Sigma_k^{1/2}(\bar{\bm z}_{k,+}+\bar{\bm z}_{k,-}),
		\]
		where \(\bar{\bm z}_{k,\pm1}\) are the whitened sample means in domain \(k\). Repeating the contour-projector argument in Lemma~\ref{lem:ek_vhat_zero}
		for the within-domain sample covariance matrix \(\bm S_{n,k}\)
		we obtain
		\[
		e_k^\top \widehat{\bm v}_{j,k}\xrightarrow{p}0,
		\qquad
		s_k^\top \widehat{\bm v}_{j,k}\xrightarrow{p}0.
		\]
		The projector conclusions follow immediately from
		\[
		e_k^\top \widehat{\bm P}_{j,k}e_k=(e_k^\top \widehat{\bm v}_{j,k})^2,
		\quad
		s_k^\top \widehat{\bm P}_{j,k}s_k=(s_k^\top \widehat{\bm v}_{j,k})^2,
		\quad s_k^\top \widehat{\bm P}_{j,k}e_k
		=
		(s_k^\top \widehat{\bm v}_{j,k})(e_k^\top \widehat{\bm v}_{j,k}).
		\]
	\end{proof}
	\begin{lemma}\label{lem:hetero_mixed_projectors}
		Under Assumptions~\ref{ass1}, \ref{ass2}, \ref{ass4}, \ref{ass3_hetero} and \ref{ass5_hetero}, let
		\[
		\widehat{\bm P}_{j,k}:=\widehat{\bm v}_{j,k}\widehat{\bm v}_{j,k}^\top,
		\quad
		\bm P_{j,k}:=\bm v_{j,k}\bm v_{j,k}^\top,
		\quad
		m_{j,k}:=\bar{\bm\mu}^\top \bm v_{j,k},
		\quad
		\rho_{j\ell}^{(k,k')}:=\bm v_{j,k}^\top \bm v_{\ell,k'}.
		\]
		Then, for any fixed
		\(k,k'\in\{1,\dots,K\}\), \(j\in\mathcal I_k\), \(m\in\mathcal I_{k'}\), and
		\(\ell\in\mathcal I_K\),
		\begin{align*}
			\bar{\bm\mu}^\top \widehat{\bm P}_{j,k}\bar{\bm\mu}
			&=
			a_{j,k}\,m_{j,k}^2+o_p(1),\\
			\bar{\bm\mu}^\top \widehat{\bm P}_{j,k}\bm P_{\ell,K}\bar{\bm\mu}
			&=
			a_{j,k}\,m_{j,k}\rho_{j\ell}^{(k,K)}m_{\ell,K}+o_p(1),\\
			\bar{\bm\mu}^\top \bm P_{\ell,K}\widehat{\bm P}_{m,k'}\bar{\bm\mu}
			&=
			a_{m,k'}\,m_{\ell,K}\rho_{\ell m}^{(K,k')}m_{m,k'}+o_p(1),\\
			\bar{\bm\mu}^\top \widehat{\bm P}_{j,k}\bm P_{\ell,K}\widehat{\bm P}_{m,k'}\bar{\bm\mu}
			&=
			a_{j,k}a_{m,k'}\,
			m_{j,k}\rho_{j\ell}^{(k,K)}\rho_{\ell m}^{(K,k')}m_{m,k'}+o_p(1).
		\end{align*}
		Moreover,
		\[
		\bar{\bm\mu}^\top \widehat{\bm P}_{j,k}\widehat{\bm P}_{m,k'}\bar{\bm\mu}
		=
		\begin{cases}
			a_{j,k}\,\ind_{\{j=m\}}\,m_{j,k}^2+o_p(1), & k=k',\\[1mm]
			a_{j,k}a_{m,k'}\,m_{j,k}\rho_{jm}^{(k,k')}m_{m,k'}+o_p(1), & k\neq k'.
		\end{cases}
		\]
	\end{lemma}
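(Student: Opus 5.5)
The whole argument rests on one observation: every quantity in the statement is a bilinear form of the rank-one sample projectors \(\widehat{\bm P}_{j,k}\) tested against \emph{deterministic} vectors of bounded norm. Lemma~\ref{lem:hetero_polarization_overlap} gives the limit of any single such form, so each product of projectors should be split into scalar factors, each of which is either deterministic or covered by that lemma.

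The first identity is Lemma~\ref{lem:hetero_polarization_overlap} with \(\bm x=\bm y=\bar{\bm\mu}\).

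For the second, note that \(\bm P_{\ell,K}\bar{\bm\mu}=m_{\ell,K}\bm v_{\ell,K}\) is deterministic. Hence
\[
\bar{\bm\mu}^\top\widehat{\bm P}_{j,k}\bm P_{\ell,K}\bar{\bm\mu}
=m_{\ell,K}\,\bar{\bm\mu}^\top\widehat{\bm P}_{j,k}\bm v_{\ell,K}.
\]
Applying Lemma~\ref{lem:hetero_polarization_overlap} with \(\bm x=\bar{\bm\mu}\) and \(\bm y=\bm v_{\ell,K}\) gives \(a_{j,k}m_{j,k}\rho^{(k,K)}_{j\ell}m_{\ell,K}+o_p(1)\). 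The third identity follows the same way, or by transposition.

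For the triple product, write
\[
\bar{\bm\mu}^\top\widehat{\bm P}_{j,k}\bm v_{\ell,K}\bm v_{\ell,K}^\top\widehat{\bm P}_{m,k'}\bar{\bm\mu}
=(\bar{\bm\mu}^\top\widehat{\bm P}_{j,k}\bm v_{\ell,K})(\bm v_{\ell,K}^\top\widehat{\bm P}_{m,k'}\bar{\bm\mu}).
\]
Each factor converges by the lemma, and both are \(O_p(1)\). The product of the limits is the stated \(a_{j,k}a_{m,k'}m_{j,k}\rho_{j\ell}^{(k,K)}\rho_{\ell m}^{(K,k')}m_{m,k'}\).

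For the last identity, take first \(k=k'\). The eigenvectors of \(\bm S_{n,k}\) are orthonormal, so \(\widehat{\bm P}_{j,k}\widehat{\bm P}_{m,k}=\ind_{\{j=m\}}\widehat{\bm P}_{j,k}\) exactly, and the first identity finishes that case.

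For \(k\neq k'\), write the form as
\[
(\bar{\bm\mu}^\top\widehat{\bm v}_{j,k})(\widehat{\bm v}_{j,k}^\top\widehat{\bm v}_{m,k'})(\widehat{\bm v}_{m,k'}^\top\bar{\bm\mu}).
\]
The middle factor is the main obstacle, because both vectors in it are random. I would use the independence of \(\bm S_{n,k}\) and \(\bm S_{n,k'}\), which holds because they are built from independent \(\bm Z_k,\bm Z_{k'}\). Conditional on domain \(k'\), the vector \(\bm y:=\widehat{\bm P}_{m,k'}\bar{\bm\mu}\) is independent of \(\widehat{\bm P}_{j,k}\) and has norm at most \(\|\bar{\bm\mu}\|\). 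Lemma~\ref{lem:hetero_polarization_overlap} is a statement about deterministic \(\bm x,\bm y\), so it applies conditionally: for each fixed realization of \(\bm y\) the convergence in probability holds. I would check that the \(o_p(1)\) is uniform over deterministic vectors in a bounded ball, i.e.\ that the probability bound depends only on the norms. This should follow from the contour/resolvent proof of Theorem~\ref{thm:centered_overlap_prob}, whose error bounds depend only on \(\|\bm\xi\|\) and, by polarization, on \(\|\bm x\|,\|\bm y\|\). Given uniformity, dominated convergence over the conditional probabilities gives
\[
\bar{\bm\mu}^\top\widehat{\bm P}_{j,k}\bm y=a_{j,k}m_{j,k}\,\bm v_{j,k}^\top\widehat{\bm P}_{m,k'}\bar{\bm\mu}+o_p(1).
\]
Applying Lemma~\ref{lem:hetero_polarization_overlap} once more to \(\bm v_{j,k}^\top\widehat{\bm P}_{m,k'}\bar{\bm\mu}\) with deterministic vectors yields \(a_{j,k}a_{m,k'}m_{j,k}\rho^{(k,k')}_{jm}m_{m,k'}+o_p(1)\).

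Alternatively, uniformity can be avoided by a subsequence argument. Take a subsequence along which domain \(k'\) quantities converge a.s.; this needs the a.s. version in Lemma~\ref{thm:uncentered_overlap_target_short} plus the bridge, or simply passing to sub-subsequences for convergence in probability. Then argue conditionally on that realization.
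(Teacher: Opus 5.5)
Your proposal is correct and follows essentially the same route as the paper. The first four identities come from Lemma~\ref{lem:hetero_polarization_overlap} together with the rank-one factorization through $\bm v_{\ell,K}$; the case $k=k'$ uses exact orthogonality of the sample eigenvectors; and the case $k\neq k'$ conditions on domain $k'$ and applies the overlap lemma twice. Your uniformity concern is more careful than the paper, which just says ``applied conditionally'', and it is resolved without revisiting the contour proof. Because the overlap lemma holds for every deterministic bounded sequence $\bm y_p$, choosing near-maximizers of $\bm y\mapsto\mathbb P(|\bar{\bm\mu}^\top\widehat{\bm P}_{j,k}\bm y-a_{j,k}m_{j,k}\bm v_{j,k}^\top\bm y|>\varepsilon)$ over the ball $\|\bm y\|\le\|\bar{\bm\mu}\|$ shows the convergence is already uniform there.
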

	
	\begin{proof}
		The first display is Lemma~\ref{lem:hetero_polarization_overlap} with
		\(\bm x=\bm y=\bar{\bm\mu}\). For the second display, apply Lemma~\ref{lem:hetero_polarization_overlap} with
		\[
		\bm x=\bar{\bm\mu},
		\qquad
		\bm y=\bm P_{\ell,K}\bar{\bm\mu}
		=
		(\bar{\bm\mu}^\top \bm v_{\ell,K})\bm v_{\ell,K}
		=
		m_{\ell,K}\bm v_{\ell,K}.
		\]
		Then
		\[
		\bar{\bm\mu}^\top \widehat{\bm P}_{j,k}\bm P_{\ell,K}\bar{\bm\mu}
		=
		a_{j,k}(\bar{\bm\mu}^\top \bm v_{j,k})
		(\bm v_{j,k}^\top \bm P_{\ell,K}\bar{\bm\mu})+o_p(1).
		\]
		Since
		\[
		\bm v_{j,k}^\top \bm P_{\ell,K}\bar{\bm\mu}
		=
		(\bm v_{j,k}^\top \bm v_{\ell,K})(\bm v_{\ell,K}^\top \bar{\bm\mu})
		=
		\rho_{j\ell}^{(k,K)}m_{\ell,K},
		\]
		the second display follows. The third display is analogous.
		
		For the fourth display, use the factorization
		\[
		\bar{\bm\mu}^\top
		\widehat{\bm P}_{j,k}\bm P_{\ell,K}\widehat{\bm P}_{m,k'}\bar{\bm\mu}
		=
		\bigl(\bar{\bm\mu}^\top\widehat{\bm P}_{j,k}\bm v_{\ell,K}\bigr)
		\bigl(\bm v_{\ell,K}^\top\widehat{\bm P}_{m,k'}\bar{\bm\mu}\bigr).
		\]
		By Lemma~\ref{lem:hetero_polarization_overlap},
		\[
		\bar{\bm\mu}^\top\widehat{\bm P}_{j,k}\bm v_{\ell,K}
		=
		a_{j,k}m_{j,k}\rho_{j\ell}^{(k,K)}+o_p(1),
		\]
		and
		\[
		\bm v_{\ell,K}^\top\widehat{\bm P}_{m,k'}\bar{\bm\mu}
		=
		a_{m,k'}\rho_{\ell m}^{(K,k')}m_{m,k'}+o_p(1).
		\]
		Multiplying the two displays yields
		\[
		\bar{\bm\mu}^\top
		\widehat{\bm P}_{j,k}\bm P_{\ell,K}\widehat{\bm P}_{m,k'}\bar{\bm\mu}
		=
		a_{j,k}a_{m,k'}\,
		m_{j,k}\rho_{j\ell}^{(k,K)}\rho_{\ell m}^{(K,k')}m_{m,k'}
		+o_p(1).
		\]
		
		It remains to prove the last display. If \(k=k'\), then the sample eigenvectors
		\(\widehat{\bm v}_{j,k}\) and \(\widehat{\bm v}_{m,k}\) are orthonormal, so
		\[
		\widehat{\bm P}_{j,k}\widehat{\bm P}_{m,k}
		=
		\ind_{\{j=m\}}\widehat{\bm P}_{j,k}.
		\]
		Hence
		\[
		\bar{\bm\mu}^\top \widehat{\bm P}_{j,k}\widehat{\bm P}_{m,k}\bar{\bm\mu}
		=
		\ind_{\{j=m\}}\,
		\bar{\bm\mu}^\top \widehat{\bm P}_{j,k}\bar{\bm\mu}
		=
		a_{j,k}\,\ind_{\{j=m\}}\,m_{j,k}^2+o_p(1),
		\]
		by the first display.
		
		If \(k\neq k'\), condition on the data from domain \(k'\). Then
		\(\widehat{\bm P}_{m,k'}\bar{\bm\mu}\) is fixed relative to the randomness from
		domain \(k\), is independent of \(\widehat{\bm P}_{j,k}\), and has norm bounded
		by \(\|\bar{\bm\mu}\|\). Hence the same polarization argument as in
		Lemma~\ref{lem:hetero_polarization_overlap}, applied conditionally, gives
		\[
		\bar{\bm\mu}^\top \widehat{\bm P}_{j,k}\widehat{\bm P}_{m,k'}\bar{\bm\mu}
		=
		a_{j,k}(\bar{\bm\mu}^\top \bm v_{j,k})
		\bigl(\bm v_{j,k}^\top \widehat{\bm P}_{m,k'}\bar{\bm\mu}\bigr)
		+o_p(1).
		\]
		Applying Lemma~\ref{lem:hetero_polarization_overlap} again, now in domain \(k'\), with
		\(\bm x=\bm v_{j,k}\) and \(\bm y=\bar{\bm\mu}\), yields
		\[
		\bm v_{j,k}^\top \widehat{\bm P}_{m,k'}\bar{\bm\mu}
		=
		a_{m,k'}(\bm v_{j,k}^\top \bm v_{m,k'})(\bm v_{m,k'}^\top \bar{\bm\mu})+o_p(1)
		=
		a_{m,k'}\rho_{jm}^{(k,k')}m_{m,k'}+o_p(1).
		\]
		Substituting this into the previous display gives
		\[
		\bar{\bm\mu}^\top \widehat{\bm P}_{j,k}\widehat{\bm P}_{m,k'}\bar{\bm\mu}
		=
		a_{j,k}a_{m,k'}\,m_{j,k}\rho_{jm}^{(k,k')}m_{m,k'}+o_p(1).
		\]
		This completes the proof.
	\end{proof}
	\begin{lemma}\label{lem:ApH_positive}
		Under Assumptions~\ref{ass1}, \ref{ass2}, \ref{ass4},
		\ref{ass3_hetero} and \ref{ass5_hetero}, the matrix \(\bm A_p^{E}\) is positive
		definite for all sufficiently large \(p\). More precisely, there exists a
		constant \(c_E>0\) such that
		\[
		\lambda_{\min}(\bm A_p^{E})\ge c_E
		\]
		for all sufficiently large \(p\).
	\end{lemma}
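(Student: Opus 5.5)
The plan is to mimic the proof of Lemma~\ref{lem:Ap_positive}: split $\bm A_p^{E}$ into a positive semidefinite ``signal'' part and a strictly positive diagonal part. Write $\bm S:=\diag(\sigma_k^2)_{1\le k\le K}$. Then
\[
\bm A_p^{E}=\sigma_K^2\,\bm S^{-1}\bm\Omega_p^{E}\bm S^{-1}+\diag\!\left(\frac{\sigma_K^2\alpha_k^2}{\sigma_k^4}+\tau_k\frac{\sigma_K^2}{\sigma_k^2}\right)_{1\le k\le K}.
\]
The diagonal part has entries at least $\tau_k\sigma_K^2/\sigma_k^2\ge \min_k\tau_k/c^2>0$, using $1/c\le\sigma_k^2\le c$. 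So it suffices to show that $\bm\Omega_p^{E}=(\Omega_{kk',p}^{E})$ is positive semidefinite for every $p$. Then $c_E:=\min_k\tau_k/c^2$ works.

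To prove that $\bm\Omega_p^{E}$ is positive semidefinite, I will exhibit it exactly as a Gram matrix in a positive semidefinite metric. The heuristic comes from Lemma~\ref{lem:hetero_mixed_projectors}. The vector $\widehat{\bm P}_{j,k}\bar{\bm\mu}$ behaves like $a_{j,k}m_{j,k}\bm v_{j,k}$ plus a ``noise'' part. This noise part has squared norm $a_{j,k}(1-a_{j,k})m_{j,k}^2$ and is asymptotically orthogonal to all deterministic directions, to the noise from other indices $j$, and to the noise from other domains.

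Accordingly, I work in $\mathcal H:=\mathbb R^p\oplus\mathbb R^{N}$ with $N=\sum_k r_k$, and let $\bm f_{j,k}$ be the standard basis vectors of the second summand. For each $k$ set
\[
\bm x_k:=\Bigl(\bar{\bm\mu}-\sum_{j\in\mathcal I_k}b_{j,k}a_{j,k}m_{j,k}\bm v_{j,k}\Bigr)\oplus\Bigl(-\sum_{j\in\mathcal I_k}b_{j,k}\sqrt{a_{j,k}(1-a_{j,k})}\,m_{j,k}\bm f_{j,k}\Bigr).
\]
Here $0<a_{j,k}<1$ by Assumption~\ref{ass5_hetero}, exactly as in Lemma~\ref{lem:Ap_positive}. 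Let $\bm G:=(\bm I_p+\sum_{\ell\in\mathcal I_K}\lambda_{\ell,K}\bm P_{\ell,K})\oplus\bm I_N$; this is positive definite since $\lambda_{\ell,K}>-1$.

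The key step, and the main obstacle, is the algebraic verification that $\Omega_{kk',p}^{E}=\bm x_k^\top\bm G\bm x_{k'}$ for all $k,k'$. I will expand the right-hand side term by term.
\begin{itemize}
\item The identity part of $\bm G$ on $\mathbb R^p$ gives $\|\bar{\bm\mu}\|^2-\bm m_k^\top\bm D_k\bm m_k-\bm m_{k'}^\top\bm D_{k'}\bm m_{k'}$, plus the cross term $\sum_{j,m}b_{j,k}a_{j,k}m_{j,k}\rho^{(k,k')}_{jm}a_{m,k'}b_{m,k'}m_{m,k'}$.
\item For $k\neq k'$ that cross term equals $\bm m_k^\top\bm B_k\bm C_{kk'}\bm B_{k'}\bm m_{k'}$ directly. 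For $k=k'$ it equals $\sum_j b_{j,k}^2a_{j,k}^2m_{j,k}^2$, and adding the $\bm I_N$ contribution $\sum_j b_{j,k}^2a_{j,k}(1-a_{j,k})m_{j,k}^2$ gives $\sum_j b_{j,k}^2a_{j,k}m_{j,k}^2=\bm m_k^\top\bm B_k\bm C_{kk}\bm B_k\bm m_k$. For $k\ne k'$ the $\bm I_N$ parts are orthogonal, so they contribute nothing.
\item The spike part of $\bm G$ gives $\sum_\ell\lambda_{\ell,K}(\bm v_{\ell,K}^\top\bm x_k)(\bm v_{\ell,K}^\top\bm x_{k'})$. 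Since $\bm v_{\ell,K}^\top\bm x_k=m_{\ell,K}-(\bm R_{Kk}\bm D_k\bm m_k)_\ell=(\bm q_k)_\ell$, this sum is $\bm q_k^\top\bm\Lambda_K\bm q_{k'}$.
\end{itemize}
I expect the bookkeeping of the $k=k'$ versus $k\ne k'$ cases of $\bm C_{kk'}$, and the sign conventions in $\bm q_k$, to be the delicate part. The structure itself is forced by the limits in Lemma~\ref{lem:hetero_mixed_projectors}.

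Once the identity holds, for any $\bm w$, setting $\bm y=\sum_k w_k\bm x_k/\sigma_k^2$ gives $\bm w^\top\bm S^{-1}\bm\Omega_p^{E}\bm S^{-1}\bm w=\bm y^\top\bm G\bm y\ge0$. Combined with the diagonal bound, this yields
\[
\bm w^\top\bm A_p^{E}\bm w\ge \frac{\min_k\tau_k}{c^2}\,\|\bm w\|^2
\]
for every $p$, which proves the lemma. As a fallback, if the exact Gram identity failed, I would argue via the limit. The finite-$p$ random form $\sum_{k,k'}w_kw_{k'}\sigma_k^{-2}\sigma_{k'}^{-2}\bar{\bm\mu}^\top\widehat{\bm M}_k\bm\Sigma_K\widehat{\bm M}_{k'}\bar{\bm\mu}$ is nonnegative, and by Lemma~\ref{lem:hetero_mixed_projectors} it converges to the deterministic form. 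Hence the latter is at least $-o(1)$, which still gives $\lambda_{\min}(\bm A_p^{E})\ge c_E/2$ for large $p$.
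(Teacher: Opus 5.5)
Your proof is correct and is essentially the paper's argument. Your $\bm x_k$ is exactly the paper's $\bm h_k=\bar{\bm\mu}-\sum_{j\in\mathcal I_k}b_{j,k}m_{j,k}\bm g_{j,k}$, with $\bm g_{j,k}=a_{j,k}\bm v_{j,k}+\sqrt{a_{j,k}(1-a_{j,k})}\,\bm f_{j,k}$ in an enlarged space, and $\bm\Omega_p^{E}$ is realized exactly as the Gram matrix $\langle\bm h_k,\bm N_K\bm h_{k'}\rangle$; the bound then follows from the $\tau_k\sigma_K^2/\sigma_k^2$ diagonal term after substituting $w_k/\sigma_k^2$, so your limiting-argument fallback is never needed.
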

	
	\begin{proof}
		Since \(0<a_{j,k}<1\), enlarge the Euclidean space if necessary and choose
		orthonormal vectors
		\(\{\bm f_{j,k}:j\in\mathcal I_k,\ 1\le k\le K\}\) orthogonal to
		\(\mathbb R^p\). Define
		\[
		\bm g_{j,k}
		=
		a_{j,k}\bm v_{j,k}
		+
		\sqrt{a_{j,k}(1-a_{j,k})}\,\bm f_{j,k}.
		\]
		Then
		\[
		\langle \bm g_{j,k},\bm g_{\ell,k'}\rangle
		=
		\begin{cases}
			a_{j,k}\ind_{\{j=\ell\}}, & k=k',\\[1mm]
			a_{j,k}a_{\ell,k'}\rho_{j\ell}^{(k,k')}, & k\ne k',
		\end{cases}
		\qquad
		\langle \bar{\bm\mu},\bm g_{j,k}\rangle
		=
		a_{j,k}m_{j,k}.
		\]
		Set
		\[
		\bm h_k
		=
		\bar{\bm\mu}
		-
		\sum_{j\in\mathcal I_k}
		b_{j,k}m_{j,k}\bm g_{j,k}.
		\]
		Then
		\[
		\begin{aligned}
			\langle \bm h_k,\bm h_{k'}\rangle
			&=
			\|\bar{\bm\mu}\|^2
			-\bm m_k^\top\bm D_k\bm m_k
			-\bm m_{k'}^\top\bm D_{k'}\bm m_{k'}  \\
			&\quad
			+\bm m_k^\top\bm B_k\bm C_{kk'}\bm B_{k'}\bm m_{k'} .
		\end{aligned}
		\]
		For \(\ell\in\mathcal I_K\),
		\[
		\langle \bm v_{\ell,K},\bm h_k\rangle
		=
		m_{\ell,K}
		-
		\bigl(\bm R_{Kk}\bm D_k\bm m_k\bigr)_\ell
		=
		(q_k)_\ell .
		\]
		Let
		\[
		\bm N_K
		=
		\bm I+
		\sum_{\ell\in\mathcal I_K}
		\lambda_{\ell,K}\bm v_{\ell,K}\bm v_{\ell,K}^\top .
		\]
		Since \(\lambda_{\ell,K}>-1\), \(\bm N_K\) is positive definite. Moreover,
		\[
		\begin{aligned}
			\langle \bm h_k,\bm N_K\bm h_{k'}\rangle
			&=
			\langle \bm h_k,\bm h_{k'}\rangle
			+
			\sum_{\ell\in\mathcal I_K}
			\lambda_{\ell,K}(q_k)_\ell(q_{k'})_\ell  \\
			&=
			\Omega_{kk',p}^{E}.
		\end{aligned}
		\]
		Thus \(\bm\Omega_p^{E}=(\Omega_{kk',p}^{E})_{1\le k,k'\le K}\) is positive
		semidefinite.
		
		For any \(\bm w\in\mathbb R^K\), let
		\[
		\widetilde{\bm w}
		=
		(\widetilde w_1,\ldots,\widetilde w_K)^\top,
		\qquad
		\widetilde w_k=\frac{w_k}{\sigma_k^2}.
		\]
		Then
		\[
		\begin{aligned}
			\bm w^\top\bm A_p^{E}\bm w
			&=
			\sigma_K^2\widetilde{\bm w}^{\top}\bm\Omega_p^{E}\widetilde{\bm w}
			+
			\sum_{k=1}^K
			\left(
			\frac{\sigma_K^2\alpha_k^2}{\sigma_k^4}
			+
			\tau_k\frac{\sigma_K^2}{\sigma_k^2}
			\right)w_k^2  \\
			&\ge
			\sum_{k=1}^K
			\tau_k\frac{\sigma_K^2}{\sigma_k^2}w_k^2 .
		\end{aligned}
		\]
		Since \(K\) is fixed, \(\tau_k>0\), and the \(\sigma_k^2\)'s are uniformly
		bounded away from zero and infinity, there exists a constant \(c_E>0\) such that
		\[
		\bm w^\top\bm A_p^{E}\bm w
		\ge c_E\|\bm w\|^2 .
		\]
		Therefore,
		\[
		\lambda_{\min}(\bm A_p^{E})\ge c_E
		\]
		for all sufficiently large \(p\).
	\end{proof}
	
	Next, we formally begin the proof of Theorem~\ref{thm:err_hetero}. For \(k=1,\dots,K\), write
	\[
	\eta_{k,\pm1}:=\widehat{\bm\mu}_{\pm1,k}-\bm\mu_{\pm1,k},
	\qquad
	e_k:=\widehat{\bm\mu}_k-\bm\mu_k=\eta_{k,+}-\eta_{k,-},
	\qquad
	s_k:=\eta_{k,+}+\eta_{k,-}.
	\]
	Then
	\[
	\widehat{\bm\mu}_k=\bm\mu_k+e_k,
	\qquad
	2\bm\mu_{+1,K}-\widehat{\bm\mu}_{+1,K}-\widehat{\bm\mu}_{-1,K}
	=
	\bm\mu_K-s_K,
	\]
	and
	\[
	\widehat{\bm\mu}_{+1,K}+\widehat{\bm\mu}_{-1,K}-2\bm\mu_{-1,K}
	=
	\bm\mu_K+s_K.
	\]
	
	By construction,
	\[
	\widehat{\bm d}^{E}(\bm w)=\sum_{k=1}^K w_k\widehat{\bm d}_k^{E},
	\qquad
	\widehat{\bm d}_k^{E}=\widehat{\bm\Sigma}_k^{-1}\widehat{\bm\mu}_k,
	\]
	and the empirical intercept is
	\[
	\widehat b_K^{E}(\bm w)
	=
	-\widehat{\bm d}^{E}(\bm w)^\top
	\frac{\widehat{\bm\mu}_{+1,K}+\widehat{\bm\mu}_{-1,K}}{2}.
	\]
	
	Define
	\[
	V_n^{E}(\bm w):=
	\widehat{\bm d}^{E}(\bm w)^\top \bm\Sigma_K \widehat{\bm d}^{E}(\bm w),
	\]
	and
	\[
	D_{+,n}^{E}(\bm w)
	:=
	\bigl(2\bm\mu_{+1,K}-\widehat{\bm\mu}_{+1,K}-\widehat{\bm\mu}_{-1,K}\bigr)^\top
	\widehat{\bm d}^{E}(\bm w),
	\]
	\[
	D_{-,n}^{E}(\bm w)
	:=
	\bigl(\widehat{\bm\mu}_{+1,K}+\widehat{\bm\mu}_{-1,K}-2\bm\mu_{-1,K}\bigr)^\top
	\widehat{\bm d}^{E}(\bm w).
	\]
	Then
	\[
	-\bigl(\widehat{\bm d}^{E}(\bm w)^\top\bm\mu_{+1,K}+\widehat b_K^{E}(\bm w)\bigr)
	=
	-\frac12 D_{+,n}^{E}(\bm w),
	\]
	and
	\[
	\widehat{\bm d}^{E}(\bm w)^\top\bm\mu_{-1,K}+\widehat b_K^{E}(\bm w)
	=
	-\frac12 D_{-,n}^{E}(\bm w).
	\]
	Hence
	\begin{align}\label{eq:err_margin_rep_hetero}
		\Err^{E}(\bm w)
		&=
		\frac12\Phi\!\left(
		-\frac{D_{-,n}^{E}(\bm w)}{2\sqrt{V_n^{E}(\bm w)}}
		\right)
		+
		\frac12\Phi\!\left(
		-\frac{D_{+,n}^{E}(\bm w)}{2\sqrt{V_n^{E}(\bm w)}}
		\right).
	\end{align}
	Therefore it suffices to prove that
	\begin{align}
		D_{+,n}^{E}(\bm w)&=(\bm u_p^{E})^\top\bm w-w_K\Delta_K+o_p(1), \label{eq:Dplus_goal_hetero}\\
		D_{-,n}^{E}(\bm w)&=(\bm u_p^{E})^\top\bm w+w_K\Delta_K+o_p(1), \label{eq:Dminus_goal_hetero}\\
		V_n^{E}(\bm w)&=\bm w^\top \bm A_p^{E}\bm w+o_p(1). \label{eq:V_goal_hetero}
	\end{align}
	
	We first record several elementary consequences of Assumptions~\ref{ass1},
	\ref{ass2}, \ref{ass4}, and \ref{ass3_hetero}. Since
	\[
	\bm\mu_k=\bar{\bm\mu}+\bm\delta_k,
	\qquad
	\mathbb E\|\bm\delta_k\|^2=\alpha_k^2,
	\]
	we have \(\|\bm\mu_k\|=O_p(1)\) and \(\|\widehat{\bm\mu}_k\|=O_p(1)\) for each fixed \(k\).
	Moreover, for fixed \(k,k'\),
	\begin{gather}
		\bm\mu_K^\top e_k=o_p(1),\qquad
		\bm\mu_K^\top s_K=o_p(1),\qquad
		s_K^\top \bm\mu_k=o_p(1), \label{eq:hetero_mu_e_small}\\
		\frac{1}{\sigma_k^2}s_K^\top e_k
		=
		\Delta_K\ind_{\{k=K\}}+o_p(1), \label{eq:hetero_s_e_limit}\\
		e_k^\top e_{k'}
		=
		\tau_k\sigma_k^2\,\ind_{\{k=k'\}}+o_p(1), \label{eq:hetero_e_e_limit}\\
		\bm\mu_K^\top\bm\mu_k
		=
		\|\bar{\bm\mu}\|^2+\alpha_K^2\ind_{\{k=K\}}+o_p(1). \label{eq:hetero_mu_mu_limit}
	\end{gather}
	Indeed, the above relations follow from standard second-moment calculations,
	using that each classwise sample-mean noise has covariance
	\[
	\var(\eta_{k,\pm1})=\frac{1}{n_{k,\pm1}}\bm\Sigma_k.
	\]
	Also,
	\[
	e_k=\eta_{k,+}-\eta_{k,-},
	\qquad
	\var(e_k)=\left(\frac{1}{n_{k,+}}+\frac{1}{n_{k,-}}\right)\bm\Sigma_k,
	\]
	which gives \eqref{eq:hetero_e_e_limit}. Moreover,
	\[
	s_K^\top e_K=\|\eta_{K,+}\|^2-\|\eta_{K,-}\|^2,
	\]
	and hence \eqref{eq:hetero_s_e_limit} follows by concentration of the two classwise squared norms. Finally,
	\eqref{eq:hetero_mu_mu_limit} follows from Assumption~\ref{ass2}.
	
	Next, by \eqref{eq:Sigma_tilde_hetero},
	\[
	\widehat{\bm\Sigma}_k
	=
	\sigma_k^2
	\left(
	\bm I_p+\sum_{j\in\mathcal I_k}\lambda_{j,k}\widehat{\bm P}_{j,k}
	\right),
	\qquad
	\widehat{\bm P}_{j,k}:=\widehat{\bm v}_{j,k}\widehat{\bm v}_{j,k}^\top.
	\]
	Since the projectors \(\widehat{\bm P}_{j,k}\) are mutually orthogonal,
	\[
	\widehat{\bm\Sigma}_k^{-1}
	=
	\frac1{\sigma_k^2}
	\left(
	\bm I_p-\sum_{j\in\mathcal I_k} b_{j,k}\widehat{\bm P}_{j,k}
	\right),
	\]
	and hence
	\begin{equation}\label{eq:dhat_expand_hetero}
		\widehat{\bm d}_k^{E}
		=
		\frac1{\sigma_k^2}
		\left(
		\widehat{\bm\mu}_k-\sum_{j\in\mathcal I_k} b_{j,k}\widehat{\bm P}_{j,k}\widehat{\bm\mu}_k
		\right).
	\end{equation}
	
	We now analyze the spike-projector terms. By
	Lemma~\ref{lem:hetero_polarization_overlap},
	\begin{equation}\label{eq:hetero_polarization_main}
		\bm x^\top \widehat{\bm P}_{j,k}\bm y
		=
		a_{j,k}(\bm x^\top \bm v_{j,k})(\bm y^\top \bm v_{j,k})+o_p(1)
	\end{equation}
	for any deterministic vectors \(\bm x,\bm y\) with bounded Euclidean norms.
	In particular,
	\begin{equation}\label{eq:hetero_mubar_P_mubar}
		\bar{\bm\mu}^\top \widehat{\bm P}_{j,k}\bar{\bm\mu}
		=
		a_{j,k}m_{j,k}^2+o_p(1).
	\end{equation}
	
	Also, for any fixed \(a\in\{1,\dots,K\}\), since \(\bm\delta_a\) is independent of the
	training data and
	\[
	\mathbb E\bigl((\bm\delta_a^\top \widehat{\bm v}_{j,k})^2\mid \widehat{\bm v}_{j,k}\bigr)
	=
	\frac{\alpha_a^2}{p},
	\]
	we have
	\begin{equation}\label{eq:hetero_delta_vhat_small}
		\bm\delta_a^\top \widehat{\bm v}_{j,k}\xrightarrow{p}0,
		\qquad
		\bm\delta_a^\top \widehat{\bm P}_{j,k}\bm\delta_b=o_p(1),
		\qquad
		\bar{\bm\mu}^\top \widehat{\bm P}_{j,k}\bm\delta_a=o_p(1).
	\end{equation}
	Furthermore, by Lemma~\ref{lem:hetero_mean_noise_vhat},
	\begin{equation}\label{eq:hetero_e_vhat_small}
		e_k^\top \widehat{\bm v}_{j,k}\xrightarrow{p}0,
		\qquad
		e_k^\top \widehat{\bm P}_{j,k}e_k=o_p(1),
		\qquad
		\bar{\bm\mu}^\top \widehat{\bm P}_{j,k}e_k=o_p(1).
	\end{equation}
	Moreover, for any fixed \(a,k\in\{1,\dots,K\}\) and \(j\in\mathcal I_k\),
	\[
	e_a^\top \widehat{\bm v}_{j,k}\xrightarrow{p}0.
	\]
	If \(a=k\), this is Lemma~\ref{lem:hetero_mean_noise_vhat}; if \(a\neq k\), then \(e_a\) is independent of
	\(\widehat{\bm v}_{j,k}\), and
	\[
	\mathbb E\!\left[(e_a^\top \widehat{\bm v}_{j,k})^2\mid \widehat{\bm v}_{j,k}\right]
	=
	\left(\frac{1}{n_{a,+}}+\frac{1}{n_{a,-}}\right)\widehat{\bm v}_{j,k}^\top \bm\Sigma_a \widehat{\bm v}_{j,k}
	=
	O(n_{a,+}^{-1}+n_{a,-}^{-1}),
	\]
	which implies the claim.
	If \(k=K\), Lemma~\ref{lem:hetero_mean_noise_vhat} also gives
	\[
	s_K^\top \widehat{\bm v}_{j,K}\xrightarrow{p}0.
	\]
	If \(k\neq K\), then \(s_K\) is independent of \(\widehat{\bm v}_{j,k}\), and
	\[
	\mathbb E\bigl((s_K^\top \widehat{\bm v}_{j,k})^2\mid \widehat{\bm v}_{j,k}\bigr)
	=
	\left(\frac{1}{n_{K,+}}+\frac{1}{n_{K,-}}\right)\widehat{\bm v}_{j,k}^\top \bm\Sigma_K\widehat{\bm v}_{j,k}
	=
	O(n_{K,+}^{-1}+n_{K,-}^{-1}),
	\]
	so again
	\begin{equation}\label{eq:hetero_s_vhat_small}
		s_K^\top \widehat{\bm v}_{j,k}\xrightarrow{p}0.
	\end{equation}
	Combining \eqref{eq:hetero_mubar_P_mubar},
	\eqref{eq:hetero_delta_vhat_small}, \eqref{eq:hetero_e_vhat_small},
	and \eqref{eq:hetero_s_vhat_small}, we obtain, for each fixed \(k\) and
	\(j\in\mathcal I_k\),
	\begin{equation}\label{eq:hetero_muKminusS_P_mukhat}
		(\bm\mu_K-s_K)^\top \widehat{\bm P}_{j,k}\widehat{\bm\mu}_k
		=
		a_{j,k}m_{j,k}^2+o_p(1),
	\end{equation}
	and likewise
	\begin{equation}\label{eq:hetero_muKplusS_P_mukhat}
		(\bm\mu_K+s_K)^\top \widehat{\bm P}_{j,k}\widehat{\bm\mu}_k
		=
		a_{j,k}m_{j,k}^2+o_p(1).
	\end{equation}
	
	We now prove \eqref{eq:Dplus_goal_hetero}. By \eqref{eq:dhat_expand_hetero},
	\begin{align*}
		D_{+,n}^{E}(\bm w)
		&=
		(\bm\mu_K-s_K)^\top \widehat{\bm d}^{E}(\bm w)\\
		&=
		\sum_{k=1}^K \frac{w_k}{\sigma_k^2}
		\left[
		(\bm\mu_K-s_K)^\top \widehat{\bm\mu}_k
		-
		\sum_{j\in\mathcal I_k}
		b_{j,k}(\bm\mu_K-s_K)^\top \widehat{\bm P}_{j,k}\widehat{\bm\mu}_k
		\right].
	\end{align*}
	By \eqref{eq:hetero_mu_mu_limit}, \eqref{eq:hetero_mu_e_small}, and \eqref{eq:hetero_s_e_limit},
	\[
	\frac{1}{\sigma_k^2}(\bm\mu_K-s_K)^\top \widehat{\bm\mu}_k
	=
	\frac{\|\bar{\bm\mu}\|^2}{\sigma_k^2}
	+
	\frac{\alpha_K^2}{\sigma_K^2}\ind_{\{k=K\}}
	-
	\Delta_K\ind_{\{k=K\}}
	+
	o_p(1).
	\]
	Together with \eqref{eq:hetero_muKminusS_P_mukhat}, this yields
	\begin{align*}
		D_{+,n}^{E}(\bm w)
		&=
		\sum_{k=1}^K w_k
		\left[
		\frac{\|\bar{\bm\mu}\|^2-\sum_{j\in\mathcal I_k}b_{j,k}a_{j,k}m_{j,k}^2}{\sigma_k^2}
		+
		\frac{\alpha_K^2}{\sigma_K^2}\ind_{\{k=K\}}
		-
		\Delta_K\ind_{\{k=K\}}
		\right]
		+o_p(1).
	\end{align*}
	Hence
	\[
	D_{+,n}^{E}(\bm w)=(\bm u_p^{E})^\top \bm w-w_K\Delta_K+o_p(1).
	\]
	Similarly,
	\[
	D_{-,n}^{E}(\bm w)
	=
	(\bm u_p^{E})^\top\bm w+w_K\Delta_K+o_p(1),
	\]
	which proves \eqref{eq:Dminus_goal_hetero}.
	
	It remains to prove \eqref{eq:V_goal_hetero}. Write
	\[
	V_n^{E}(\bm w)
	=
	\sum_{k,k'=1}^K w_k w_{k'}
	T_{kk',n}^{E},
	\]
	where
	\[
	T_{kk',n}^{E}
	:=
	\widehat{\bm\mu}_k^\top
	\widehat{\bm\Sigma}_k^{-1}\bm\Sigma_K\widehat{\bm\Sigma}_{k'}^{-1}
	\widehat{\bm\mu}_{k'}.
	\]
	Thus it suffices to show that, for each fixed \(k,k'\),
	\begin{equation}\label{eq:Tkk_goal_hetero}
		T_{kk',n}^{E}
		=
		A_{kk',p}^{E}+o_p(1).
	\end{equation}
	
	Define
	\[
	\bm M_k
	:=
	\bm I_p-\sum_{j\in\mathcal I_k} b_{j,k}\widehat{\bm P}_{j,k},
	\qquad
	\bm N_K
	:=
	\bm I_p+\sum_{\ell\in\mathcal I_K}\lambda_{\ell,K}\bm P_{\ell,K}.
	\]
	Then
	\[
	\widehat{\bm\Sigma}_k^{-1}=\frac{1}{\sigma_k^2}\bm M_k,
	\qquad
	\bm\Sigma_K=\sigma_K^2\bm N_K,
	\]
	and therefore
	\begin{equation}\label{eq:hetero_sigma_expand}
		T_{kk',n}^{E}
		=
		\frac{\sigma_K^2}{\sigma_k^2\sigma_{k'}^2}
		\widehat{\bm\mu}_k^\top \bm M_k \bm N_K \bm M_{k'} \widehat{\bm\mu}_{k'}.
	\end{equation}
	
	Now decompose
	\[
	\widehat{\bm\mu}_k=\bar{\bm\mu}+\bm\delta_k+e_k,
	\qquad
	\widehat{\bm\mu}_{k'}=\bar{\bm\mu}+\bm\delta_{k'}+e_{k'}.
	\]
	Since each \(\widehat{\bm P}_{j,k}\) and \(\bm P_{\ell,K}\) is rank one, every mixed term
	involving at least one factor among \(\bm\delta_k,e_k,\bm\delta_{k'},e_{k'}\) factors through an
	inner product of one of these random vectors with either a sample eigenvector
	\(\widehat{\bm v}_{j,k}\), \(\widehat{\bm v}_{m,k'}\) or a deterministic spike direction
	\(\bm v_{\ell,K}\). By \eqref{eq:hetero_delta_vhat_small},
	\eqref{eq:hetero_e_vhat_small}, and the deterministic bounds
	\[
	\bm\delta_a^\top \bm v_{\ell,K}\xrightarrow{p}0,
	\qquad
	e_a^\top \bm v_{\ell,K}\xrightarrow{p}0
	\qquad (a\in\{k,k'\}),
	\]
	all such mixed terms are \(o_p(1)\), except for the pure identity contributions
	\(\bm\delta_k^\top\bm\delta_{k'}\) and \(e_k^\top e_{k'}\). Consequently,
	\begin{align}
		T_{kk',n}^{E}
		&=
		\frac{\sigma_K^2}{\sigma_k^2\sigma_{k'}^2}
		\bar{\bm\mu}^\top \bm M_k \bm N_K \bm M_{k'}\bar{\bm\mu}
		+
		\frac{\sigma_K^2}{\sigma_k^2\sigma_{k'}^2}\bm\delta_k^\top\bm\delta_{k'}
		+
		\frac{\sigma_K^2}{\sigma_k^2\sigma_{k'}^2}e_k^\top e_{k'}
		+
		o_p(1).
		\label{eq:hetero_reduce_main}
	\end{align}
	
	We next evaluate the deterministic principal term.
	First,
	\begin{align*}
		&\bar{\bm\mu}^\top \bm M_k \bm M_{k'}\bar{\bm\mu}
		\\
		=&\|\bar{\bm\mu}\|^2
		-
		\sum_{j\in\mathcal I_k} b_{j,k}\,
		\bar{\bm\mu}^\top \widehat{\bm P}_{j,k}\bar{\bm\mu}
		-
		\sum_{m\in\mathcal I_{k'}} b_{m,k'}\,
		\bar{\bm\mu}^\top \widehat{\bm P}_{m,k'}\bar{\bm\mu}
		+
		\sum_{j\in\mathcal I_k}\sum_{m\in\mathcal I_{k'}}
		b_{j,k}b_{m,k'}\,
		\bar{\bm\mu}^\top \widehat{\bm P}_{j,k}\widehat{\bm P}_{m,k'}\bar{\bm\mu}.
	\end{align*}
	By the first and the last displays of Lemma~\ref{lem:hetero_mixed_projectors},
	\begin{equation}\label{eq:hetero_MM_term}
		\bar{\bm\mu}^\top \bm M_k \bm M_{k'}\bar{\bm\mu}
		=
		\|\bar{\bm\mu}\|^2
		-
		\bm m_k^\top \bm D_k\bm m_k
		-
		\bm m_{k'}^\top \bm D_{k'}\bm m_{k'}
		+
		\bm m_k^\top \bm B_k\bm C_{kk'}\bm B_{k'}\bm m_{k'}
		+
		o_p(1).
	\end{equation}
	
	Next, expand the target-spike part:
	\begin{align*}
		&\bar{\bm\mu}^\top \bm M_k
		\left(\sum_{\ell\in\mathcal I_K}\lambda_{\ell,K}\bm P_{\ell,K}\right)
		\bm M_{k'}\bar{\bm\mu}\\
		&=
		\sum_{\ell\in\mathcal I_K}\lambda_{\ell,K}\,
		\bar{\bm\mu}^\top \bm P_{\ell,K}\bar{\bm\mu}
		-
		\sum_{j\in\mathcal I_k}\sum_{\ell\in\mathcal I_K}
		\lambda_{\ell,K}b_{j,k}\,
		\bar{\bm\mu}^\top \widehat{\bm P}_{j,k}\bm P_{\ell,K}\bar{\bm\mu}\\
		&\quad
		-
		\sum_{\ell\in\mathcal I_K}\sum_{m\in\mathcal I_{k'}}
		\lambda_{\ell,K}b_{m,k'}\,
		\bar{\bm\mu}^\top \bm P_{\ell,K}\widehat{\bm P}_{m,k'}\bar{\bm\mu}\\
		&\quad
		+
		\sum_{j\in\mathcal I_k}\sum_{\ell\in\mathcal I_K}\sum_{m\in\mathcal I_{k'}}
		\lambda_{\ell,K}b_{j,k}b_{m,k'}\,
		\bar{\bm\mu}^\top
		\widehat{\bm P}_{j,k}\bm P_{\ell,K}\widehat{\bm P}_{m,k'}\bar{\bm\mu}.
	\end{align*}
	By the second, third, and fourth displays of
	Lemma~\ref{lem:hetero_mixed_projectors}, this becomes
	\begin{equation}\label{eq:hetero_targetspike_term}
		\bar{\bm\mu}^\top \bm M_k
		\left(\sum_{\ell\in\mathcal I_K}\lambda_{\ell,K}\bm P_{\ell,K}\right)
		\bm M_{k'}\bar{\bm\mu}
		=
		\bm q_k^\top \bm\Lambda_K \bm q_{k'}+o_p(1).
	\end{equation}
	Combining \eqref{eq:hetero_MM_term} and \eqref{eq:hetero_targetspike_term},
	we obtain
	\begin{equation}\label{eq:hetero_principal_term}
		\bar{\bm\mu}^\top \bm M_k \bm N_K \bm M_{k'}\bar{\bm\mu}
		=
		\Omega_{kk',p}^{E}+o_p(1).
	\end{equation}
	
	Finally, by Assumption~\ref{ass2},
	\[
	\bm\delta_k^\top\bm\delta_{k'}
	=
	\alpha_k^2\ind_{\{k=k'\}}+o_p(1),
	\]
	and by \eqref{eq:hetero_e_e_limit},
	\[
	e_k^\top e_{k'}
	=
	\tau_k\sigma_k^2\,\ind_{\{k=k'\}}+o_p(1).
	\]
	Substituting these together with \eqref{eq:hetero_principal_term} into
	\eqref{eq:hetero_reduce_main}, we conclude that
	\[
	T_{kk',n}^{E}
	=
	\frac{\sigma_K^2}{\sigma_k^2\sigma_{k'}^2}\Omega_{kk',p}^{E}
	+
	\left(
	\frac{\sigma_K^2\alpha_k^2}{\sigma_k^4}
	+
	\tau_k\frac{\sigma_K^2}{\sigma_k^2}
	\right)\ind_{\{k=k'\}}
	+
	o_p(1).
	\]
	By the definition of \(A_{kk',p}^{E}\), this proves
	\eqref{eq:Tkk_goal_hetero}, and hence \eqref{eq:V_goal_hetero}.
	
	Therefore,
	\[
	\frac{D_{+,n}^{E}(\bm w)}{2\sqrt{V_n^{E}(\bm w)}}
	=
	\frac{(\bm u_p^{E})^\top \bm w-w_K\Delta_K}{2\sqrt{\bm w^\top \bm A_p^{E}\bm w}}
	+o_p(1),
	\qquad
	\frac{D_{-,n}^{E}(\bm w)}{2\sqrt{V_n^{E}(\bm w)}}
	=
	\frac{(\bm u_p^{E})^\top \bm w+w_K\Delta_K}{2\sqrt{\bm w^\top \bm A_p^{E}\bm w}}
	+o_p(1),
	\]
	because \(\bm w\neq\bm0\) and \(\bm A_p^{E}\) is positive definite by
	Lemma~\ref{lem:ApH_positive}. Substituting the above into
	\eqref{eq:err_margin_rep_hetero} and using the continuity of \(\Phi\), we obtain
	\eqref{eq:err_limit_uncorrected_hetero}.
	This completes the proof.
	\subsection{Proof of Proposition \ref{prop:optimal_intercept_hetero}}
	For a fixed scalar adjustment \(t\), the proof of Theorem~\ref{thm:err_hetero} gives the modified margins
	\[
	D_{+,n}^{E,(t)}(\bm w)=D_{+,n}^{E}(\bm w)+2t,
	\qquad
	D_{-,n}^{E,(t)}(\bm w)=D_{-,n}^{E}(\bm w)-2t.
	\]
	Therefore
	\[
	\Err_t^{E}(\bm w)
	=
	\frac12\Phi\left(
	-\frac{D_{-,n}^{E}(\bm w)-2t}{2\sqrt{V_n^{E}(\bm w)}}
	\right)
	+
	\frac12\Phi\left(
	-\frac{D_{+,n}^{E}(\bm w)+2t}{2\sqrt{V_n^{E}(\bm w)}}
	\right).
	\]
	On the event \(D_{+,n}^{E}(\bm w)+D_{-,n}^{E}(\bm w)>0\), differentiating the last display with respect to \(t\) shows that its unique minimizer is
	\[
	\widehat t_{p,E}^\ast(\bm w)=\frac14\{D_{-,n}^{E}(\bm w)-D_{+,n}^{E}(\bm w)\}.
	\]
	Since \((\bm u_p^{E})^\top\bm w>0\), \eqref{eq:Dplus_goal_hetero} and \eqref{eq:Dminus_goal_hetero} imply
	\[
	D_{+,n}^{E}(\bm w)+D_{-,n}^{E}(\bm w)
	=2(\bm u_p^{E})^\top\bm w+o_p(1)>0
	\]
	with probability tending to one. Moreover,
	\[
	\widehat t_{p,E}^\ast(\bm w)
	=\frac12w_K\Delta_K+o_p(1)
	=\frac12w_K\widehat\Delta_K+o_p(1),
	\]
	because \(\widehat\Delta_K\to\Delta_K\). Substituting this minimizer into the margin representation and using \eqref{eq:Dplus_goal_hetero}, \eqref{eq:Dminus_goal_hetero}, and \eqref{eq:V_goal_hetero} yields
	\[
	\Err_{\widehat t_{p,E}^\ast(\bm w)}^{E}(\bm w)
	-
	\Phi\left(
	-\frac{(\bm u_p^{E})^\top\bm w}{2\sqrt{\bm w^\top\bm A_p^{E}\bm w}}
	\right)
	\to0
	\]
	in probability. This completes the proof.
	
	\subsection{Proof of Corollary \ref{cor:oracle_weight_hetero}}
	By Proposition~\ref{prop:optimal_intercept_hetero}, the deterministic equivalent of the target-domain
	Gaussian-calibrated error under the optimal intercept is
	\[
	\Phi\left(
	-\frac{(\bm u_p^{E})^\top \bm w}{2\sqrt{\bm w^\top \bm A_p^{E}\bm w}}
	\right).
	\]
	Since \(\Phi\) is strictly increasing, minimizing this quantity is equivalent to maximizing
	\[
	\frac{(\bm u_p^{E})^\top \bm w}{\sqrt{\bm w^\top \bm A_p^{E}\bm w}}
	\]
	over all \(\bm w\neq \bm 0\). This criterion is invariant under positive rescaling of
	\(\bm w\). Therefore the oracle weight is identifiable only up to a positive scalar multiple.
	
	By Lemma~\ref{lem:ApH_positive}, \(\bm A_p^{E}\) is positive definite. Therefore,
	the Cauchy--Schwarz inequality under the \(\bm A_p^{E}\)-inner product gives
	\[
	(\bm u_p^{E})^\top \bm w
	=
	\bigl((\bm A_p^{E})^{-1/2}\bm u_p^{E}\bigr)^\top
	\bigl((\bm A_p^{E})^{1/2}\bm w\bigr)
	\le
	\sqrt{(\bm u_p^{E})^\top(\bm A_p^{E})^{-1}\bm u_p^{E}}\,
	\sqrt{\bm w^\top \bm A_p^{E}\bm w}.
	\]
	Hence
	\[
	\frac{(\bm u_p^{E})^\top \bm w}{\sqrt{\bm w^\top \bm A_p^{E}\bm w}}
	\le
	\sqrt{(\bm u_p^{E})^\top(\bm A_p^{E})^{-1}\bm u_p^{E}}.
	\]
	Equality holds if and only if
	\[
	(\bm A_p^{E})^{1/2}\bm w
	\propto
	(\bm A_p^{E})^{-1/2}\bm u_p^{E},
	\]
	or equivalently,
	\[
	\bm w\propto (\bm A_p^{E})^{-1}\bm u_p^{E}.
	\]
	Therefore the deterministic equivalent is minimized by
	\[
	\bm w_{p,E}^\ast \propto (\bm A_p^{E})^{-1}\bm u_p^{E}.
	\]
	
	Under the normalization \(\bm w^\top \bm A_p^{E}\bm w=1\), the unique maximizer is
	\[
	\bm w_{p,E}^\ast
	=
	\frac{(\bm A_p^{E})^{-1}\bm u_p^{E}}
	{\sqrt{(\bm u_p^{E})^\top(\bm A_p^{E})^{-1}\bm u_p^{E}}}.
	\]
	This completes the proof.
	\subsection{Proof of Theorem \ref{thm:emp_weight_hetero}}
	For convenience, for each \(k=1,\dots,K\) and \(\ell\in\mathcal I_K\), write
	\[
	q_{k,\ell}:=[\bm q_k]_\ell
	=
	m_{\ell,K}
	-
	\sum_{j\in\mathcal I_k}
	b_{j,k}a_{j,k}\rho_{\ell j}^{(K,k)}m_{j,k}.
	\]
	Then
	\[
	\bm q_k^\top \bm\Lambda_K \bm q_{k'}
	=
	\sum_{\ell\in\mathcal I_K}\lambda_{\ell,K}q_{k,\ell}q_{k',\ell},
	\]
	and in particular
	\[
	q_{K,\ell}=(1-b_{\ell,K}a_{\ell,K})m_{\ell,K}.
	\]
	
	We first prove the consistency of the basic scalar quantities.
	From the calculations in the proof of Theorem~\ref{thm:err_hetero}, for any fixed
	\(a\neq b\),
	\[
	\widehat{\bm\mu}_a^\top\widehat{\bm\mu}_b
	=
	\|\bar{\bm\mu}\|^2+o_p(1),
	\]
	while for each fixed \(k\),
	\[
	\|\widehat{\bm\mu}_k\|^2
	=
	\|\bar{\bm\mu}\|^2+\alpha_k^2+
	\sigma_k^2\left(\frac{p}{n_{k,+}}+\frac{p}{n_{k,-}}\right)+o_p(1)
	=
	\|\bar{\bm\mu}\|^2+\alpha_k^2+\sigma_k^2\tau_k+o_p(1).
	\]
	Hence, since \(K\) is fixed,
	\[
	\widehat{\mathcal A}_p(\bm I_p)
	=
	\|\bar{\bm\mu}\|^2+o_p(1),
	\]
	and
	\[
	\widehat{\beta}_k^{E}
	=
	\alpha_k^2+o_p(1),
	\qquad k=1,\dots,K.
	\]
	
	Next fix \(k\in\{1,\dots,K\}\). Since
	\[
	\widehat{\bm M}_k
	=
	\bm I_p-\sum_{j\in\mathcal I_k}b_{j,k}\widehat{\bm P}_{j,k},
	\]
	for any fixed \(a\neq b\),
	\[
	\widehat{\bm\mu}_a^\top \widehat{\bm M}_k \widehat{\bm\mu}_b
	=
	\widehat{\bm\mu}_a^\top \widehat{\bm\mu}_b
	-
	\sum_{j\in\mathcal I_k}
	b_{j,k}\widehat{\bm\mu}_a^\top \widehat{\bm P}_{j,k}\widehat{\bm\mu}_b.
	\]
	As in the proof of Theorem~\ref{thm:err_hetero}, write
	\[
	\widehat{\bm\mu}_a=\bar{\bm\mu}+\bm\delta_a+e_a,
	\qquad
	\widehat{\bm\mu}_b=\bar{\bm\mu}+\bm\delta_b+e_b.
	\]
	By Lemmas~\ref{lem:hetero_polarization_overlap}
	and~\ref{lem:hetero_mean_noise_vhat}, together with the independence of
	\(\bm\delta_a,\bm\delta_b\) from the training data, all terms involving at least one factor
	among \(\bm\delta_a,\bm\delta_b\) are \(o_p(1)\). For the sample-mean noise terms, if
	\(a=k\) then Lemma~\ref{lem:hetero_mean_noise_vhat} gives
	\[
	e_a^\top \widehat{\bm v}_{j,k}\xrightarrow{p}0.
	\]
	If \(a\neq k\), then \(e_a\) is independent of \(\widehat{\bm v}_{j,k}\), and
	\[
	\mathbb E\!\left[(e_a^\top \widehat{\bm v}_{j,k})^2\mid \widehat{\bm v}_{j,k}\right]
	=
	\left(\frac{1}{n_{a,+}}+\frac{1}{n_{a,-}}\right)\widehat{\bm v}_{j,k}^\top \bm\Sigma_a \widehat{\bm v}_{j,k}
	=
	O(n_{a,+}^{-1}+n_{a,-}^{-1}),
	\]
	so \(e_a^\top \widehat{\bm v}_{j,k}\xrightarrow{p}0\). The same argument applies to \(e_b\).
	Hence every term involving at least one factor among \(e_a,e_b\) is also \(o_p(1)\).
	Therefore, for each fixed
	\(j\in\mathcal I_k\),
	\[
	\widehat{\bm\mu}_a^\top \widehat{\bm P}_{j,k}\widehat{\bm\mu}_b
	=
	\bar{\bm\mu}^\top \widehat{\bm P}_{j,k}\bar{\bm\mu}
	+o_p(1)
	=
	a_{j,k}m_{j,k}^2+o_p(1).
	\]
	Hence
	\[
	\widehat{\bm\mu}_a^\top \widehat{\bm M}_k \widehat{\bm\mu}_b
	=
	\|\bar{\bm\mu}\|^2
	-
	\sum_{j\in\mathcal I_k}b_{j,k}a_{j,k}m_{j,k}^2
	+o_p(1)
	=
	\|\bar{\bm\mu}\|^2-\bm m_k^\top\bm D_k\bm m_k+o_p(1).
	\]
	Averaging over \(a\neq b\) gives
	\[
	\widehat{\mathcal A}_p(\widehat{\bm M}_k)
	=
	\|\bar{\bm\mu}\|^2-\bm m_k^\top\bm D_k\bm m_k+o_p(1).
	\]
	Consequently,
	\[
	\widehat u_{k,p}^{E}
	=
	\frac{\|\bar{\bm\mu}\|^2-\bm m_k^\top\bm D_k\bm m_k}{\sigma_k^2}
	+
	\frac{\alpha_K^2}{\sigma_K^2}\ind_{\{k=K\}}
	+o_p(1)
	=
	u_{k,p}^{E}+o_p(1).
	\]
	Thus
	\[
	\widehat{\bm u}_p^{E}-\bm u_p^{E}\to \bm 0
	\qquad\text{in probability}.
	\]
	
	We now turn to \(\widehat{\Omega}_{kk',p}^{E}\).
	Fix \(k,k'\in\{1,\dots,K\}\). Expanding
	\[
	\widehat{\bm M}_k\widehat{\bm M}_{k'}
	=
	\left(
	\bm I_p-\sum_{j\in\mathcal I_k}b_{j,k}\widehat{\bm P}_{j,k}
	\right)
	\left(
	\bm I_p-\sum_{m\in\mathcal I_{k'}}b_{m,k'}\widehat{\bm P}_{m,k'}
	\right),
	\]
	we obtain, for any fixed \(a\neq b\),
	\begin{align*}
		\widehat{\bm\mu}_a^\top \widehat{\bm M}_k\widehat{\bm M}_{k'} \widehat{\bm\mu}_b
		&=
		\widehat{\bm\mu}_a^\top\widehat{\bm\mu}_b
		-
		\sum_{j\in\mathcal I_k}
		b_{j,k}\widehat{\bm\mu}_a^\top\widehat{\bm P}_{j,k}\widehat{\bm\mu}_b
		-
		\sum_{m\in\mathcal I_{k'}}
		b_{m,k'}\widehat{\bm\mu}_a^\top\widehat{\bm P}_{m,k'}\widehat{\bm\mu}_b\\
		&\quad
		+
		\sum_{j\in\mathcal I_k}\sum_{m\in\mathcal I_{k'}}
		b_{j,k}b_{m,k'}
		\widehat{\bm\mu}_a^\top\widehat{\bm P}_{j,k}\widehat{\bm P}_{m,k'}\widehat{\bm\mu}_b.
	\end{align*}
	By the same reduction as above and the last display of
	Lemma~\ref{lem:hetero_mixed_projectors},
	\[
	\widehat{\bm\mu}_a^\top\widehat{\bm P}_{j,k}\widehat{\bm P}_{m,k'}\widehat{\bm\mu}_b
	=
	\bar{\bm\mu}^\top\widehat{\bm P}_{j,k}\widehat{\bm P}_{m,k'}\bar{\bm\mu}
	+o_p(1).
	\]
	Hence
	\begin{align*}
		\widehat{\bm\mu}_a^\top \widehat{\bm M}_k\widehat{\bm M}_{k'} \widehat{\bm\mu}_b
		&=
		\|\bar{\bm\mu}\|^2
		-
		\bm m_k^\top\bm D_k\bm m_k
		-
		\bm m_{k'}^\top\bm D_{k'}\bm m_{k'}\\
		&\quad
		+
		\bm m_k^\top\bm B_k\bm C_{kk'}\bm B_{k'}\bm m_{k'}
		+o_p(1).
	\end{align*}
	Averaging over \(a\neq b\) yields
	\begin{equation}\label{eq:AkMkMkprime}
		\widehat{\mathcal A}_p(\widehat{\bm M}_k\widehat{\bm M}_{k'})
		=
		\|\bar{\bm\mu}\|^2
		-
		\bm m_k^\top\bm D_k\bm m_k
		-
		\bm m_{k'}^\top\bm D_{k'}\bm m_{k'}
		+
		\bm m_k^\top\bm B_k\bm C_{kk'}\bm B_{k'}\bm m_{k'}
		+o_p(1).
	\end{equation}
	
	Next we show that, for each fixed \(\ell\in\mathcal I_K\),
	\begin{equation}\label{eq:Theta_consistency_goal}
		\widehat{\Theta}_{kk',\ell}
		=
		q_{k,\ell}q_{k',\ell}+o_p(1).
	\end{equation}
	Since \(\widehat a_{\ell,K}\to a_{\ell,K}\) and \(a_{\ell,K}\neq0\), replacing \(a_{\ell,K}\) by \(\widehat a_{\ell,K}\) in the normalizing factors does not change the limit.
	First consider the case \(k\neq K\) and \(k'\neq K\). By definition,
	\[
	\widehat{\Theta}_{kk',\ell}
	=
	\frac{1}{\widehat a_{\ell,K}}\,
	\widehat{\mathcal A}_p(\widehat{\bm M}_k\widehat{\bm P}_{\ell,K}\widehat{\bm M}_{k'}).
	\]
	Expanding the product and arguing exactly as above, using the second, third, and fourth
	displays of Lemma~\ref{lem:hetero_mixed_projectors}, we obtain
	\begin{align*}
		\frac{1}{\widehat a_{\ell,K}}
		\widehat{\bm\mu}_a^\top
		\widehat{\bm M}_k\widehat{\bm P}_{\ell,K}\widehat{\bm M}_{k'}
		\widehat{\bm\mu}_b
		&=
		m_{\ell,K}^2
		-
		\sum_{j\in\mathcal I_k}
		b_{j,k}a_{j,k}m_{j,k}\rho_{j\ell}^{(k,K)}m_{\ell,K}\\
		&\quad
		-
		\sum_{m\in\mathcal I_{k'}}
		b_{m,k'}a_{m,k'}m_{\ell,K}\rho_{\ell m}^{(K,k')}m_{m,k'}\\
		&\quad
		+
		\sum_{j\in\mathcal I_k}\sum_{m\in\mathcal I_{k'}}
		b_{j,k}b_{m,k'}a_{j,k}a_{m,k'}
		m_{j,k}\rho_{j\ell}^{(k,K)}\rho_{\ell m}^{(K,k')}m_{m,k'}\\
		&\quad +o_p(1).
	\end{align*}
	Since \(\widehat a_{\ell,K}\to a_{\ell,K}\) and
	\(\rho_{j\ell}^{(k,K)}=\rho_{\ell j}^{(K,k)}\), the right-hand side equals
	\(q_{k,\ell}q_{k',\ell}+o_p(1)\). Averaging over \(a\neq b\) gives
	\[
	\widehat{\Theta}_{kk',\ell}
	=
	q_{k,\ell}q_{k',\ell}+o_p(1).
	\]
	
	Next consider the case \(k=K\), \(k'\neq K\). By definition,
	\[
	\widehat{\Theta}_{Kk',\ell}
	=
	\frac{1-b_{\ell,K}\widehat a_{\ell,K}}{\widehat a_{\ell,K}}
	\widehat{\mathcal A}_p(\widehat{\bm P}_{\ell,K}\widehat{\bm M}_{k'}).
	\]
	Again by Lemma~\ref{lem:hetero_mixed_projectors},
	\[
	\frac{1}{\widehat a_{\ell,K}}
	\widehat{\bm\mu}_a^\top
	\widehat{\bm P}_{\ell,K}\widehat{\bm M}_{k'}
	\widehat{\bm\mu}_b
	=
	m_{\ell,K}^2
	-
	\sum_{m\in\mathcal I_{k'}}
	b_{m,k'}a_{m,k'}m_{\ell,K}\rho_{\ell m}^{(K,k')}m_{m,k'}
	+o_p(1)
	=
	m_{\ell,K}q_{k',\ell}+o_p(1).
	\]
	Hence
	\[
	\widehat{\Theta}_{Kk',\ell}
	=
	(1-b_{\ell,K}\widehat a_{\ell,K})m_{\ell,K}q_{k',\ell}+o_p(1)
	=
	(1-b_{\ell,K}a_{\ell,K})m_{\ell,K}q_{k',\ell}+o_p(1)
	=
	q_{K,\ell}q_{k',\ell}+o_p(1).
	\]
	The case \(k\neq K\), \(k'=K\) is identical and gives
	\[
	\widehat{\Theta}_{kK,\ell}
	=
	q_{k,\ell}q_{K,\ell}+o_p(1).
	\]
	
	Finally, when \(k=k'=K\),
	\[
	\widehat{\Theta}_{KK,\ell}
	=
	\frac{(1-b_{\ell,K}\widehat a_{\ell,K})^2}{\widehat a_{\ell,K}}
	\widehat{\mathcal A}_p(\widehat{\bm P}_{\ell,K}),
	\]
	and by the first display of Lemma~\ref{lem:hetero_mixed_projectors},
	\[
	\frac{1}{\widehat a_{\ell,K}}
	\widehat{\bm\mu}_a^\top\widehat{\bm P}_{\ell,K}\widehat{\bm\mu}_b
	=
	m_{\ell,K}^2+o_p(1).
	\]
	Thus
	\[
	\widehat{\Theta}_{KK,\ell}
	=
	(1-b_{\ell,K}\widehat a_{\ell,K})^2m_{\ell,K}^2+o_p(1)
	=
	(1-b_{\ell,K}a_{\ell,K})^2m_{\ell,K}^2+o_p(1)
	=
	q_{K,\ell}^2+o_p(1).
	\]
	This proves \eqref{eq:Theta_consistency_goal} in all cases.
	
	Combining \eqref{eq:AkMkMkprime} and \eqref{eq:Theta_consistency_goal}, we conclude that
	\begin{align*}
		\widehat{\Omega}_{kk',p}^{E}
		&=
		\widehat{\mathcal A}_p(\widehat{\bm M}_k\widehat{\bm M}_{k'})
		+
		\sum_{\ell\in\mathcal I_K}\lambda_{\ell,K}\widehat{\Theta}_{kk',\ell}\\
		&=
		\|\bar{\bm\mu}\|^2
		-
		\bm m_k^\top\bm D_k\bm m_k
		-
		\bm m_{k'}^\top\bm D_{k'}\bm m_{k'}
		+
		\bm m_k^\top\bm B_k\bm C_{kk'}\bm B_{k'}\bm m_{k'}\\
		&\quad
		+
		\sum_{\ell\in\mathcal I_K}\lambda_{\ell,K}q_{k,\ell}q_{k',\ell}
		+o_p(1)\\
		&=
		\Omega_{kk',p}^{E}+o_p(1).
	\end{align*}
	Since \(\widehat\tau_k\to\tau_k\), together with \(\widehat{\beta}_k^{E}=\alpha_k^2+o_p(1)\), this yields
	\[
	\widehat A_{kk',p}^{E}
	=
	A_{kk',p}^{E}+o_p(1).
	\]
	Since \(K\) is fixed, entrywise convergence implies
	\[
	\widehat{\bm A}_p^{E}-\bm A_p^{E}\to \bm 0
	\qquad\text{in probability}.
	\]
	
	It remains to prove the consistency of the empirical optimal weight.
	By Lemma~\ref{lem:ApH_positive}, there exists a constant \(c_E>0\) such that
	\[
	\lambda_{\min}(\bm A_p^{E})\ge c_E
	\]
	for all sufficiently large \(p\). Since
	\(\|\widehat{\bm A}_p^{E}-\bm A_p^{E}\|\to0\) in probability, Weyl's inequality gives
	\[
	\mathbb P\left(\lambda_{\min}(\widehat{\bm A}_p^{E})\ge c_E/2\right)\to1.
	\]
	Thus \(\widehat{\bm A}_p^{E}\) is positive definite with probability tending to one.
	Moreover, on this event,
	\[
	(\widehat{\bm A}_p^{E})^{-1}-(\bm A_p^{E})^{-1}
	=
	(\widehat{\bm A}_p^{E})^{-1}
	(\bm A_p^{E}-\widehat{\bm A}_p^{E})
	(\bm A_p^{E})^{-1},
	\]
	and therefore
	\[
	(\widehat{\bm A}_p^{E})^{-1}-(\bm A_p^{E})^{-1}\to \bm 0
	\qquad\text{in probability}.
	\]
	Hence
	\[
	(\widehat{\bm A}_p^{E})^{-1}\widehat{\bm u}_p^{E}
	-
	(\bm A_p^{E})^{-1}\bm u_p^{E}
	\to \bm 0
	\qquad\text{in probability},
	\]
	and also
	\[
	(\widehat{\bm u}_p^{E})^\top(\widehat{\bm A}_p^{E})^{-1}\widehat{\bm u}_p^{E}
	-
	(\bm u_p^{E})^\top(\bm A_p^{E})^{-1}\bm u_p^{E}
	\to 0
	\qquad\text{in probability}.
	\]
	Since \(\|\bm u_p^{E}\|\ge c_u\) and \(\lambda_{\max}(\bm A_p^{E})=O(1)\), there exists
	a constant \(c'>0\) such that
	\[
	(\bm u_p^{E})^\top(\bm A_p^{E})^{-1}\bm u_p^{E}\ge c'.
	\]
	Thus the denominator in \eqref{eq:emp_weight_hetero} is bounded away from zero with
	probability tending to one. Therefore, by the continuous mapping theorem,
	\[
	\widehat{\bm w}_{p,E}^\ast
	=
	\frac{(\widehat{\bm A}_p^{E})^{-1}\widehat{\bm u}_p^{E}}
	{\sqrt{(\widehat{\bm u}_p^{E})^\top(\widehat{\bm A}_p^{E})^{-1}\widehat{\bm u}_p^{E}}}
	\to
	\frac{(\bm A_p^{E})^{-1}\bm u_p^{E}}
	{\sqrt{(\bm u_p^{E})^\top(\bm A_p^{E})^{-1}\bm u_p^{E}}}
	=
	\bm w_{p,E}^\ast
	\]
	in probability. This completes the proof.
	
\end{document}